\DeclareFontFamily{U}{mathx}{\hyphenchar\font45}
\DeclareFontShape{U}{mathx}{m}{n}{
      <5> <6> <7> <8> <9> <10>
      <10.95> <12> <14.4> <17.28> <20.74> <24.88>
      mathx10
      }{}
\DeclareSymbolFont{mathx}{U}{mathx}{m}{n}
\DeclareMathAccent{\widecheck}{\mathalpha}{mathx}{"71}
\DeclareMathAccent{\widecheck}{\mathalpha}{mathx}{"71}
\startlocaldefs \numberwithin{equation}{section} \theoremstyle{it}
\newtheorem{thm}{Theorem}[section]
\newtheorem{lem}{Lemma}[section]
\newtheorem{ass}{Assumption}[section]
\newtheorem{cor}{Corollary}[section]
\newtheorem{rem}{Remark}
\newcommand{\argmin}{\operatornamewithlimits{argmin}}
\newcommand{\f}{\frac}
\begin{document}

\begin{frontmatter}
\title{Hybrid Quantile estimation for asymmetric power GARCH models}

\thankstext{}{JEL classification: C01, C22, C58}

\begin{aug}
\author{\fnms{Guochang} \snm{Wang}\thanksref{m1}\ead[label=e1]{twanggc@jnu.edu.cn}},
\author{\fnms{Ke} \snm{Zhu}\thanksref{m2}\ead[label=e2]{mazhuke@hku.hk}},
\author{\fnms{Guodong} \snm{Li}\thanksref{m2}\ead[label=e3]{gdli@hku.hk}}
\and
\author{\fnms{Wai Keung} \snm{Li}\thanksref{m2,m3}
\ead[label=e4]{waikeungli@eduhk.hk}}


\affiliation{Jinan University\thanksmark{m1}, The University of Hong Kong\thanksmark{m2}\\
\and The Education University of Hong Kong\thanksmark{m3}}

\address{
College of Economics\\
Jinan University\\
Guangzhou, China\\
\printead{e1}\\
}

\address{
Department of Statistics \& Actuarial Science\\
The University of Hong Kong\\
Hong Kong\\
\printead{e2}\\
\phantom{E-mail:\ }\printead*{e3}\\
}

\address{
 Department of Mathematics and Information Technology\\
The Education University of Hong Kong\\
Hong Kong\\
\printead{e4}\\
}
\end{aug}

\begin{abstract}
Asymmetric power GARCH models have been widely used to study the higher order moments of financial returns, while their quantile estimation has been rarely investigated. This paper introduces a simple monotonic transformation on its conditional quantile function to make the quantile regression tractable. The asymptotic normality of the resulting quantile estimators is established under either stationarity or non-stationarity. Moreover,  based on the estimation procedure, new tests for strict stationarity and asymmetry
are also constructed. This is the first try of the quantile estimation for non-stationary ARCH-type models in the literature. The usefulness of the proposed methodology is illustrated by simulation results and real data analysis.\\
\end{abstract}


\begin{keyword}
\kwd{Asymmetric power GARCH; Asymmetry testing; Non-stationarity; Quantile estimation; Strict stationarity testing} 
\end{keyword}

\end{frontmatter}

\section{Introduction}
Since the seminal work in Engle (1982) and Bollerslev (1986),
the generalized autoregressive conditional heteroskedasticity (GARCH) model has been widely used to capture the volatility clustering of financial data; see, e.g., Francq and Zako\"{i}an (2010) for an overview.  Financial data  are well known to exhibit conditional asymmetric features, in the sense that large negative returns tend to have more impact on future volatilities than large positive returns of the same magnitude. This stylized fact, which is known as the leverage effect, was first documented by Black (1976), and leads to many variants of the classical GARCH model (see, e.g., Higgins and Bera, 1992; Li and Li, 1996; Zhu et al., 2017). Among the existing asymmetric ARCH-type models,
 the first order asymmetric power-transformed GARCH (PGARCH) model proposed by Pan et al. (2008) is often used in applications, and it is defined by
 \begin{flalign}
\epsilon_t=h_t^{1/\delta}\eta_t,\,\,\,h_t=\omega_0+\alpha_{0+}(\epsilon_{t-1}^+)^{\delta}+\alpha_{0-}(-\epsilon_{t-1}^-)^{\delta}+\beta_0 h_{t-1},
\label{1.1}
\end{flalign}
where $\delta$ is a given positive constant exponent, $\omega_0>0$, $\alpha_{0+}\geq 0, \alpha_{0-}\geq 0$, $\beta_0\geq 0$, and $\{\eta_t\}$ is a sequence of independent and identically distributed (i.i.d.) random variables. Here, the notations
$x^+=\max(x,0)$ and $x^-=\min(x,0)$ are used. Model (1.1) is
motivated by the Box-Cox transformation, and it covers the classical GARCH model in Engle (1982) and Bollerslev (1986),
the absolute value GARCH in Taylor (1986),
the GJR model in Glosten et al. (1993), the threshold GARCH model in Rabemananjara and Zako\"{i}an (1993),
the PARCH model in
Hwang and Kim (2004), and many others.

Following H\"{o}rmann (2008), model (\ref{1.1}) is stationary if and only if the top Lyapunov exponent
$\gamma_0<0$, where
\begin{flalign}\label{1.2}
\gamma_0=E\log a_0(\eta_t),\,\,\,a_0(x)=\alpha_{0+}(x^+)^{\delta}+\alpha_{0-}(-x^-)^{\delta}+\beta_0.
\end{flalign}
By assuming $\eta_t$ follows a standard normal distribution, the Gaussian quasi-maximum likelihood estimator (QMLE) of
model (\ref{1.1}) was studied in Pan et al. (2008) and Hamadeh and Zako\"{i}an (2011) for $\gamma_0<0$, and
 Francq and Zako\"{i}an (2013a) for $\gamma_0\geq0$. Although the Gaussian QMLE has some desired asymptotic properties,
 it overlooks a crucial practical feature that the
quantile structure of the financial data actually varies in shape across the quantile levels (Engle and Manganelli, 2004).
 Nowadays, the estimation of the conditional quantile becomes increasingly important for the financial data, since it is related to
 the quantile-based risk measures such as
 Value-at-Risk (VaR) and Expected Shortfall (ES), which are implemented worldwide in financial market
regulation and banking supervision. However, only few attempts have been made to study the quantile estimation for model (\ref{1.1}), especially when $\gamma_0\geq0$.

This paper contributes to the literature in two aspects. First, we extend the idea of Zheng et al. (2018) to construct a hybrid conditional quantile estimator of $\epsilon_t$ in model (1.1). To elaborate this idea, we let $\theta_0=(w_0, \alpha_{0+}, \alpha_{0-}, \beta_0)'$ and
$\theta_{\tau0}=b_{\tau}\theta_0$, where $\tau\in(0, 1)$ is the given quantile level,
 $b_{\tau}=T(Q_{\tau,\eta})$,
$Q_{\tau,\eta}$ is the $\tau$th quantile of $\eta_t$, and $T(x)=|x|^{\delta}\mbox{sgn}(x)$ is a
given monotonic transformation function. Then, the $\tau$th quantile of the transformed data $y_t=T(\epsilon_t)$ conditional on $\mathcal{F}_{t-1}$ is
\begin{flalign}\label{1.3}
Q_{\tau}(y_t|\mathcal{F}_{t-1})=b_{\tau}(\omega_0+\alpha_{0+}(\epsilon_{t-1}^+)^{\delta}+\alpha_{0-}(-\epsilon_{t-1}^-)^{\delta}+\beta_{0}h_{t-1})=\theta_{\tau0}'z_t,
\end{flalign}
and the $\tau$th quantile of the original data $\epsilon_t$ conditional on $\mathcal{F}_{t-1}$ is
\begin{flalign}\label{1.4}
Q_{\tau}(\epsilon_t|\mathcal{F}_{t-1})=T^{-1}\big(Q_{\tau}(y_t|\mathcal{F}_{t-1})\big),
\end{flalign}
where $z_t=(1,(\epsilon_{t-1}^+)^{\delta},(-\epsilon_{t-1}^-)^{\delta},h_{t-1})'$, $\mathcal{F}_{t}$ is
the $\sigma$-field generated by $\{\epsilon_t,\epsilon_{t-1},...\}$, and
$T^{-1}(x)=|x|^{1/\delta}\mbox{sgn}(x)$.
The result (\ref{1.3}) implies that
$Q_{\tau}(y_t|\mathcal{F}_{t-1})$ is linear in terms of $z_t$, and hence if $z_t$ is observable,
$\theta_{\tau0}$ can be easily estimated by the regression quantile estimation.
With this quantile estimator of $\theta_{\tau0}$, then $Q_{\tau}(y_t|\mathcal{F}_{t-1})$ can be estimated
via (\ref{1.3}), leading to an estimator of $Q_{\tau}(\epsilon_t|\mathcal{F}_{t-1})$ according to (\ref{1.4}).
However, $z_t$ contains an unobservable $h_{t-1}$, which
 has a recursive form, adding difficulty to the theoretical derivation and numerical optimization.
 To circumvent this difficulty, we replace $h_{t-1}$ by some initial estimators to calculate the quantile estimator of
 $\theta_{\tau0}$; see also Xiao and Koenker (2009), So and Chung (2015) and  Zheng et al. (2018).
  Indeed, Zheng et al. (2018) estimated $h_{t-1}$ based on the Gaussian QMLE, which needs
 $E\eta_{t}^{4}<\infty$ in theory. To relieve the moment condition of $\eta_t$, we estimate $h_{t-1}$ by using the
 generalized QMLE (GQMLE) in  Francq and Zako\"{i}an (2013b), and our theory only requires $E|\eta_{t}|^{2r}<\infty$,
 where $r$ is a user-chosen positive number, indicating the estimation method used.
 Note that there is a vast literature on the estimation of conditional quantile for financial data, and two leading examples are the filtered historical simulation (FHS) method (Barone-Adesi et al., 1998; Barone-Adesi and Giannopoulos, 2001; Kuester et al.,
2006) and the conditional auto-regressive VaR-method called ``CAViaR'' (Engle and Manganelli,
2004).
As argued in Zheng et al. (2018), the hybrid conditional quantile estimation method
combines the advantages of both FHS and CAViaR approaches, since it can
exploit the ARCH-type structure in both the global
estimation of the volatility and the local estimation of quantiles.

Second, we study the asymptotic properties of the quantile estimator of $\theta_{\tau0}$. Denote
$\theta_{\tau0}=(\omega_{\tau0}, \vartheta_{\tau0}')'$, where  $\omega_{\tau0}=b_{\tau}\omega_0$
and $\vartheta_{\tau0}=b_{\tau}(\alpha_{0+}, \alpha_{0-}, \beta_0)'$. Under some regularity conditions,
the quantile estimator of $\vartheta_{\tau0}$ is shown to be asymptotically normal
for either $\gamma_0<0$ or $\gamma_0\geq 0$, while the quantile estimator of $\omega_{\tau0}$ is asymptotically normal only for
$\gamma_0<0$. Our findings are similar to those in Jensen and Rahbek (2004a, b) and Francq and Zako\"{i}an (2012, 2013a), and
our asymptotic results for $\gamma_0\geq 0$ are the first try of the quantile estimation for non-stationary ARCH-type models in the literature.
Compared to the Gaussian QMLE in Francq and Zako\"{i}an (2013a), our quantile estimator
takes the quantile structure of $\epsilon_t$ into account through the transformation function $T(\cdot)$, and
it could be a more appealing tool to investigate the quantile-based measures such as VaR and ES
(Engle and Manganelli, 2004; Francq and Zako\"{i}an, 2015). Moreover, our quantile estimator only requires
$E|\eta_{t}|^{2r}<\infty$ for its asymptotics, and hence it is more appropriate to study the heavy-tailed financial data than the
Gaussian QMLE, which requires
$E|\eta_{t}|^{4}<\infty$ for its asymptotic normality. As a by-product, new tests for strict stationarity and asymmetry of model (\ref{1.1})
are derived from our estimation procedure.


The remaining of this paper is organized as follows. Section 2 introduces our
hybrid conditional quantile estimation procedure. Section 3
studies the asymptotic properties of our proposed quantile estimator. The strict stationarity tests
and the asymmetry tests are provided in Section 4. Simulation
results are reported in Section 5. Applications are presented in
Section 6. The conclusions are offered in Section 7. The proofs
are given in the Appendix.

Throughout the paper, $|\cdot|$ denotes the absolute value, $\|\cdot\|$ denotes the vector $l_{2}$-norm, $\|\cdot\|_p$ denotes $L^{p}$-norm for a random variable,
$A'$ is the transpose of matrix $A$,
$\to_{p}$ denotes the convergence in probability, $\to_{d}$ denotes the convergence in distribution,
$o_{p}(1)$ (or $O_{p}(1)$) denotes a sequence of random numbers converging to zero ( or bounded) in probability, $C$ is a generic constant, $\mathcal{R}=(-\infty,\infty)$, $\mathcal{R}_{+}=(0,\infty)$,
$\mbox{I}(\cdot)$ is the indicator function,
and $\mbox{sgn}(a)=\mbox{I}(a>0)-\mbox{I}(a<0)$ is the sign of any $a\in\mathcal{R}$.

\section{The hybrid conditional quantile estimation}

Let $\theta=(\omega,\alpha_{+},\alpha_{-},\beta)'\in\Theta$ be the unknown parameter vector of model (1.1), and $\theta_0\in\Theta$
be its
true value, where $\Theta$ is the parameter space, and it is a compact subset of $\mathcal{R}_{+}^{4}$. Moreover, let $\theta_{\tau}=b_{\tau}\theta\in\Theta_{\tau}$, and $\theta_{\tau0}$ be its true value, where
$\Theta_{\tau}=\{\theta_\tau: \theta\in\Theta\}$.
Assume that $\{\epsilon_1,\epsilon_2,...,\epsilon_n\}$ are observations generated from model (\ref{1.1}). By (\ref{1.3}), the parametric $\tau$th quantile of the transformed data $y_t$ is
\begin{flalign}\label{2.1}
Q_{\tau}(y_t|\mathcal{F}_{t-1})=b_{\tau}(\omega+\alpha_{+}(\epsilon_{t-1}^+)^{\delta}+\alpha_{-}(-\epsilon_{t-1}^-)^{\delta}+\beta h_{t-1})=\theta_{\tau}'z_t.
\end{flalign}
If $\{h_{t-1}\}$ are observable, we are able to estimate $Q_{\tau}(y_t|\mathcal{F}_{t-1})$ by the linear quantile regression.
However, $\{h_{t-1}\}$ are not observable, and we shall replace them by some initial estimates. To accomplish this, we define $h_t(\theta)$ recursively by
$$h_t(\theta)=\omega+\alpha_+(\epsilon_{t-1}^+)^{\delta}+\alpha_-(-\epsilon_{t-1}^-)^{\delta}+\beta h_{t-1}(\theta).$$
Then, $h_t=h_t(\theta_0)$. In practice, we calculate
$h_t^{1/\delta}(\theta)$ by $\sigma_t(\theta)$, where
$$\sigma^{\delta}_t(\theta)=\omega+\alpha_+(\epsilon_{t-1}^+)^{\delta}+\alpha_-(-\epsilon_{t-1}^-)^{\delta}
+\beta\sigma^{\delta}_{t-1}(\theta)$$
with given initial values  $\varepsilon_0$ and $\sigma^{\delta}_0(\theta)$.

Based on (\ref{2.1}) and (\ref{1.4}), our hybrid conditional quantile estimation procedure for
$Q_{\tau}(\epsilon_t|\mathcal{F}_{t-1})$ has the following three steps.

{\it Step 1} (Estimation of the global model structure). Using the generalized quasi-maximum likelihood estimator  (GQMLE)
in Francq and Zako\"{i}an (2013b) to estimate the parameter in model (\ref{1.1}),
\begin{flalign}\label{2.2}
\tilde{\theta}_{n,r}=(\tilde{\omega}_{n,r},\tilde{\vartheta}_{n,r}')'&=
\argmin_{\theta\in\Theta}
\frac{1}{n}\sum_{t=1}^{n}\log\left[\sigma_{t}^{r}(\theta)\right]+\frac{|\epsilon_{t}|^{r}}{\sigma_{t}^{r}(\theta)} \nonumber\\
&\equiv
\argmin_{\theta\in\Theta} \frac{1}{n}\sum_{t=1}^{n}l_{t,r}(\theta),
\end{flalign}
where $r$ is a user-chosen positive number.
Based on $\tilde{\theta}_{n,r}$, compute the initial estimates of $\{h_t\}$ as $\{\sigma_t^{\delta}(\tilde{\theta}_{n,r})\}$.

{\it Step 2} (Quantile regression at a specific level). Perform the weighted linear quantile regression of $y_t$ on $\tilde{z}_{t}=(1,(\epsilon_{t-1}^+)^{\delta},(-\epsilon_{t-1}^-)^{\delta},\sigma_{t-1}^{\delta}(\tilde{\theta}_{n,r}))'$ at quantile level $\tau$,
\begin{flalign}\label{2.3}
\hat{\theta}_{\tau n, r}=(\hat{\omega}_{\tau n,r},\hat{\vartheta}_{\tau n,r}')'&=\argmin_{\theta_{\tau}\in\Theta_\tau}\frac{1}{n}\sum_{t=1}^n
\frac{\rho_{\tau}(y_t-\theta_{\tau}'\tilde{z}_{t})}{\sigma_t^{\delta}(\tilde{\theta}_{n,r})}\nonumber\\
&=\argmin_{\theta_{\tau}\in\Theta_\tau}\frac{1}{n}\sum_{t=1}^n
\rho_{\tau}\left(\frac{y_t-\theta_{\tau}'\tilde{z}_{t}}{\sigma_t^{\delta}(\tilde{\theta}_{n,r})}\right)\nonumber\\
&\equiv \argmin_{\theta_{\tau}\in\Theta_\tau}\frac{1}{n}\sum_{t=1}^n l_{t,\rho}(\theta_{\tau}),
\end{flalign}
where $\rho_{\tau}(x)=x[\tau-\mbox{I}(x<0)]$.
Based on $\hat{\theta}_{\tau n, r}$, estimate the $\tau$th conditional quantile of $y_t$
by $\hat{Q}_{\tau}(y_t|\mathcal{F}_{t-1})=\hat{\theta}_{\tau n,r}'\tilde{z}_{t}$.

{\it Step 3} (Transforming back to $\epsilon_t$). Estimate the $\tau$th conditional quantile of the original observation $\epsilon_t$ by $\hat{Q}_{\tau}(\epsilon_t|\mathcal{F}_{t-1})=T^{-1}(\hat{\theta}_{\tau n,r}'\tilde{z}_{t})$.

For the GQMLE $\tilde{\theta}_{n,r}$ in Step 1, Francq and Zako\"{i}an (2013b) established its asymptotic normality
under some regularity conditions. The non-negative user-chosen number $r$ involved in $\tilde{\theta}_{n,r}$ indicates the
estimation method used.
Particularly, when $r=2$,
$\tilde{\theta}_{n,r}$ reduces to the Gaussian QMLE; and when $r=1$,
$\tilde{\theta}_{n,r}$ reduces to the Laplacian QMLE. So far, how to choose an ``optimal'' $r$ (under certain criterion) is unclear, and
simulation studies in Section 5 suggest that we could choose a small (or large) value of $r$ when $\eta_t$ is heavy-tailed (or light-tailed).

For the quantile estimator $\hat{\theta}_{\tau n,r}$ in Step 2, Zheng et al. (2018) studied its asymptotics for a special case that
$\delta=2$ and $\alpha_{0+}=\alpha_{0-}$ with $\gamma_0<0$ (i.e., the stationary classical GARCH model) and $r=2$ (i.e., the Gaussian QMLE).
In the present paper, we will study the asymptotic properties of $\hat{\theta}_{\tau n,r}$ for the general case.

\section{Asymptotic properties of the hybrid quantile estimator}
In this section, we study the asymptotic properties of the hybrid conditional quantile estimator.
First, we give some technical assumptions as follows:

\begin{ass}\label{asm3.1}
(i) $\theta_0$ is an interior point of $\Theta$; (ii) the random variable $\eta_t$ can not concentrate on at most two values, the positive line or the negative line, and $P(|\eta_t|=1)<1$; (iii) $E|\eta_t|^{r}=1$.
\end{ass}

\begin{ass}\label{asm3.2}
The density $f(\cdot)$ of $T(\eta_t)$ is positive and differentiable almost everywhere on $\mathcal{R}$.
\end{ass}

\begin{ass}\label{asm3.3}
When $t$ tends to infinity,
\begin{flalign*} 
E\left\{1+\sum_{i=1}^{t-1}a_0(\eta_1)\ldots a_0(\eta_i)\right\}^{-1}=o\left(\frac{1}{\sqrt{t}}\right).
\end{flalign*}
\end{ass}

Assumption \ref{asm3.1}(i)-(ii) used by Francq and Zako\"{i}an (2013a) are usually assumed for ARCH-type models.
Assumption \ref{asm3.1}(iii) is the identification condition for the GQMLE; see Francq and Zako\"{i}an (2013b).
If $r=\delta$, we have
$$E(|\epsilon_{t}|^{\delta}|\mathcal{F}_{t-1})=h_{t}E|\eta_{t}|^{\delta}=h_t$$
by (\ref{1.1}) and Assumption \ref{asm3.1}(iii), meaning that we can directly predict the $\delta$th moment of $|\epsilon_{t}|$ by
$h_{t}$. If $r\not=\delta$, the $\delta$th moment of $|\epsilon_{t}|$ has to be predicted by $h_{t}E|\eta_{t}|^{\delta}$ in this general case.

Assumption \ref{asm3.2} is standard for quantile estimation.
Assumption \ref{asm3.3} is needed only for $\gamma_0=0$,
and it is used to prove that when $\gamma_0=0$,
$\frac{1}{\sqrt{n}}\sum_{t=1}^{n}\frac{1}{h_t}\to 0$ as $n\to\infty$
in $L^{1}$ (see Francq and Zako\"{i}an, 2012 and 2013a).

Let $\kappa_{1r}=\{E[|\eta_t|^r\mbox{I}(\eta_t<Q_{\tau,\eta})]-\tau\}/r$ and $\kappa_{2r}=(E|\eta_t|^{2r}-1)/r^2$. Define the $4 \times 4$ matrices:
\begin{flalign*}
&J=E\left[\frac{1}{h_t^{2}}\frac{\partial h_t(\theta_0)}{\partial \theta}\frac{\partial h_t(\theta_0)}{\partial \theta'}\right],\,\,\,\quad\Omega=E\left[\frac{z_tz_t'}{h_t^{2}}\right],\\
&H=E\left[\frac{z_{t}}{h_t^{2}}\frac{\partial h_t(\theta_0)}{\partial \theta'}\right],\,\,\,\quad\quad\quad\quad\,\,
\Gamma=E\left[\frac{\beta_0z_{t}}{h_t^{2}}\frac{\partial h_{t-1}(\theta_0)}{\partial \theta'}\right],
\end{flalign*}
and the $3 \times 3$ matrices:
\begin{flalign*}
&J_{\vartheta}=E[d_t(\vartheta_0)d_t(\vartheta_0)'],\ \ \ \ \ \Omega_{\vartheta}=E[\xi_{t}\xi_{t}'],\\
&H_{\vartheta}=E\left[{\xi}_{t}d_t(\vartheta_0)'\right], \ \ \ \ \ \ \ \  \Gamma_{\vartheta}=E\left[\beta_0{\xi}_{t}\frac{d_{t-1}(\vartheta_0)'}{a_0(\eta_{t-1})}\right],
\end{flalign*}
where $d_t(\vartheta)$ is defined in (\ref{a.1}), and
$$\xi_{t}=\left(\f{(\eta_{t-1}^+)^{\delta}}{a_0(\eta_{t-1})},\frac{(-\eta_{t-1}^-)^{\delta}}{a_0(\eta_{t-1})},\frac{1}{a_0(\eta_{t-1})}\right)'.$$

\begin{thm}\label{thm3.1}
Suppose that Assumptions \ref{asm3.1}-\ref{asm3.2} hold and $E|\eta_t|^{2r}<\infty$.

(i) [Stationary case] When $\gamma_0<0$,  and $\beta<1$ for all $\theta\in \Theta$,
\begin{flalign}\label{3.1}
\sqrt{n}(\hat{\theta}_{\tau n,r}-\theta_{\tau 0})\rightarrow_d N(0,\Sigma_{r}),
\end{flalign}
where
\begin{flalign*}
\Sigma_{r}=\Omega^{-1}\left[\frac{\tau-\tau^2}{f^2(b_{\tau})}\Omega+\frac{\kappa_{1r}\delta b_{\tau}}{f(b_{\tau})}(\Gamma J^{-1}H'+HJ^{-1}\Gamma')+\kappa_{2r}\delta^2b_{\tau}^2\Gamma J^{-1}\Gamma'\right]\Omega^{-1}.
\end{flalign*}

(ii) [Explosive case]  When $\gamma_0>0$, and $P(\eta_t=0)=0$,
\begin{flalign}\label{3.2}
\sqrt{n}(\hat{\vartheta}_{\tau n,r}-\vartheta_{\tau 0})\rightarrow _d N(0,\Sigma_{\vartheta,r}),
\end{flalign}
where
\begin{flalign*}
\Sigma_{\vartheta,r}=\Omega_{\vartheta}^{-1}\left[\frac{\tau-\tau^2}{f^2(b_{\tau})}\Omega_{\vartheta}+\frac{\kappa_{1r}\delta b_{\tau}}{ f(b_{\tau})}\left(\Gamma_{\vartheta}J_{\vartheta}^{-1}H_{\vartheta}'+H_{\vartheta}J_{\vartheta}^{-1}\Gamma_{\vartheta}'\right)+\kappa_{2r}\delta^2 b_{\tau}^2\Gamma_{\vartheta}J_{\vartheta}^{-1}\Gamma_{\vartheta}'\right]\Omega_{\vartheta}^{-1}.
\end{flalign*}

(iii) [At the boundary of the stationarity region] When $\gamma_0=0$, $P(\eta_t=0)=0$, $\beta<\|1/a_0(\eta_t)\|_p^{-1}$
for any $\theta\in \Theta$ and some $p>1$, and Assumption \ref{asm3.3} is satisfied,
then (\ref{3.2}) holds.
\end{thm}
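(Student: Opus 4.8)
The plan is to prove Theorem \ref{thm3.1} by a unified M-estimation argument, handling the three regimes in parallel as much as possible and isolating the role of $\gamma_0$ only where it genuinely matters. The key structural observation is that the criterion in (\ref{2.3}) is a convex (in $\theta_\tau$) objective perturbed by the first-stage GQMLE $\tilde\theta_{n,r}$, so the proof decomposes into three pieces: (a) a Bahadur-type expansion of the infeasible quantile estimator that would use the true $h_{t-1}$ (equivalently $\sigma_{t-1}^\delta(\theta_0)$); (b) a quantification of the estimation error contributed by replacing $h_{t-1}$ with $\sigma_{t-1}^\delta(\tilde\theta_{n,r})$, which is where the $\sqrt n$-consistency and asymptotic linearity of $\tilde\theta_{n,r}$ from Francq and Zako\"ian (2013b) enters and produces the cross terms involving $H$, $\Gamma$, $\kappa_{1r}$, $\kappa_{2r}$; and (c) a CLT for the resulting score. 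First I would establish consistency of $\hat\theta_{\tau n,r}$ using the standard convexity lemma (Pollard) together with a uniform law of large numbers, after verifying the identification of $\theta_{\tau 0}$ as the unique minimizer of the population criterion $E[\rho_\tau(y_t-\theta_\tau'z_t)/h_t]$, which follows from Assumption \ref{asm3.1}(ii)--(iii) and Assumption \ref{asm3.2} (positivity of $f$). The effect of the initial-value misspecification of $\sigma_0^\delta$ is geometrically negligible because $\beta<1$ (stationary case) or, in the $\gamma_0\ge 0$ cases, because the $\omega$-component is dropped and the relevant ratios $\sigma_{t-1}^\delta(\theta)/h_t$ behave like products of $a_0(\eta_i)$ terms that one controls via the condition $\beta<\|1/a_0(\eta_t)\|_p^{-1}$.

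The core of the argument is the asymptotic expansion. Write $\sqrt n(\hat\theta_{\tau n,r}-\theta_{\tau 0})=\hat u_n$ and expand the reparametrized objective $\frac1n\sum_t \sigma_t^{-\delta}(\tilde\theta_{n,r})\{\rho_\tau(y_t-\theta_{\tau 0}'\tilde z_t - n^{-1/2}u'\tilde z_t)-\rho_\tau(y_t-\theta_{\tau 0}'\tilde z_t)\}$ around $u=0$. Using Knight's identity for $\rho_\tau$, this splits into a linear term $-u'S_n$ with $S_n=n^{-1/2}\sum_t \sigma_t^{-\delta}(\tilde\theta_{n,r})\,\tilde z_t\,\psi_\tau(y_t-\theta_{\tau 0}'\tilde z_t)$ (where $\psi_\tau(x)=\tau-\mathrm I(x<0)$) plus a quadratic term that, by a stochastic-equicontinuity / ULLN argument and Assumption \ref{asm3.2}, converges to $\tfrac12 u'\, f(b_\tau)\,\Omega\, u$ uniformly on compacts. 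Minimizing the limiting quadratic gives $\hat u_n = f(b_\tau)^{-1}\Omega^{-1} S_n^* + o_p(1)$, where $S_n^*$ is the version of $S_n$ with $\tilde z_t$ and $\sigma_t^\delta(\tilde\theta_{n,r})$ replaced by their $\theta_0$-counterparts plus the linear drift from $\tilde\theta_{n,r}-\theta_0$. Concretely, a first-order Taylor expansion of $h\mapsto$ the score in $\tilde\theta_{n,r}$ around $\theta_0$ contributes a term $-\,\delta b_\tau\,H\cdot\sqrt n(\tilde\theta_{n,r}-\theta_0)$ (and a $\Gamma$-term from the recursive dependence of $h_{t-1}$ on the parameter, accounting for $\partial h_{t-1}(\theta_0)/\partial\theta'$), while $\sqrt n(\tilde\theta_{n,r}-\theta_0)= J^{-1}\cdot n^{-1/2}\sum_t \frac1{h_t}\frac{\partial h_t(\theta_0)}{\partial\theta}\cdot\frac1r(|\eta_t|^r-1)+o_p(1)$ by Francq and Zako\"ian (2013b). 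Substituting and collecting terms, $S_n^*$ becomes $n^{-1/2}\sum_t \{$ (quantile-indicator part) $+\,\delta b_\tau\,(\text{GQMLE-linear part})\}$, a martingale-difference array to which a CLT (Lindeberg) applies under $E|\eta_t|^{2r}<\infty$; computing its covariance yields exactly the bracketed matrix in $\Sigma_r$, with $\tau-\tau^2$ from $\mathrm{Var}(\psi_\tau)$, $\kappa_{2r}$ from $\mathrm{Var}(|\eta_t|^r)/r^2$, and $\kappa_{1r}$ from the cross-covariance $E[\psi_\tau(T(\eta_t)-b_\tau)\cdot(|\eta_t|^r-1)]/r$ (note $\psi_\tau(T(\eta_t)-b_\tau)=\tau-\mathrm I(\eta_t<Q_{\tau,\eta})$ since $T$ is increasing). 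Sandwiching with $f(b_\tau)^{-1}\Omega^{-1}$ gives (\ref{3.1}).

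For parts (ii) and (iii) the same expansion is run on the $\vartheta$-subvector only. The essential point is that when $\gamma_0\ge 0$ the process $h_t\to\infty$ a.s., so $1/h_t\to 0$ and the $\omega$-direction is not $\sqrt n$-identified; but the normalized regressors and score components relevant to $\vartheta$ stabilize: dividing through by $h_{t-1}$ one finds $z_t/h_t$ and $(1/h_t)\partial h_t/\partial\vartheta$ converge (in the relevant Ces\`aro/$L^1$ sense) to the stationary quantities $\xi_t$ and $d_t(\vartheta_0)$ built from the $a_0(\eta_i)$'s, which is why $\Omega_\vartheta, J_\vartheta, H_\vartheta, \Gamma_\vartheta$ replace their un-subscripted analogues. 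In the explosive case $\gamma_0>0$ this convergence is exponentially fast and the CLT goes through with $P(\eta_t=0)=0$ ensuring the $\xi_t$ are well-defined; in the boundary case $\gamma_0=0$ the convergence is only polynomial, and Assumption \ref{asm3.3} (i.e.\ $n^{-1/2}\sum_t 1/h_t\to 0$ in $L^1$) together with $\beta<\|1/a_0(\eta_t)\|_p^{-1}$ is exactly what is needed to show the $\omega$-contributions and the initialization error vanish at rate $o_p(\sqrt n^{-1})$ after the reparametrization, so that the $\vartheta$-block expansion is unaffected. I expect the main obstacle to be part (b)--(c) in the $\gamma_0=0$ regime: controlling the replacement of $h_{t-1}$ by $\sigma_{t-1}^\delta(\tilde\theta_{n,r})$ uniformly over $\Theta$ when $h_t$ is only at the edge of integrability requires a careful $L^p$-bound on $1/a_0(\eta_t)$ and a truncation argument, and verifying that the quadratic term still has the nondegenerate limit $f(b_\tau)\Omega_\vartheta$ (rather than something singular) needs the full strength of Assumption \ref{asm3.3}; the stationary case (i) is comparatively routine once the GQMLE expansion is invoked. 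I would also flag that interchanging the minimization with the limit requires the convexity lemma each time, and that the differentiability of $f$ in Assumption \ref{asm3.2} is used to justify the Taylor step $E[\mathrm I(y_t<\theta_{\tau 0}'z_t+s)-\mathrm I(y_t<\theta_{\tau 0}'z_t)\mid\mathcal F_{t-1}]=f(b_\tau)s/h_t+o(s)$ uniformly.
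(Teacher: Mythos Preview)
Your overall plan is sound and closely parallels the paper's, though the paper works through the subgradient/estimating-equation route (setting the $\vartheta_\tau$-subgradient to zero and decomposing it as $I_2+I_3+I_4+I_5$, Lemmas \ref{lema5}--\ref{lema6}) rather than Knight's identity plus the convexity lemma. Both lead to the same Bahadur expansion (\ref{a.21}) and (\ref{a.23}).

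One point you have garbled: the first-order Taylor expansion of the score in $\tilde\theta_{n,r}$ does \emph{not} contribute an $H$ term; it produces only the $\Gamma$ term. The only place $\tilde\theta_{n,r}$ enters $\theta_{\tau 0}'\tilde z_t$ is through $\sigma_{t-1}^\delta(\tilde\theta_{n,r})$, and differentiating the indicator yields $-f(b_\tau)\,b_\tau\,\Gamma\cdot\sqrt n(\tilde\theta_{n,r}-\theta_0)$ (this is the paper's $I_3$); the dependence through the weight $\sigma_t^{-\delta}(\tilde\theta_{n,r})$ and the multiplicative $\tilde z_t$ is $o_p(1)$ because $\psi_\tau$ is conditionally mean zero (the paper's $I_2$). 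The matrix $H$ enters only at the covariance stage: after substituting the GQMLE expansion for $\sqrt n(\tilde\theta_{n,r}-\theta_0)$, the joint martingale summand is $Uu_t+Vv_t$ with $V=(b_\tau\delta/r)\Gamma J^{-1}$, and $H=E[z_t h_t^{-2}\,\partial h_t(\theta_0)/\partial\theta']$ arises as $\mathrm{Cov}(u_t,v_t)=r\kappa_{1r}H$. So the linear expansion is simply $\sqrt n(\hat\theta_{\tau n,r}-\theta_{\tau 0})=f(b_\tau)^{-1}\Omega^{-1}\cdot(\text{oracle score})-b_\tau\Omega^{-1}\Gamma\,\sqrt n(\tilde\theta_{n,r}-\theta_0)+o_p(1)$, with no $H$ and no extra $\delta$.

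For (ii)--(iii), the convexity-lemma route is awkward because the $\omega_\tau$-block of the limiting quadratic form degenerates ($1/h_t\to 0$), so the argmin-versus-limit interchange for the full $4$-vector fails. The paper sidesteps this by taking only the $\vartheta_\tau$-subgradient (equation (\ref{a.22})): that first-order condition holds asymptotically regardless of where $\hat\omega_{\tau n,r}$ lands, and the decomposition $I_2+I_3+I_4+I_5$ then isolates the $\vartheta$-expansion cleanly. If you insist on Knight's identity you will have to profile over $\omega_\tau$ and show, uniformly in $\omega_\tau$ on compacta, that the quadratic limit in the $\vartheta_\tau$-direction is still $f(b_\tau)\Omega_\vartheta$; this is feasible but more work than the estimating-equation route.
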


\begin{rem}
Similar to the Gaussian QMLE in Jensen and Rahbek (2004a, b) and Francq and Zako\"{i}an (2012, 2013a), $\hat{\vartheta}_{\tau n,r}$ is always asymptotically normal distributed regardless of the sign of $\gamma_0$, and  $\hat{\omega}_{\tau n,r}$ is shown to be asymptotically normal distributed only for
$\gamma_0<0$.

Our results in Theorem \ref{thm3.1} are also related to those in Zheng et al. (2018), but with three major differences.
First, the results in Theorem 1 of Zheng et al. (2018) are nested by ours with $\gamma_0<0$, $\alpha_{0+}=\alpha_{0-}$ and
$\delta=r=2$. Second, the results in Zheng et al. (2018) need the assumption $E|\eta_t|^4<\infty$,
while our results hold under a weaker assumption $E|\eta_t|^{2r}<\infty$, which is applicable to the heavy-tailed $\eta_t$.
Third, the results of Zheng et al. (2018) are only for the stationary GARCH model, but our results cover both stationary and non-stationary asymmetric PGARCH models,
leading to a much larger applicability scope than theirs.
\end{rem}

\begin{rem}
To prove the result in (iii), a technical condition $\beta<\|1/a_0(\eta_t)\|_p^{-1}$ is needed, and it poses an additional restriction on the parameter $\beta$.
Clearly, the boundary point $\|1/a_0(\eta_t)\|_p^{-1}$ is related to the constant $p$, the distribution of $\eta_t$, and the value of $(\delta, \alpha_{0+}, \alpha_{0-}, \beta_0)$.
Table \ref{uppbound} reports the values of $\|1/a_0(\eta_t)\|_p^{-1}$ for several choices of $p$, $\eta_t$, and $\delta$, where the value of $\beta_0$ is fixed to be 0.9, the value of $\alpha_{0-}$ is set to be $0.01, 0.04, ..., 0.25$, and the value of $\alpha_{0+}$ is uniquely determined by the condition $\gamma_0=0$.
From this table, we can find that (i) the value of $\beta_0$ always lies in the region $\{\beta:\beta<\|1/a_0(\eta_t)\|_p^{-1}\}$; (ii)
the values of $\|1/a_0(\eta_t)\|_p^{-1}$ do not vary too much across $\alpha_{0-}$ or the distribution of $\eta_t$, although they become slightly smaller as the values of $p$
become larger. In sum, based on our calculations, the technical condition $\beta<\|1/a_0(\eta_t)\|_p^{-1}$ seems mild, and it should not
hinder the practical application of our proposed estimation.
\begin{table}
\caption{\label{uppbound}The  values of $\|1/a_0(\eta_t)\|_p^{-1}$ when $\gamma_0=0$ with $\beta_0=0.9$}
\small\addtolength{\tabcolsep}{-1.8pt}
\renewcommand{\arraystretch}{0.84}
\begin{tabular}{cccccccccccccccc}

\hline


$\eta_t$&\diagbox[width=3em]{$p$}{$\alpha_{0-}$}&0.01&0.04&0.07&0.10&0.13&0.16&0.19&0.22&0.25\\
\hline\\

\multicolumn{11}{c}{Panel A: $\delta=2$}\\

$N(0, 1)$&2 &0.97366& 0.98019&0.98380& 0.98524& 0.98497& 0.98325& 0.98023&0.97599& 0.97066\\
&4   & 0.95886& 0.96792& 0.97274& 0.97465& 0.97429& 0.97201& 0.96797&0.96215& 0.95448\\
&6&0.94949&0.95953& 0.96467& 0.96667&0.96630& 0.96391& 0.95958&0.95320&0.94441\\\\

$t_5$&2 &0.96867& 0.97439& 0.97750& 0.97894& 0.97913& 0.97831& 0.97662&0.97410& 0.97075\\
&4&0.95403& 0.96143& 0.96531& 0.96708& 0.96732& 0.96631& 0.96421&0.96106& 0.95677\\
&6&0.94528& 0.95323& 0.95727& 0.95909& 0.95934& 0.95831& 0.95614&0.95284& 0.94826 \\\\

$t_2$&2 &0.96093&0.96276& 0.95718& 0.96282& 0.97221& 0.98027& 0.98736&0.99368& 0.99940\\
&4& 0.94825& 0.95038& 0.94380& 0.94596& 0.95183& 0.95670& 0.96087& 0.96450& 0.96772\\
&6& 0.94116& 0.94335& 0.93651& 0.93704&0.94125& 0.94468& 0.94756&0.95003& 0.95219 \\\\

\multicolumn{11}{c}{Panel B: $\delta=1$}\\

$N(0, 1)$&2 & 0.98360& 0.98868& 0.99209& 0.99401& 0.99459 &0.99397 &0.99224& 0.98952 &0.98587\\
&4&  0.97119 &0.97972 &0.98545& 0.98867& 0.98964 &0.98859& 0.98570 &0.98113 &0.97501\\
&6& 0.96174 &0.97257& 0.97982 &0.98389& 0.98512& 0.98379 &0.98013 &0.97435& 0.96659\\\\

$t_5$&2 &0.98177 &0.98659 &0.98993 &0.99198& 0.99290 &0.99279& 0.99176& 0.98987 &0.98720\\
&4&0.96894 &0.97679 &0.98217 &0.98547 &0.98694 &0.98676 &0.98511 &0.98208 &0.97776 \\
&6&0.95955 &0.96931 &0.97597 &0.98002& 0.98182 &0.98161 &0.97958 &0.97585 &0.97052\\\\

$t_2$&2 & 0.96174 &0.97257 &0.97982& 0.98389 &0.98512& 0.98379 &0.98013 &0.97435& 0.96659\\
&4& 0.96629& 0.97588& 0.97941& 0.97865 &0.97438 &0.96686 &0.96342& 0.96892& 0.97385\\
&6&0.95788 &0.96930& 0.97347& 0.97258& 0.96753 &0.95856& 0.95315& 0.95703& 0.96043\\
\hline
\end{tabular}
\end{table}
\end{rem}

\begin{rem}
Our results in Theorem \ref{thm3.1} are derived for a known exponent $\delta$. When $\delta$ is unknown in general, we can
include $\delta$ as an additional unknown parameter in our first estimation procedure, and the asymptotics of the resulting GQMLE can be established with
some minor modifications (see also Section 6 in Francq and Zako\"{i}an (2013a)). However, since the unknown exponent $\delta$ is involved in the transformation function $T(\cdot)$, how to derive the asymptotics of the corresponding quantile estimator in the second step estimation procedure is challenging at this stage, and we leave this interesting topic for the future study.
\end{rem}

Let $\bar{z}_{t,\vartheta}=((\epsilon_{t-1}^+)^{\delta},(-\epsilon_{t-1}^-)^{\delta},\sigma^{\delta}_{t-1}(\theta_0))'.$ By (\ref{a.22})-(\ref{a.23}) and Lemma \ref{lema3}, we have
\begin{flalign}
\sqrt{n}(\hat{\theta}_{\tau n,r}-\theta_{\tau 0})&=\Omega^{-1}
\left[\frac{1}{\sqrt{n}}\sum_{t=1}^{n}\left(Uu_{t}
+Vv_{t}\right)\right]+o_{p}(1) \nonumber\\
&\equiv \Omega^{-1}
\left[\frac{1}{\sqrt{n}}\sum_{t=1}^{n}e_{t}\right]+o_{p}(1), \label{a3.3}\\
\sqrt{n}(\hat{\vartheta}_{\tau n,r}-\vartheta_{\tau 0})&=\Omega_{\vartheta}^{-1}
\left[\frac{1}{\sqrt{n}}\sum_{t=1}^{n}\left(Uu_{\vartheta,t}
+V_{\vartheta}v_{\vartheta,t}\right)\right]+o_{p}(1) \nonumber\\
&\equiv \Omega_{\vartheta}^{-1}
\left[\frac{1}{\sqrt{n}}\sum_{t=1}^{n}e_{\vartheta,t}\right]+o_{p}(1), \label{a3.4}
\end{flalign}
where $U=1/f(b_{\tau})$  and
\begin{flalign*}
&V=\frac{b_{\tau}\delta}{r}\Gamma J^{-1},\,\,
u_{t}=\psi_{\tau}\left(\eta_t-Q_{\tau,\eta}\right)
\f{z_{t}}{h_t(\theta_{0})}, \,\,
v_{t}=[1-|\eta_t|^r]\frac{1}{h_{t}}\frac{\partial h_{t}(\theta_0)}{\partial\theta}, \\
&V_{\vartheta}=\frac{b_{\tau}\delta}{r}\Gamma_{\vartheta}J_{\vartheta}^{-1},\,\,
u_{\vartheta,t}=\psi_{\tau}\left(\eta_t-Q_{\tau,\eta}\right)
\f{\bar{z}_{t,\vartheta}}{\sigma^{\delta}_t(\theta_{0})}, \,\,
v_{\vartheta,t}=[1-|\eta_t|^r]\frac{1}{h_{t}}\frac{\partial \sigma_{t}^{\delta}(\theta_0)}{\partial\vartheta}
\end{flalign*}
with $\psi_{\tau}(x)=\tau-\mathrm{I}(x<0)$.

Based on $\tilde{\theta}_{n,r}$, we can calculate $\tilde{\Omega}_{r}$, $\tilde{U}_{r}$, $\tilde{u}_{r,t}$, $\tilde{b}_{\tau,r}$,
$\tilde{\Gamma}_{r}$, $\tilde{J}_{r}$, and $\tilde{v}_{r,t}$, which are
the sample counterparts of
$\Omega$, $U$, $u_t$, $b_{\tau}$, $\Gamma$, $J$, and $v_{t}$, respectively\footnote{For $\tilde{U}_{r}$,  we follow Silverman (1986) to
estimate $f(x_0)$ by the Gaussian kernel density estimator  $\tilde{f}(x_0)={\sum_{t=1}^n K_h(T(\tilde{\eta}_{t,r})-x_0)}/{n}$ with $K_h(x)={1}/{(\sqrt{2\pi}h)}\exp\{-x^2/(2h^2)\}$
and the rule-of-thumb bandwidth $h=0.9n^{-1/5}\min(s, \tilde{R}/1.34)$, where $\tilde{\eta}_{t,r}=\epsilon_t/\sigma_t(\tilde{\theta}_{n,r})$, and
$s$ and $\tilde{R}$ are the sample standard deviation and interquartile range of the transformed residuals $\{T(\tilde{\eta}_{t,r})\}$, respectively.}.
Since $e_t$ is a martingale difference sequence,
by (\ref{a3.3}) we can estimate
$\Sigma_{r}$ by
\begin{flalign*}
\tilde{\Sigma}_{r}=\tilde{\Omega}_{r}^{-1}\left[\frac{1}{n}\sum_{t=1}^{n}\tilde{e}_{r,t}\tilde{e}_{r,t}'\right]\tilde{\Omega}_{r}^{-1},
\end{flalign*}
where $\tilde{e}_{r,t}=\tilde{U}_{r}\tilde{u}_{r,t}
+\tilde{V}_{r}\tilde{v}_{r,t}$ with $\tilde{V}_{r}=(\tilde{b}_{\tau,r}\delta/r)\tilde{\Gamma}_{r} \tilde{J}_{r}^{-1}$. Under the conditions of Theorem \ref{thm3.1}(i), we can show that
$\tilde{\Sigma}_{r}$ is a consistent estimator of $\Sigma_{r}$ for $\gamma_0<0$.

Partition $\tilde{u}_{r,t}=(\tilde{u}_{\omega r,t},\tilde{u}_{\vartheta r,t}')'$, $\tilde{v}_{r,t}=(\tilde{v}_{\omega r,t},\tilde{v}_{\vartheta r,t}')'$, and
$$\tilde{\Sigma}_{r}=
\begin{bmatrix}
\tilde{\Sigma}_{\omega\omega, r} & \tilde{\Sigma}_{\omega\vartheta, r} \\
\tilde{\Sigma}_{\omega\vartheta, r}' & \tilde{\Sigma}_{\vartheta\vartheta, r}
\end{bmatrix},\,\,
\tilde{\Omega}_{r}=
\begin{bmatrix}
\tilde{\Omega}_{\omega\omega, r} & \tilde{\Omega}_{\omega\vartheta, r} \\
\tilde{\Omega}_{\omega\vartheta, r}' & \tilde{\Omega}_{\vartheta\vartheta, r}
\end{bmatrix},\,\,
\tilde{\Gamma}_{r}=
\begin{bmatrix}
\tilde{\Gamma}_{\omega\omega, r} & \tilde{\Gamma}_{\omega\vartheta, r} \\
\tilde{\Gamma}_{\omega\vartheta, r}' & \tilde{\Gamma}_{\vartheta\vartheta, r}
\end{bmatrix},\,\,
\tilde{J}_{r}=
\begin{bmatrix}
\tilde{J}_{\omega\omega, r} & \tilde{J}_{\omega\vartheta, r} \\
\tilde{J}_{\omega\vartheta, r}' & \tilde{J}_{\vartheta\vartheta, r}
\end{bmatrix}.$$
Then, $\tilde{\Omega}_{\vartheta\vartheta, r}$, $\tilde{u}_{\vartheta r, t}$, $\tilde{\Gamma}_{\vartheta\vartheta, r}$, $\tilde{J}_{\vartheta\vartheta, r}$ and
$\tilde{v}_{\vartheta r, t}$ are the sample counterparts of
$\Omega_{\vartheta}$, $u_{\vartheta, t}$, $\Gamma_{\vartheta}$, $J_{\vartheta}$ and
$v_{\vartheta, t}$, respectively. Since $e_{\vartheta, t}$ is a martingale difference sequence,
by (\ref{a3.4}) we can estimate
$\Sigma_{\vartheta, r}$ by
\begin{flalign*}
\tilde{\Sigma}_{\vartheta,r}=\tilde{\Omega}_{\vartheta\vartheta, r}^{-1}\left[\frac{1}{n}\sum_{t=1}^{n}\tilde{e}_{\vartheta r,t}\tilde{e}_{\vartheta r,t}'\right]\tilde{\Omega}_{\vartheta\vartheta, r}^{-1},
\end{flalign*}
where $\tilde{e}_{\vartheta r,t}=\tilde{U}_{r}\tilde{u}_{\vartheta r,t}
+\tilde{V}_{\vartheta, r}\tilde{v}_{\vartheta r,t}$ with $\tilde{V}_{\vartheta, r}=(\tilde{b}_{\tau,r}\delta/r)\tilde{\Gamma}_{\vartheta\vartheta, r} \tilde{J}_{\vartheta\vartheta, r}^{-1}$. Under the conditions of Theorem \ref{thm3.1}(ii)-(iii), we can show that
$\tilde{\Sigma}_{\vartheta,r}=\Sigma_{\vartheta, r}+o_{p}(1)$ and $\tilde{\Sigma}_{\vartheta\vartheta, r}=\tilde{\Sigma}_{\vartheta,r}+o_{p}(1)$ for $\gamma_0\geq 0$, which implies that we can estimate $\Sigma_{\vartheta, r}$ by $\tilde{\Sigma}_{\vartheta\vartheta, r}$ for either $\gamma_0<0$ or $\gamma_0\geq 0$.

\section{Strict stationarity and asymmetry tests}
\subsection{Testing for strict stationarity}
Since the stationarity of model (\ref{1.1}) is determined by the sign of $\gamma_0$, it is interesting
to consider the strict stationarity testing problems as follows:
\begin{flalign}\label{4.1}
H_0: \gamma_0<0\ \ \mbox{against}\ H_1: \gamma_0\geq 0,
\end{flalign}
and
\begin{flalign}\label{4.2}
H_0: \gamma_0\geq0\ \ \mbox{against}\  H_1: \gamma_0< 0.
\end{flalign}
In Francq and Zako\"{i}an (2013a), a strict stationarity test based on the Gaussian QMLE is proposed. In this subsection, similar to Francq and Zako\"{i}an (2013a), we construct a  strict stationarity test based on the GQMLE.

For any $\theta\in \Theta$, let $\eta_t(\theta)=\epsilon_t/\sigma_t(\theta)$ and
\begin{flalign*}
\gamma_n(\theta)=\f{1}{n}\sum_{t=1}^n\log[\alpha_+(\eta_t^+(\theta))^{\delta}+\alpha_-(-\eta_t^-(\theta))^{\delta}+\beta].
\end{flalign*}
Then,  we can estimate $\gamma_0$ by $\tilde{\gamma}_{n,r}=\gamma_n(\tilde{\theta}_{n,r})$.
The following result shows the asymptotic distribution of $\tilde{\gamma}_{n,r}$ in both stationary and nonstationary cases.

\begin{cor}\label{cor4.1}
Let $u_t=\log(a_0(\eta_t))-\gamma_0$, $\sigma_u^2=E(u_t^2)$ and $a=(0,E\xi_{t}')'$. Then, under the conditions of Theorem \ref{thm3.1},
\begin{flalign} \label{4.4}
\sqrt{n}(\tilde{\gamma}_{n,r}-\gamma_0)\rightarrow_d N(0,\sigma_{\gamma_0}^2)\ \ \mbox{as}\ \  n \rightarrow \infty,
\end{flalign}
where
\begin{flalign*}
\sigma_{\gamma_0}^2=
\left\{\begin{array}{ll}
\sigma_u^2+\delta^2\kappa_{2r} \{a'J^{-1}a-(1-E[\f{\beta_0}{a_0(\eta_t)}])^2\},&\ \ \mbox{as}\ \ \gamma_0<0,\\
\sigma_{u}^2, &  \ \ \mbox{as}\ \ \gamma_0\geq0.
\end{array}
\right.
\end{flalign*}
\end{cor}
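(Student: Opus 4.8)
The plan is to expand $\tilde{\gamma}_{n,r}=\gamma_n(\tilde{\theta}_{n,r})$ around the true value via a Taylor argument, separating a "leading" martingale term from a term driven by the GQMLE estimation error, and then to distinguish the cases $\gamma_0<0$ and $\gamma_0\ge 0$ according to whether the estimation‑error term contributes to the limit. First I would write
\begin{flalign*}
\sqrt{n}(\tilde{\gamma}_{n,r}-\gamma_0)=\sqrt{n}(\gamma_n(\theta_0)-\gamma_0)+\sqrt{n}\big(\gamma_n(\tilde{\theta}_{n,r})-\gamma_n(\theta_0)\big).
\end{flalign*}
The first term equals $\frac{1}{\sqrt{n}}\sum_{t=1}^n u_t$ with $u_t=\log a_0(\eta_t)-\gamma_0$, an i.i.d.\ mean‑zero sequence with finite variance $\sigma_u^2$ (finiteness follows from $E|\eta_t|^{2r}<\infty$ together with Assumption \ref{asm3.1}), so by the CLT it converges to $N(0,\sigma_u^2)$. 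For the second term I would apply a mean‑value expansion in $\theta$: $\sqrt{n}(\gamma_n(\tilde{\theta}_{n,r})-\gamma_n(\theta_0))=\big[\frac{\partial\gamma_n(\theta^*)}{\partial\theta'}\big]\sqrt{n}(\tilde{\theta}_{n,r}-\theta_0)$ for some $\theta^*$ between $\tilde{\theta}_{n,r}$ and $\theta_0$.

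The key is to identify the probability limit of $\partial\gamma_n(\theta^*)/\partial\theta'$. Differentiating $\log[\alpha_+(\eta_t^+(\theta))^\delta+\alpha_-(-\eta_t^-(\theta))^\delta+\beta]$ and using $\eta_t(\theta_0)=\eta_t$, a uniform law of large numbers (the standard ergodicity/dominance arguments already invoked for the GQMLE in Francq and Zako\"{i}an (2013b), which I may assume) gives $\partial\gamma_n(\theta^*)/\partial\theta'\to_p E[\partial\log a_0(\eta_t)/\partial\theta']$ evaluated appropriately; a direct computation shows the $\omega$‑component vanishes and the remaining components reduce to $a'=(0,E\xi_t')$ up to the factor $\delta$ coming from the power exponent — this is exactly why $a=(0,E\xi_t')'$ appears in the statement. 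Hence $\sqrt{n}(\gamma_n(\tilde{\theta}_{n,r})-\gamma_n(\theta_0))=\delta\, a'\sqrt{n}(\tilde{\theta}_{n,r}-\theta_0)+o_p(1)$. In the stationary case $\gamma_0<0$, Francq and Zako\"{i}an (2013b) give $\sqrt{n}(\tilde{\theta}_{n,r}-\theta_0)\to_d N(0,\kappa_{2r}J^{-1})$ (with $\kappa_{2r}=(E|\eta_t|^{2r}-1)/r^2$), and I would compute the joint limit of $(\frac{1}{\sqrt n}\sum u_t,\ \sqrt{n}(\tilde{\theta}_{n,r}-\theta_0))$ via the martingale‑difference representation of the GQMLE score; the cross‑covariance between $u_t$ and the score $[1-|\eta_t|^r]h_t^{-1}\partial h_t/\partial\theta$ produces the correction $-(1-E[\beta_0/a_0(\eta_t)])^2$ after taking the quadratic form $\delta^2 a'(\cdot)a$, yielding $\sigma_{\gamma_0}^2=\sigma_u^2+\delta^2\kappa_{2r}\{a'J^{-1}a-(1-E[\beta_0/a_0(\eta_t)])^2\}$.

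In the non‑stationary cases $\gamma_0\ge 0$, the point is that the estimation‑error term is asymptotically negligible: here only the three‑dimensional subvector $\hat\vartheta$ (and the GQMLE analogue $\tilde{\vartheta}_{n,r}$) is $\sqrt{n}$‑consistent, while the $\omega$‑direction is estimated at a slower rate, yet $a$ has a zero in its first coordinate, so $a'\sqrt{n}(\tilde\theta_{n,r}-\theta_0)$ reduces to a bounded‑in‑probability term in the $\vartheta$‑directions — and, more importantly, when $\gamma_0>0$ the relevant normalized quantities multiplying the score vanish because $h_t\to\infty$ a.s. (using $P(\eta_t=0)=0$), forcing $\sqrt{n}(\gamma_n(\tilde\theta_{n,r})-\gamma_n(\theta_0))\to_p 0$; when $\gamma_0=0$ the same conclusion follows from Assumption \ref{asm3.3}, exactly as in the cited argument that $\frac{1}{\sqrt n}\sum 1/h_t\to 0$ in $L^1$. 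Thus the limit is simply $N(0,\sigma_u^2)$. The main obstacle I anticipate is the bookkeeping in the stationary case: correctly extracting the cross‑covariance between the i.i.d.\ summands $u_t$ and the GQMLE score to produce the precise correction term $-(1-E[\beta_0/a_0(\eta_t)])^2$, which requires carefully relating $\partial h_t(\theta_0)/\partial\theta$ to $\xi_t$ and $a_0(\eta_{t-1})$ through the recursion and identifying $a'J^{-1}$ with the appropriate expectation. Once that algebra is in place, Slutsky's theorem and the CLT assemble the result.
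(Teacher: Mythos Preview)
Your overall strategy---Taylor expand $\gamma_n$ around $\theta_0$, then combine with the CLT for $n^{-1/2}\sum u_t$ in the stationary case and show negligibility of the estimation-error term in the non-stationary case---is the right one and matches what Francq and Zako\"{i}an (2013a) do. But the key derivative computation is wrong, and this breaks both regimes.

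When you differentiate $\log[\alpha_+(\eta_t^+(\theta))^\delta+\alpha_-(-\eta_t^-(\theta))^\delta+\beta]$ you must keep the chain-rule contribution from $\eta_t(\theta)=\epsilon_t/\sigma_t(\theta)$. At $\theta_0$ one obtains
\[
\frac{\partial}{\partial\theta}\log g_t(\theta_0)
=\begin{pmatrix}0\\ \xi_{t}^{\dagger}\end{pmatrix}
-\Big(1-\frac{\beta_0}{a_0(\eta_t)}\Big)\frac{1}{h_t}\frac{\partial h_t(\theta_0)}{\partial\theta},
\]
with $\xi_t^{\dagger}=((\eta_t^+)^\delta,(-\eta_t^-)^\delta,1)'/a_0(\eta_t)$. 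So the $\omega$-component does \emph{not} vanish, and the gradient does not reduce to $a$ (there is no extra factor $\delta$ either). This second term is the whole mechanism.

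For $\gamma_0\ge 0$ it cancels the first in expectation: by Lemma~\ref{lema2}, $h_t^{-1}\partial\sigma_t^\delta(\theta_0)/\partial\vartheta\to d_t(\vartheta_0)$, and a direct computation gives $(1-E[\beta_0/a_0(\eta_t)])\,E[d_t(\vartheta_0)]=E\xi_t$, so $\partial\gamma_n(\theta_0)/\partial\vartheta\to_p 0$. That cancellation, not the fact that $h_t\to\infty$, is why the limiting variance is just $\sigma_u^2$. With your claimed gradient $a$, the term $(E\xi_t)'\sqrt{n}(\tilde\vartheta_{n,r}-\vartheta_0)$ would be $O_p(1)$, not $o_p(1)$, and your non-stationary argument fails.

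For $\gamma_0<0$ the cancellation is not exact, and the correction $-(1-E[\beta_0/a_0(\eta_t)])^2$ comes from computing the quadratic form with the full gradient $c=a-(1-E[\beta_0/a_0(\eta_t)])\,E[h_t^{-1}\partial h_t/\partial\theta]$, using the Euler identity $\theta_0' h_t^{-1}\partial h_t/\partial\theta=1$ and its consequence $J^{-1}E[h_t^{-1}\partial h_t/\partial\theta]=\theta_0$; it does not arise as a pure cross-covariance between $u_t$ and the score starting from the gradient $a$.
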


The proof of Corollary \ref{cor4.1} is omitted, since it is similar to the one in Francq and Zako\"{i}an (2013a)
except for some minor modifications.
Let $\tilde{\eta}_{t,r}=\eta_{t}(\tilde{\theta}_{n,r})$.
Under the conditions of Corollary \ref{cor4.1}, $\sigma_u^2$ can be consistently estimated by
$\tilde{\sigma}_{u,r}^2$, where $\tilde{\sigma}_{u,r}^2$ is the sample
variance of $\{\log[\tilde{\alpha}_{n+,r}(\tilde{\eta}_{t,r}^+)^{\delta}+\tilde{\alpha}_{n-,r}(-\tilde{\eta}_{t,r}^-)^{\delta}+\tilde{\beta}_{n,r}]\}$. Then, the statistic
$$\hat{T}_{r}=\sqrt{n}\tilde{\gamma}_{n,r}/\tilde{\sigma}_{u,r}$$
asymptotically converges to $N(0,1)$  when $\gamma_0=0$.
For the testing  problem (\ref{4.1}) [or (\ref{4.2})], this leads us to consider the critical region
\begin{flalign}\label{4.5}
C^{ST}=\{\hat{T}_{r}>\Phi^{-1}(1-\underline{\alpha})\} \ \ [\mbox{or }C^{NT}=\{\hat{T}_{r}<\Phi^{-1}(\underline{\alpha})\}]
\end{flalign}
at the asymptotic significance level of $\underline{\alpha}$.

\subsection{Testing for asymmetry}
Testing for the existence of asymmetry (or leverage) effect is important in many financial applications.
For model (\ref{1.1}),  this asymmetry testing problem is of the form
\begin{flalign}\label{4.6}
H_0: \alpha_{0+}=\alpha_{0-}\ \ \mbox{against}\ H_1: \alpha_{0+}\neq\alpha_{0-}.
\end{flalign}
In this subsection, we propose two tests for the hypotheses in (\ref{4.6}).
Let $\tilde{\sigma}_{S,r}^{*}=\sqrt{e'\tilde{\Sigma}_{\vartheta\vartheta,r}^{*}e}$ and
$\tilde{\sigma}_{S,r}=\sqrt{e'\tilde{\Sigma}_{\vartheta\vartheta,r}e}$ with $e=(1,-1,0)'$,
where $\tilde{\Sigma}_{\vartheta\vartheta,r}$ defined before is a consistent estimator of the
asymptotic variance of $\hat{\vartheta}_{\tau n,r}$, and
$$\tilde{\Sigma}_{\vartheta\vartheta,r}^{*}=\frac{\delta^2}{r^2}\tilde{J}_{\vartheta\vartheta, r}^{-1}
\left[\frac{1}{n}\sum_{t=1}^{n}\tilde{v}_{\vartheta r,t}\tilde{v}_{\vartheta r,t}'\right]\tilde{J}_{\vartheta\vartheta, r}^{-1}.$$
By Lemmas \ref{lema1}-\ref{lema4} and the similar argument as for Theorem 3.2 in Francq and Zako\"{i}an (2013a), we can show that
$\tilde{\Sigma}_{\vartheta\vartheta,r}^{*}$ is a consistent estimator of the
asymptotic variance of $\tilde{\vartheta}_{n,r}$. With  $\tilde{\sigma}_{S,r}^{*}$ and $\tilde{\sigma}_{S,r}$,
our test statistics for asymmetry are defined by
\begin{flalign*}
\hat{S}_{1,r}=\f{\sqrt{n}(\tilde{\alpha}_{n+,r}-\tilde{\alpha}_{n-,r})}{\tilde{\sigma}_{S,r}^{*}}
\,\,\mbox{ and }\,\,\hat{S}_{2,r}^{(\tau)}=\f{\sqrt{n}(\hat{\alpha}_{\tau n+,r}-\hat{\alpha}_{\tau n-,r})}{\tilde{\sigma}_{S,r}}.
\end{flalign*}
Note that $\hat{S}_{1,r}$ is based on the GQMLE, and it aims to examine the asymmetric effect in model (\ref{1.1}) globally, while
$\hat{S}_{2,r}^{(\tau)}$ does this locally at a specific quantile level $\tau$ by using
the quantile estimator. Under the conditions of Theorem \ref{thm3.1}, it is straightforward that
both $\hat{S}_{1,r}$ and $\hat{S}_{2,r}^{(\tau)}$ asymptotically converge to $N(0,1)$ under $H_0$ in (\ref{4.6}). Hence,
 the critical region based on $\hat{S}_{1,r}$ [or $\hat{S}_{2,r}^{(\tau)}$] is
\begin{flalign}\label{4.7}
C^S=\{|\hat{S}_{1,r}|>\Phi^{-1}(1-\underline{\alpha}/2)\} \ \ [\mbox{or, }C^S=\{|\hat{S}_{2,r}^{(\tau)}|>\Phi^{-1}(1-\underline{\alpha}/2)\}]
\end{flalign}
for the testing problem (\ref{4.6}), and it has the asymptotic significance level $\underline{\alpha}$.
Since $\tilde{\alpha}_{n\pm,r}$, $\hat{\alpha}_{\tau n\pm,r}$, $\tilde{\sigma}_{S,r}^{*}$ or $\tilde{\sigma}_{S,r}$ has the unified asymptotics for both
$\gamma_0<0$ and $\gamma_0\geq0$, the tests $\hat{S}_{1,r}$ and $\hat{S}_{2,r}^{(\tau)}$ can be used in both cases. This is also the situation for the asymmetry test in Francq and Zako\"{i}an (2013a).
We shall emphasize that unlike the Gaussian QMLE-based tests in Francq and Zako\"{i}an (2013a), our tests
$\hat{T}_{r}$, $\hat{S}_{1,r}$ and $\hat{S}_{2,r}^{(\tau)}$ only require $E|\eta_t|^{2r}<\infty$, and they thus
are valid for the very heavy-tailed $\eta_t$.

\section{Simulation studies}
\subsection{Simulation studies for the quantile estimators}
In this section, we assess the finite-sample
performance of $\hat{\theta}_{\tau n,r}$. We generate 1000 replications
from the following model:
\begin{flalign}\label{5.1}
\epsilon_t=h_{t}^{1/\delta}\eta_t, \quad h_t=0.1+\alpha_{0+}(\epsilon_{t-1}^+)^{\delta}+0.15(-\epsilon_{t-1}^-)^{\delta}+0.9h_{t-1},
\end{flalign}
where $\eta_t$ is taken as $N(0, 1)$, the standardized Student's
$t_5$ ($\mathrm{st}_5$) or the standardized Student's
$t_3$ ($\mathrm{st}_3$) such that $E\eta_{t}^2=1$. Here, we
fix $\omega_0=0.1$, $\alpha_{0-}=0.15$ and $\beta_0=0.9$, and choose $\alpha_{0+}$ as in Table\,\ref{alpha_value}, where
the values of $\alpha_0+$ correspond to
the cases of $\gamma_{0}>0$, $\gamma_{0}=0$, and $\gamma_{0}<0$,
respectively. For the power index $\delta$ (or the estimation indicator $r$), we choose it to be 2 or 1. For the quantile level $\tau$, we set it
to be $0.05$ or $0.1$.
Since
each GQMLE has a different identification condition, $\hat{\theta}_{\tau n,r}$
has to be re-scaled for $\theta_{\tau0}$ in model (\ref{5.1}), and it is defined as
$$\hat{\theta}_{\tau n,r}=\left(\bar{\omega}_{\tau n,r}, \bar{\alpha}_{\tau n+,r}, \bar{\alpha}_{\tau n-,r},
(E|\eta_{t}|^{r})^{\delta/r}\bar{\beta}_{\tau n,r}\right),$$
where $\overline{\theta}_{n,r}=(\bar{\omega}_{\tau n,r},\bar{\alpha}_{\tau n+,r},\bar{\alpha}_{\tau n-,r},\bar{\beta}_{\tau n,r})'$ is the hybrid quantile estimator calculated from the data sample, and the true value of
$(E|\eta_{t}|^{r})^{\delta/r}$ is used.

\begin{table}[!htbp]\scriptsize
\caption{\label{alpha_value}The values of the pair $(\alpha_{0+}, \gamma_{0})$ when $\alpha_{0-}=0.15, \beta_{0}=0.9$}
\addtolength{\tabcolsep}{-5.5pt}
 \renewcommand{\arraystretch}{1.1}
\begin{tabular}{cccccccccccccccccc}


\hline
\multicolumn{8}{c}{$\delta=2$} & &
\multicolumn{8}{c}{$\delta=1$}\\
\cline{1-8}\cline{10-17}
\multicolumn{2}{c}{$\eta_{t}\sim N(0, 1)$} & &
\multicolumn{2}{c}{$\eta_{t}\sim \mathrm{st}_{5}$} & &
\multicolumn{2}{c}{$\eta_{t}\sim \mathrm{st}_{3}$} & &
\multicolumn{2}{c}{$\eta_{t}\sim N(0, 1)$} & &
\multicolumn{2}{c}{$\eta_{t}\sim \mathrm{st}_{5}$} & &
\multicolumn{2}{c}{$\eta_{t}\sim \mathrm{st}_{3}$}\\

\cline{1-2}\cline{4-5}\cline{7-8}\cline{10-11}\cline{13-14}\cline{16-17}

\multicolumn{1}{c}{$\alpha_{0+}$} &\multicolumn{1}{c}{$\gamma_{0}$} &
& \multicolumn{1}{c}{$\alpha_{0+}$} & \multicolumn{1}{c}{$\gamma_{0}$} &
& \multicolumn{1}{c}{$\alpha_{0+}$} & \multicolumn{1}{c}{$\gamma_{0}$} &
&\multicolumn{1}{c}{$\alpha_{0+}$} &\multicolumn{1}{c}{$\gamma_{0}$} &
& \multicolumn{1}{c}{$\alpha_{0+}$} & \multicolumn{1}{c}{$\gamma_{0}$} &
& \multicolumn{1}{c}{$\alpha_{0+}$} & \multicolumn{1}{c}{$\gamma_{0}$}\\

\hline

0.05  & -0.0104& & 0.05 & -0.0152 & & 0.05  &-0.0226 && 0.05 & -0.0233 && 0.05 & -0.0261 && 0.05 &-0.0286\\

0.07224697 &0.0000 & &0.09206513  & 0.0000 & &  0.1516561 &0.0000 && 0.1083685 & 0.0000 && 0.1332366&0.0000
&& 0.1830638 &0.0000\\

0.2  &0.0517 & & 0.2  &0.0330 & & 0.2  &0.0091 & &0.2  &0.0337 & & 0.2  &0.0192 & & 0.2  &0.0034\\

\hline

\end{tabular}
\end{table}

Tables \ref{gqmle2}  and \ref{gqmle3} report the
bias, the empirical standard deviation (ESD) and the asymptotic standard deviation (ASD) of $\hat{\theta}_{\tau n,r}$ for the cases of $\delta=2$ and $\delta=1$, respectively. In this section, since the results for $\eta_{t}\sim \mathrm{st}_3$ are similar, they are not reported here for saving space. From Tables \ref{gqmle2}  and \ref{gqmle3}, our findings are as follows:

\begin{sidewaystable}
\caption{\label{gqmle2}Summary for $\hat{\theta}_{\tau n,r}$ ($\times 10$) when $\delta=2$}
\small\addtolength{\tabcolsep}{-5pt}
\begin{tabular}{lllcccccccccccccccccccc ccccc ccccc}
\hline

& &
& \multicolumn{9}{c}{$\gamma_0<0$} &
& \multicolumn{9}{c}{$\gamma_0=0$} &
& \multicolumn{9}{c}{$\gamma_0>0$} \\

\cline{4-12}\cline{14-22}\cline{24-32}

& &
& \multicolumn{4}{c}{$r=2$} &
& \multicolumn{4}{c}{$r=1$} &
& \multicolumn{4}{c}{$r=2$} &
& \multicolumn{4}{c}{$r=1$} &
& \multicolumn{4}{c}{$r=2$} &
& \multicolumn{4}{c}{$r=1$}\\

\cline{4-7}\cline{9-12}\cline{14-17}\cline{19-22}\cline{24-27}\cline{29-32}

\multicolumn{1}{c}{$\eta_{t}$} & \multicolumn{1}{c}{$n$} &
& $\omega$ & $\alpha_{+}$ & $\alpha_{-}$ & $\beta$ &
& $\omega$ & $\alpha_{+}$ & $\alpha_{-}$ & $\beta$ &
& $\omega$ & $\alpha_{+}$ & $\alpha_{-}$ & $\beta$ &
& $\omega$ & $\alpha_{+}$ & $\alpha_{-}$ & $\beta$ &
& $\omega$ & $\alpha_{+}$ & $\alpha_{-}$ & $\beta$ &
& $\omega$ & $\alpha_{+}$ & $\alpha_{-}$ & $\beta$ \\

\hline\\

\multicolumn{32}{c}{Panel A: $\tau=0.05$}\\

$N(0, 1)$&1000&Bias& -1.36& -0.68& -0.36& 0.62&&-1.29 &-0.55&-0.28&0.40&&-1.17&-0.51&-0.37&0.36&& -0.72&-0.52&-0.34&0.48&&-1.74 &-0.41&-0.24&0.22&&-0.93&-0.16&-0.32&0.16\\
&&ESD&5.63 & 1.81& 2.72&2.33 &&6.20 &1.82&2.88&2.49&&5.45&1.99&2.74&2.23&&5.66&2.11&2.88&2.42&&6.33&3.14&2.65&2.35&&6.10&3.13&2.80&2.45\\
&&ASD& 3.99& 2.11& 2.78& 2.43 &&4.30 &2.14&2.80&2.54&&3.88&2.25&2.75&2.34&&4.26&2.32&2.84&2.45&&5.34&3.09&2.77&2.41&&4.76&3.18&2.85&2.53\\

&2000&Bias&-1.08&-0.37&-0.12&0.29&&-1.35&-0.24&-0.14&0.25&&-1.32&-0.24&-0.13&0.14&&-0.66&-0.21&-0.05&-0.15&&-1.65&-0.22&-0.11&-0.07&&-1.41&-0.09&-0.10&0.03\\
&&ESD&4.76&1.31&2.01&1.69&&5.33&1.35&2.08&1.81&&5.96&1.48&1.95&1.63&&4.94&1.55&1.98&1.67&&6.01&2.27&2.01&1.70&&6.47&2.20&2.01&1.69\\
&&ASD&3.97&1.56&2.02&1.76&&4.24&1.55&2.04&1.83&&4.55&1.67&2.03&1.70&&4.40&1.65&1.99&1.72&&5.05&2.27&2.02&1.73&&5.21&2.28&2.04&1.79\\

$\mbox{st}_{5}$&1000&Bias&-1.72&-0.84&-0.70&0.60&&-1.57&-0.88&-0.35&0.53&&-3.28&-0.85&-0.82&0.17&&-8.57&-0.26&-0.56&0.11&&-4.70&-0.75&-0.63&0.08&&-3.32&-0.43&-0.58&0.15\\
&&ESD&7.72&2.26&3.53&3.18&&6.92&2.43&3.40&2.78&&10.9&2.92&3.34&2.72&&4.77&8.98&3.52&2.96&&21.9&3.96&3.18&2.89&&18.6&4.12&3.31&2.64\\
&&ASD&5.75&2.18&3.34&3.15&&5.05&2.20&3.17&2.89&&6.51&2.69&3.23&2.84&&15.9&8.34&3.19&2.91&&7.35&3.60&3.18&2.83&&6.47&3.55&3.17&2.63\\

&2000&Bias&-1.35&-0.46&-0.33&0.39&&-1.71&-0.40&-0.18&0.35&&-3.12&-0.57&-0.35&0.12&&-1.99&-0.39&-0.19&0.15&&-5.37&-0.56&-0.44&0.07&&-3.24&-0.35&-0.29&0.04\\
&&ESD&6.03&1.51&2.40&2.27&&5.79&1.64&2.37&2.05&&13.2&2.05&2.27&1.96&&10.7&2.06&2.28&1.81&&28.1&2.81&2.51&2.03&&13.8&2.83&2.36&1.82\\
&&ASD&5.56&1.59&2.36&2.29&&4.75&1.59&2.34&2.11&&6.11&1.97&2.35&2.02&&5.19&1.95&2.36&1.89&&7.54&2.70&2.35&2.04&&6.74&2.70&2.39&1.93\\\\

%
%

\multicolumn{32}{c}{Panel B: $\tau=0.1$}\\

$N(0, 1)$&1000&Bias& -0.84&-0.45&-0.25 &0.36&&-0.62 &-0.38&-0.18&0.32 && -1.10&-0.47&-0.35&0.28&&-0.87&-0.40&-0.25&0.26&&-1.18&-0.25&-0.18&0.12&&-0.86&-0.17&-0.09&-0.10\\
&&ESD&3.42& 1.15&1.67&1.56&&3.29 &1.17&1.69&1.57&&3.71&1.31&1.72&1.40&&3.68&1.33&1.75&1.41&&3.79&1.97&1.71&1.53&&3.69&2.00&1.73&1.53\\
&&ASD&3.05&1.38&1.80&1.58&&3.03 &1.38&1.80&1.58&&3.32&1.51&1.82&1.54&&3.35&1.50&1.81&1.54&&3.91&2.00&1.79&1.55&&3.95&2.00&1.79&1.55\\

&2000&Bias&-0.85&-0.25&-0.16&0.28&&-0.64&-0.21&-0.11&0.25&&-0.87&-0.16&-0.14&0.09&&-0.62&-0.12&-0.10&0.07&&-1.34&-0.14&-0.08&0.07&&-0.97&-0.11&-0.04&0.06\\
&&ESD&3.10&0.87&1.27&1.08&&3.04&0.87&1.27&1.07&&3.36&0.95&1.30&1.05&&3.19&0.95&1.30&1.06&&3.96&1.39&1.21&1.09&&3.86&1.41&1.22&1.11\\
&&ASD&2.83&0.10&1.30&1.13&&2.83&0.99&1.30&1.13&&3.05&1.06&1.29&1.08&&3.04&1.06&1.29&1.08&&3.94&1.42&1.28&1.10&&3.94&1.42&1.28&1.10\\

$\mbox{st}_{5}$&1000&Bias&-1.62&-0.43&-0.29&0.31&&-1.36&-0.33&-0.16&0.34&&-1.65&-0.50&-0.36&0.06&&-1.32&-0.39&-0.22&0.12&&-2.66&-0.34&-0.39&-0.05&&-1.96&-0.24&-0.23&0.04\\
&&ESD&4.42&1.03&1.55&1.68&&4.38&1.00&1.54&1.52&&4.57&1.41&1.56&1.47&&4.60&1.38&1.55&1.31&&8.28&1.95&1.67&1.56&&7.47&1.93&1.63&1.40\\
&&ASD&3.21&1.15&1.67&1.62&&2.95&1.13&1.65&1.47&&3.56&1.41&1.67&1.49&&3.31&1.40&1.66&1.35&&4.27&1.90&1.67&1.51&&4.14&1.88&1.64&1.35\\

&2000&Bias&-1.09&-0.23&-0.20&0.19&&-0.85&-0.18&-0.11&0.19&&-1.72&-0.19&-0.17&0.05&&-1.20&-0.12&-0.21&0.06&&-2.86&-0.22&-0.19&0.03&&2.13&-0.03&-0.11&-0.01\\
&&ESD&3.05&0.75&1.23&1.19&&3.02&0.79&1.26&1.05&&6.24&0.98&1.17&1.07&&4.19&0.94&1.19&0.98&&8.59&1.35&1.14&1.06&&7.76&1.33&1.19&0.10\\
&&ASD&2.94&0.82&1.22&1.20&&2.70&0.82&1.21&1.09&&3.70&0.10&1.21&1.07&&3.29&0.99&1.21&0.96&&4.25&1.37&1.20&1.08&&4.39&1.35&1.20&0.97\\

%
%


\hline
\end{tabular}
\end{sidewaystable}

\begin{sidewaystable}
\caption{\label{gqmle3}Summary for $\hat{\theta}_{\tau n,r}$ ($\times 10$) when $\delta=1$}
\small\addtolength{\tabcolsep}{-5pt}
\begin{tabular}{lllcccccccccccccccccccc ccccc ccccc}

\hline

& &
& \multicolumn{9}{c}{$\gamma_0<0$} &
& \multicolumn{9}{c}{$\gamma_0=0$} &
& \multicolumn{9}{c}{$\gamma_0>0$} \\

\cline{4-12}\cline{14-22}\cline{24-32}

& &
& \multicolumn{4}{c}{$r=2$} &
& \multicolumn{4}{c}{$r=1$} &
& \multicolumn{4}{c}{$r=2$} &
& \multicolumn{4}{c}{$r=1$} &
& \multicolumn{4}{c}{$r=2$} &
& \multicolumn{4}{c}{$r=1$}\\

\cline{4-7}\cline{9-12}\cline{14-17}\cline{19-22}\cline{24-27}\cline{29-32}

\multicolumn{1}{c}{$\eta_{t}$} & \multicolumn{1}{c}{$n$} &
& $\omega$ & $\alpha_{+}$ & $\alpha_{-}$ & $\beta$ &
& $\omega$ & $\alpha_{+}$ & $\alpha_{-}$ & $\beta$ &
& $\omega$ & $\alpha_{+}$ & $\alpha_{-}$ & $\beta$ &
& $\omega$ & $\alpha_{+}$ & $\alpha_{-}$ & $\beta$ &
& $\omega$ & $\alpha_{+}$ & $\alpha_{-}$ & $\beta$ &
& $\omega$ & $\alpha_{+}$ & $\alpha_{-}$ & $\beta$ \\

\hline\\

\multicolumn{32}{c}{Panel A: $\tau=0.05$}\\

$N(0, 1)$&1000&Bias&-0.32&-0.67&-0.56&0.55& &-0.11&-0.59&-0.44&0.45&&-0.38&-0.55&-0.51&0.41&&-0.31&-0.44&-0.39&0.35&&-0.85&-0.40&-0.41&0.29&&-0.57&-0.25&-0.23&0.23\\

&&ESD&1.53&0.93&1.23&0.96&&1.51&0.93&1.25&0.95&&1.53&1.08&1.29&0.87&&1.63&1.12&1.24&0.91&&1.87&1.38&1.26&0.95&&1.93&1.39&1.22&0.96\\

&&ASD&1.28&1.23&1.36&1.00&&1.33&1.22&1.36&1.02&&1.49&1.30&1.35&0.97&&1.60&1.29&1.36&0.99&&2.01&1.43&1.36&1.00&&1.97&1.41&1.35&1.01\\

&2000&Bias&-0.46&-0.40&-0.32&-0.39&&-0.28&-0.39&-0.29&0.34&&-0.45&-0.26&-0.21&0.21&&-0.37&-0.20&-0.19&0.16&&-0.75&-0.20&-0.20&0.16&&-0.50&-0.12&-0.16&0.14\\

&&ESD&1.76&0.69&0.91&0.78&&1.65&0.74&0.94&0.79&&1.52&0.84&0.89&0.64&&1.66&0.82&0.93&0.68&&1.94&1.04&0.93&0.69&&1.76&1.02&0.93&0.72\\

&&ASD&1.28&0.88&0.98&0.76&&1.29&0.88&0.98&0.77&&1.51&0.93&0.97&0.70&&1.50&0.93&0.97&0.71&&1.82&1.02&0.97&0.71&&1.84&1.01&0.97&0.72\\

$\mbox{st}_{5}$&1000&Bias&-0.69&-0.87&-0.66&0.62&&-0.47&-0.64&-0.50&0.51&&-1.14&-0.60&-0.61&0.31&&-0.74&-0.56&-0.55&0.34&&-1.23&-0.55&-0.52&0.23&&-1.00&-0.24&-0.40&0.21\\
&&ESD&2.34&1.14&1.55&1.32&&2.46&1.20&1.56&1.34&&2.76&1.51&1.54&1.09&&2.79&1.57&1.53&1.07&&2.82&1.66&1.56&1.08&&3.05&1.65&1.48&1.09\\
&&ASD&1.71&1.46&1.67&1.24&&1.73&1.43&1.66&1.22&&2.09&1.61&1.65&1.16&&2.06&1.61&1.65&1.12&&2.53&1.78&1.66&1.18&&2.27&1.71&1.63&1.12\\

&2000&Bias&-0.81&-0.49&-0.38&0.48&&-0.61&-0.40&-0.29&0.42&&-0.97&-0.28&-0.34&0.16&&-0.83&-0.22&-0.24&0.17&&-1.37&-0.23&-0.27&0.12&&-0.88&-0.23&-0.26&0.14\\
&&ESD&2.24&0.84&1.14&1.11&&2.25&0.86&1.14&1.03&&2.42&1.09&1.14&0.82&&2.79&1.13&1.19&0.77&&2.85&1.21&1.11&0.82&&2.85&1.25&1.16&0.81\\
&&ASD&1.70&1.04&1.20&0.97&&1.68&1.04&1.19&0.94&&2.17&1.16&1.19&0.84&&2.10&1.16&1.19&0.80&&2.64&1.26&1.19&0.84&&2.25&1.26&1.19&0.81\\\\

%
%

\multicolumn{32}{c}{Panel B: $\tau=0.1$}\\

$N(0, 1)$&1000&Bias&-0.17&-0.48&-0.43&0.37&&-0.07&-0.40&-0.29&0.28&&-0.33&-0.37&-0.32&0.30&&-0.17&0.25&-0.24&0.18&&-0.59&-0.29&-0.29&0.20&&-0.45&-0.18&-0.25&0.19\\

&&ESD&1.31&0.78&1.05&0.81&&1.26&0.74&1.05&0.79&&1.24&0.92&1.00&0.71&&1.20&0.90&1.00&0.72&&1.40&1.07&1.04&0.76&&1.47&1.12&1.01&0.79\\

&&ASD&1.28&1.02&1.14&0.86&&1.23&1.02&1.13&0.85&&1.44&1.07&1.13&0.81&&1.42&1.08&1.13&0.82&&1.82&1.19&1.13&0.83&&1.77&1.18&1.13&0.83\\

&2000&Bias&-0.26&-0.26&-0.21&0.24&&-0.11&-0.22&-0.21&0.17&&-0.40&-0.20&-0.22&0.16&&-0.33&-0.14&-0.11&0.11&&-0.58&-0.12&-0.11&0.08&&-0.41&-0.09&-0.07&0.06\\

&&ESD&1.30&0.57&0.76&0.63&&1.19&0.57&0.78&0.62&&1.27&0.69&0.74&0.53&&1.27&0.70&0.72&0.54&&1.43&0.82&0.76&0.57&&1.43&0.81&0.76&0.57\\

&&ASD&1.25&0.73&0.81&0.65&&1.22&0.73&0.81&0.65&&1.52&0.77&0.81&0.58&&1.43&0.77&0.80&0.58&&1.83&0.84&0.81&0.59&&1.74&0.84&0.81&0.59\\

$\mbox{st}_{5}$&1000&Bias&-0.44&-0.49&-0.46&0.36&&-0.21&-0.42&-0.30&0.31&&-0.57&-0.41&-0.41&0.24&&-0.51&-0.27&-0.30&0.19&&-0.87&-0.35&-0.38&0.16&&-0.72&-0.19&-0.27&0.13\\
&&ESD&1.59&0.78&1.09&0.96&&1.40&0.81&1.07&0.86&&1.61&1.03&1.09&0.80&&1.62&1.01&1.06&0.72&&1.81&1.15&1.07&0.80&&2.18&1.17&1.08&0.76\\
&&ASD&1.43&1.02&1.18&0.92&&1.34&1.01&1.17&0.86&&1.66&1.14&1.16&0.83&&1.63&1.12&1.16&0.78&&2.03&1.23&1.17&0.84&&1.86&1.22&1.16&0.79\\

&2000&Bias&-0.43&-0.24&-0.20&0.23&&-0.33&-0.27&-0.21&0.23&&-0.63&-0.17&-0.20&0.08&&-0.48&-0.16&-0.15&0.10&&-0.79&-0.19&-0.17&0.07&&-0.52&-0.13&-0.11&0.07\\
&&ESD&1.45&0.57&0.81&0.75&&1.48&0.63&0.84&0.72&&1.67&0.75&0.82&0.58&&1.53&0.79&0.79&0.53&&1.69&0.88&0.75&0.60&&1.65&0.83&0.81&0.53\\
&&ASD&1.43&0.73&0.84&0.74&&1.30&0.73&0.83&0.68&&1.64&0.81&0.83&0.59&&1.61&0.81&0.83&0.56&&1.96&0.88&0.83&0.60&&1.81&0.88&0.83&0.56\\

%
%

\hline
\end{tabular}
\end{sidewaystable}

  (a1) The biases of all parameters become small as the sample size $n$ increases, except when $\gamma_0\geq 0$, the estimators of $\omega$ have relatively large biases as expected.
  For each distribution of $\eta_t$, the biases of $\hat{\theta}_{\tau n,r}$ with $r=1$ (or $\tau=0.1$) are generally smaller than those of $\hat{\theta}_{\tau n,r}$ with $r=2$  (or $\tau=0.05$). For each estimator, its biases (in absolute value) in the case of $\eta_{t}\sim\mathrm{st}_5$ tend to be smaller than those in the case of $\eta_{t}\sim N(0, 1)$.

  (a2) The ESDs and ASDs of the parameter $\vartheta$ are close in all cases, while
  the ESDs and ASDs of the parameter $\omega$ have a relatively large disparity as expected. As the sample size
  $n$ increases, the ESDs and ASDs of all parameters become small. For each distribution of $\eta_t$,
  the ASDs of $\hat{\theta}_{\tau n,r}$ seem robust to the choices of $r$,
  and they become large as the value of $\tau$ decreases. For each estimator, its ASDs
  in the case of $\eta_{t}\sim\mathrm{st}_5$ are generally larger than those in the case of $\eta_{t}\sim N(0, 1)$, except for
  $\delta=2$ and $\tau=0.1$.

Note that all of the aforementioned findings are invariant, regardless of the power index $\delta$ and the sign of $\gamma_0$.
In summary, our quantile estimator $\hat{\theta}_{\tau n,r}$ has a good finite sample performance, which is robust to the choice of
$r$. Particulary, its performance tends to be even better, when $\eta_t$ is more light-tailed or the value of $\tau$ is larger.

\subsection{Simulation studies for the tests}
In this subsection, we first assess the performance of the strict stationarity test $\hat{T}_{r}$.
We generate 1000 replications from model (\ref{5.1}) with the same settings for $\delta$ and $\eta_t$, except that the values of $\alpha_{0+}$ are chosen as in Table \ref{test1}. We apply $\hat{T}_{r}$ with $r=2$ and $1$ to both testing problems (\ref{4.1}) and (\ref{4.2}) at the significance level of
5\%, and obtain the following findings:

(b1) The size of $\hat{T}_{r}$ is controlled by the level of 5\% in general, though there is some over-sized risk for the
testing problem (\ref{4.2}) when the sample size $n$ is not large enough. This is also observed in Francq and Zako\"{i}an (2012, 2013a).

(b2) The power of  $\hat{T}_{r}$ is satisfactory, and it increases with the sample size $n$.
Also, $\hat{T}_{r}$ is more powerful when the tail of $\eta_{t}$ is thinner.
But the choice of
$r$ has a negligible effect on the power of $\hat{T}_{r}$. This may be because the asymptotic variance of $\tilde{\gamma}_{n,r}$ in (\ref{4.4})
does not depend on $r$.

Next, we assess the performance of asymmetry tests $\hat{S}_{1,r}$ and  $\hat{S}_{2,r}^{(\tau)}$.
As before, we generate 1000 replications from model (\ref{5.1}) with the same settings for $\delta$ and $\eta_t$, except that the values of $\alpha_{0+}$ are chosen to be $\{0.01, 0.03, \cdots, 0.27, 0.29\}$.
We apply $\hat{S}_{1,r}$ and  $\hat{S}_{2,r}^{(\tau)}$ (with $\tau=0.05$ and $0.1$)
to the testing problem (\ref{4.6})
at the significance level of
5\%. Figs.\,\ref{figure1} and \ref{figure2} plot the power of
$\hat{S}_{1,r}$ and  $\hat{S}_{2,r}^{(\tau)}$ for $r=1$ with $\eta_{t}\sim N(0, 1)$ and $\mbox{st}_{5}$, respectively.
Since the results for $r=2$ are similar, we do not show them here for saving the space. Our findings are as follows:

(c1) All three tests have precise sizes even when $n$ is not large.

(c2) The power of all three tests increases when the value of $\alpha_{0+}$ moves away from $0.15$, and the global test $\hat{S}_{1,r}$ is more powerful than the two local tests  $\hat{S}_{2,r}^{(\tau)}$. Both local tests $\hat{S}_{2,r}^{(\tau)}$ are more powerful for $\delta=1$ than for $\delta=2$.
When $\eta_t\sim N(0, 1)$,  $\hat{S}_{2,r}^{(\tau)}$ with $\tau=0.05$ is more powerful than  $\hat{S}_{2,r}^{(\tau)}$ with $\tau=0.1$, while
when $\eta_t\sim \mbox{st}_{5}$, the opposite conclusion is obtained.

Overall, all our proposed tests have a good performance especially for large $n$.

\begin{table}
\caption{\label{test1}Power ($\times$100) of $\hat{T}_{r}$ at the significance level $5\%$}
\small\addtolength{\tabcolsep}{-0.5pt}
\renewcommand{\arraystretch}{0.85}
\begin{tabular}{ccccccccccccc}

\hline

\multicolumn{12}{c}{Panel A: $\delta=2$}\\

&&&&&\multicolumn{7}{c}{$\alpha_{0+}$}\\
\cline{6-12}

$\eta_t$&$H_0$&$r$&&$n$ &0.01&0.03&0.05&0.07224697&0.09&0.11&0.13\\
\hline

$N(0, 1)$&$(\ref{4.1})$&2 &&1000&0.0&0.0&0.0 & \textbf{7.6}&53.7 &96.8&99.8\\
&&&&2000&0.0&0.0&0.0 & \textbf{6.3}& 80.4&100&100\\
&&&&4000&0.0&0.0&0.0 & \textbf{5.8}&97.3 &100&100\\

&&1&&1000&0.0&0.0&0.0 & \textbf{6.4}& 54.0&96.2&100\\
&&&&2000&0.0&0.0& 0.0& \textbf{5.4}&79.3 &100&100\\
&&&&4000&0.0&0.0&0.0 & \textbf{5.0}& 96.9&100&100\\

&$(\ref{4.2})$&2&&1000&100&99.3&78.1&\textbf{14.1}&0.6 & 0.0&0.6 \\
&&&&2000&100&100&93.7&\textbf{11.5}&9.4 &0.0 &0.0 \\
&&&&4000&100&100&99.8&\textbf{10.2}&0.0 &0.0 &0.0 \\

&&1&&1000&100&98.5&77.3&\textbf{16.7}&0.5 & 0.0&0.0 \\
&&&&2000&100&100&93.7&\textbf{13.8}& 0.0&0.0 &0.0 \\
&&&&4000&100&100&99.6&\textbf{8.4}&0.0 &0.0 &0.0 \\\\

&&&&&\multicolumn{7}{c}{$\alpha_{0+}$}\\
\cline{6-12}
$\eta_t$&$H_0$&$r$&&$n$&0.03&0.05&0.07&0.09206513&0.11&0.13&0.15\\
\hline

$\mbox{st}_5$&$(\ref{4.1})$&2 &&1000&0.0&0.0 &0.5 & \textbf{6.3}&35.0 &76.9&96.3\\
&&&&2000&0.0&0.0 &0.0 & \textbf{5.6} &52.9&95.1&99.9\\
&&&&4000&0.0& 0.0&0.0& \textbf{5.3} &78.2&99.0&100\\

&&1&&1000&0.0& 0.0&0.1& \textbf{6.3} &34.6&74.5&95.4\\
&&&&2000&0.0&0.0 &0.0& \textbf{5.8} &54.8&95.3&100\\
&&&&4000&0.0& 0.0&0.0& \textbf{5.1} &73.5&99.8&100\\

&$(\ref{4.2})$&2&&1000&98.8&90.7&58.5& \textbf{17.9} & 3.6&0.5 &0.0 \\
&&&&2000&100&98.3&75.4& \textbf{13.7} &0.8 &0.0 &0.0 \\
&&&&4000&100&100&92.3&\textbf{12.7} &0.1 &0.0 &0.0 \\

&&1&&1000&99.6&99.3&60.9&\textbf{16.7}&1.9 &0.1 &0.0 \\
&&&&2000&100&99.5&79.1&\textbf{13.9} &0.4 &0.0 &0.0 \\
&&&&4000&100&99.9&94.3&\textbf{10.0} &0.0 &0.0 &0.0 \\\\

\multicolumn{12}{c}{Panel B: $\delta=1$}\\

&&&&&\multicolumn{7}{c}{$\alpha_{0+}$}\\
\cline{6-12}

$\eta_t$&$H_0$&$r$&&$n$ &0.05&0.07&0.09&0.1083685&0.13&0.15&0.17\\
\hline

$N(0, 1)$&$(\ref{4.1})$&2 &&1000&0.0&0.0 &0.0& \textbf{6.8}&94.1&100&100\\
&&&&2000&0.0 &0.0&0.0& \textbf{5.5} &99.6&100&100\\
&&&&4000& 0.0& 0.0&0.0 &\textbf{4.8}&100&100&100\\

&&1&&1000&0.0 &0.0&0.0& \textbf{7.2}&93.8 &100&100\\
&&&&2000&0.0 &0.0&0.0& \textbf{5.8}&99.8 &100&100\\
&&&&4000&0.0 &0.0&0.0& \textbf{5.1}& 100&100&100\\

&$(\ref{4.2})$&2&&1000&100&99.9&89.5&\textbf{10.8}& 0.1&0.0&0.0\\
&&&&2000&100&100&99.0&\textbf{9.9} &0.0 & 0.0 &0.0\\
&&&&4000&100&100&100 &\textbf{7.7} &0.0 &0.0&0.0\\

&&1&&1000&100&100&90.5&\textbf{11.9}&0.0 &0.0&0.0\\
&&&&2000&100&100&99.2&\textbf{10.2}&0.0 & 0.0&0.0\\
&&&&4000&100&100&100 & \textbf{7.9}&0.0 &0.0&0.0\\\\

&&&&&\multicolumn{7}{c}{$\alpha_{0+}$}\\
\cline{6-12}
$\eta_t$&$H_0$&$r$&&$n$&0.07&0.09&0.11&0.1332366&0.15&0.17&0.19\\
\hline

$\mbox{st}_5$&$(\ref{4.1})$&2 &&1000&0.0&0.0&0.0& \textbf{8.3}&62.9&98.6&100\\
&&&&2000&0.0&0.0&0.0& \textbf{7.4}& 84.4&100&100\\
&&&&4000&0.0&0.0&0.0& \textbf{5.6}&99.0 &100&100\\

&&1&&1000&0.0&0.0&0.0 & \textbf{7.4}&63.5 &98.8&100\\
&&&&2000& 0.0 &0.0&0.0&\textbf{6.3}&86.8&100&100\\
&&&&4000&0.0 &0.0&0.0& \textbf{4.5}&99.2&100&100\\

&$(\ref{4.2})$&2&&1000&99.9&99.5&83.9 &\textbf{12.6}  &0.4 &0.0&0.0\\
&&&&2000&100&100&97.9&\textbf{11.1}& 0.0 &0.0&0.0\\
&&&&4000&100&100&100&\textbf{10.1}&0.0&0.0&0.0  \\

&&1&&1000&100&99.7 &88.0&\textbf{14.9}&  0.4&0.0&0.0\\
&&&&2000&100&100 &98.7&\textbf{11.8}& 0.0 &0.0&0.0\\
&&&&4000&100&100& 99.9&\textbf{9.0}&0.0 &0.0& 0.0 \\

\hline
\end{tabular}
  \begin{tablenotes}
    \item {\small $\dag$ The size of $\hat{T}_{r}$  is in boldface.}
 \end{tablenotes}
\end{table}

\begin{figure}[!htbp]
\begin{center}
\includegraphics[width=10pc,height=10pc]{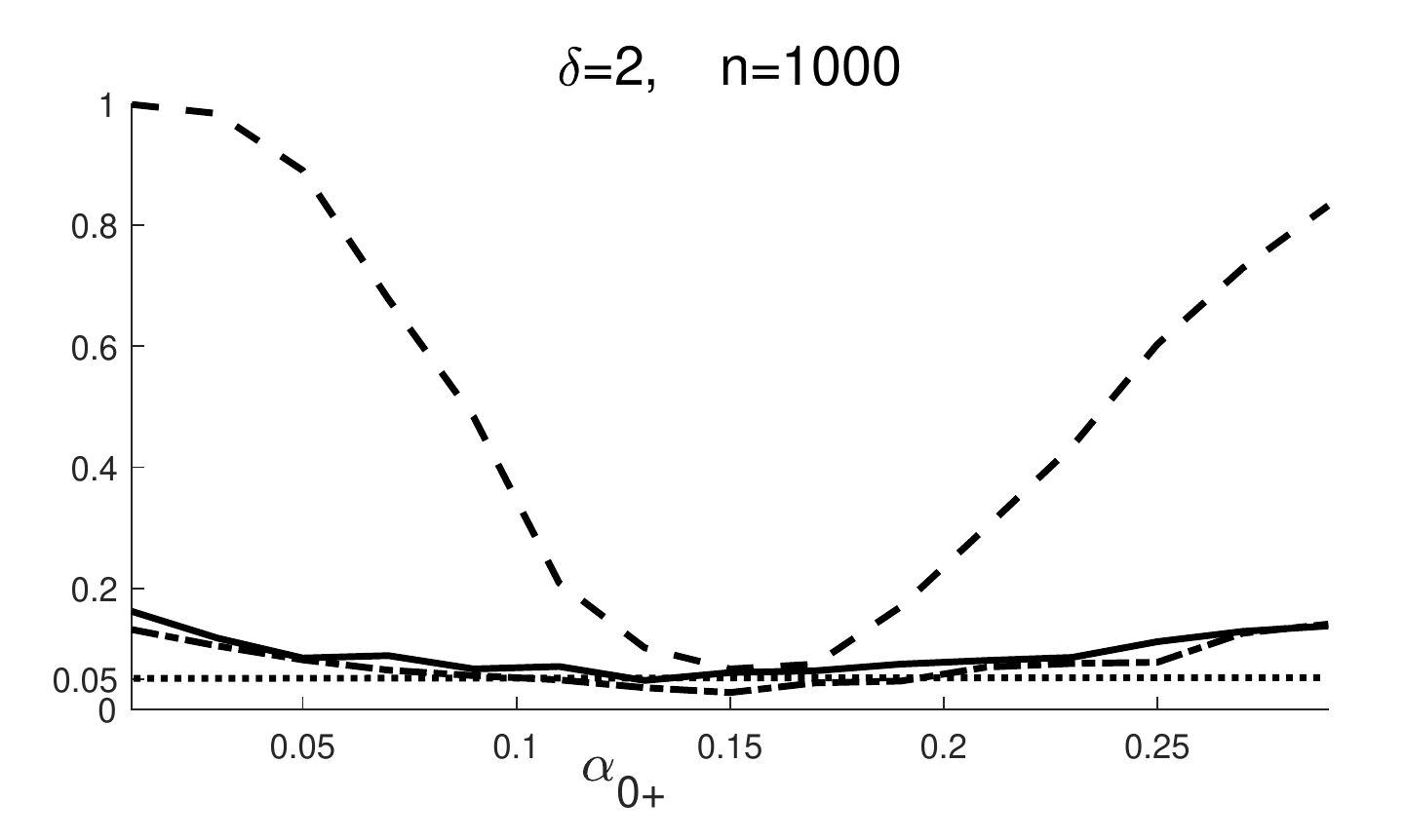}\includegraphics[width=10pc,height=10pc]{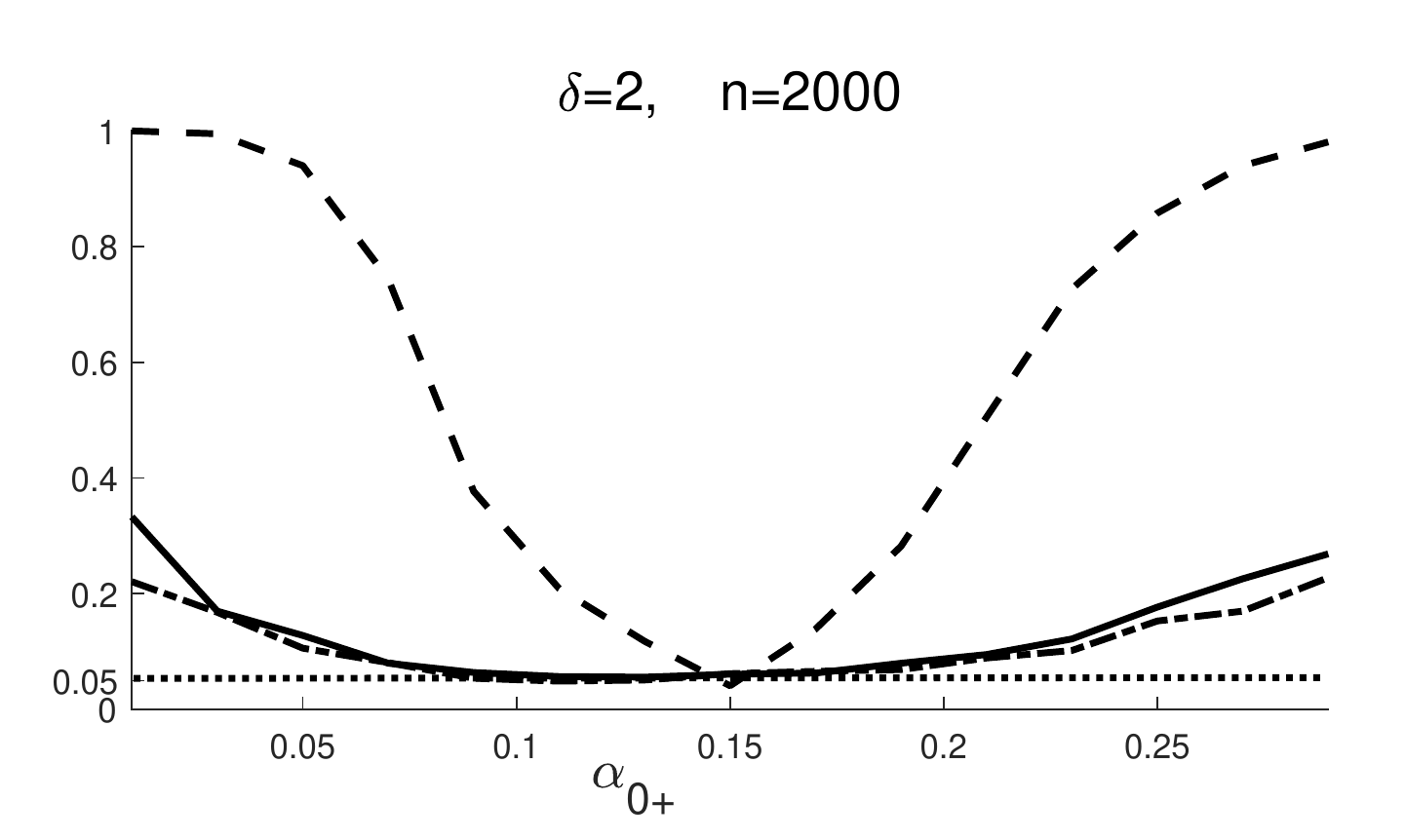}\includegraphics[width=10pc,height=10pc]{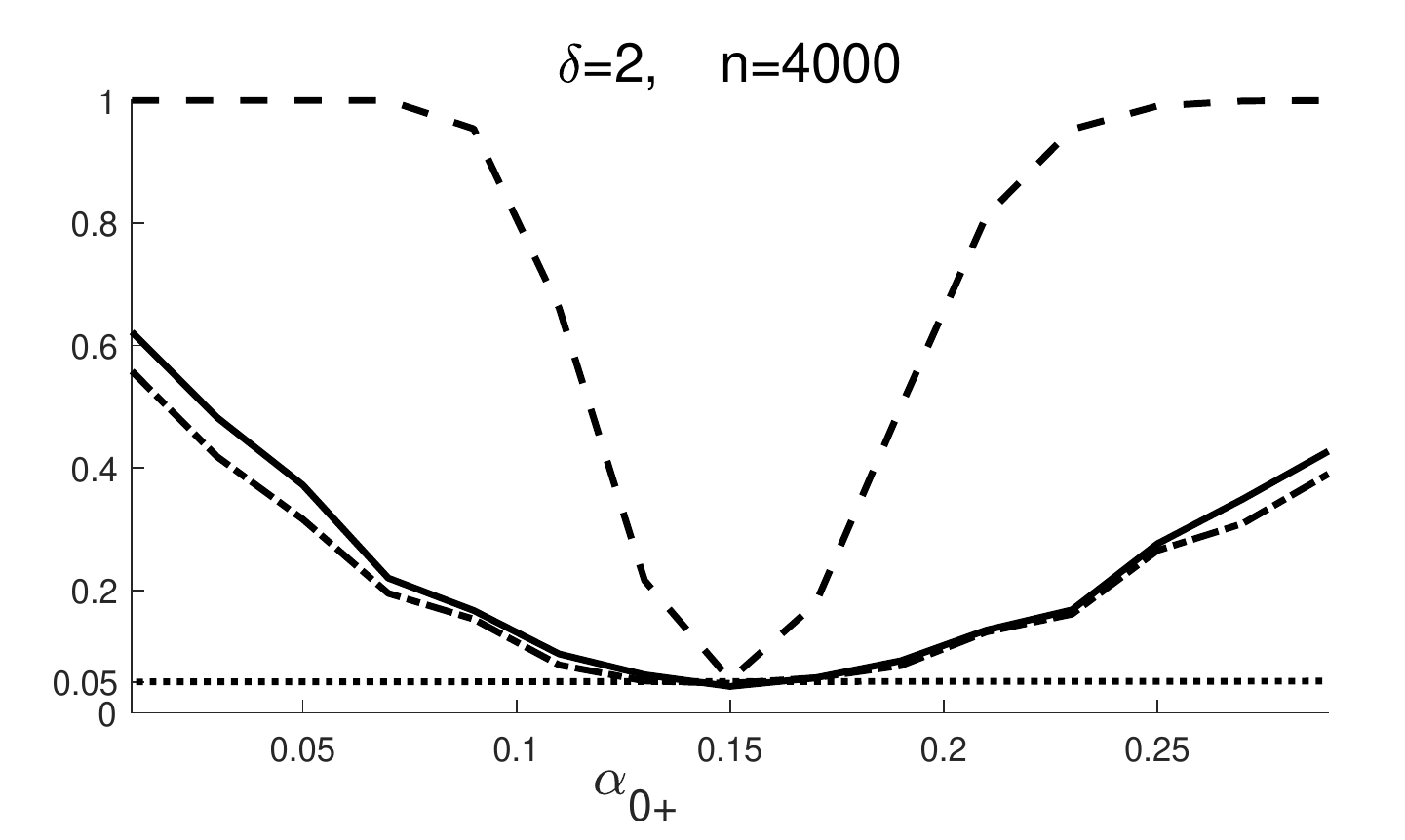}\\
\includegraphics[width=10pc,height=10pc]{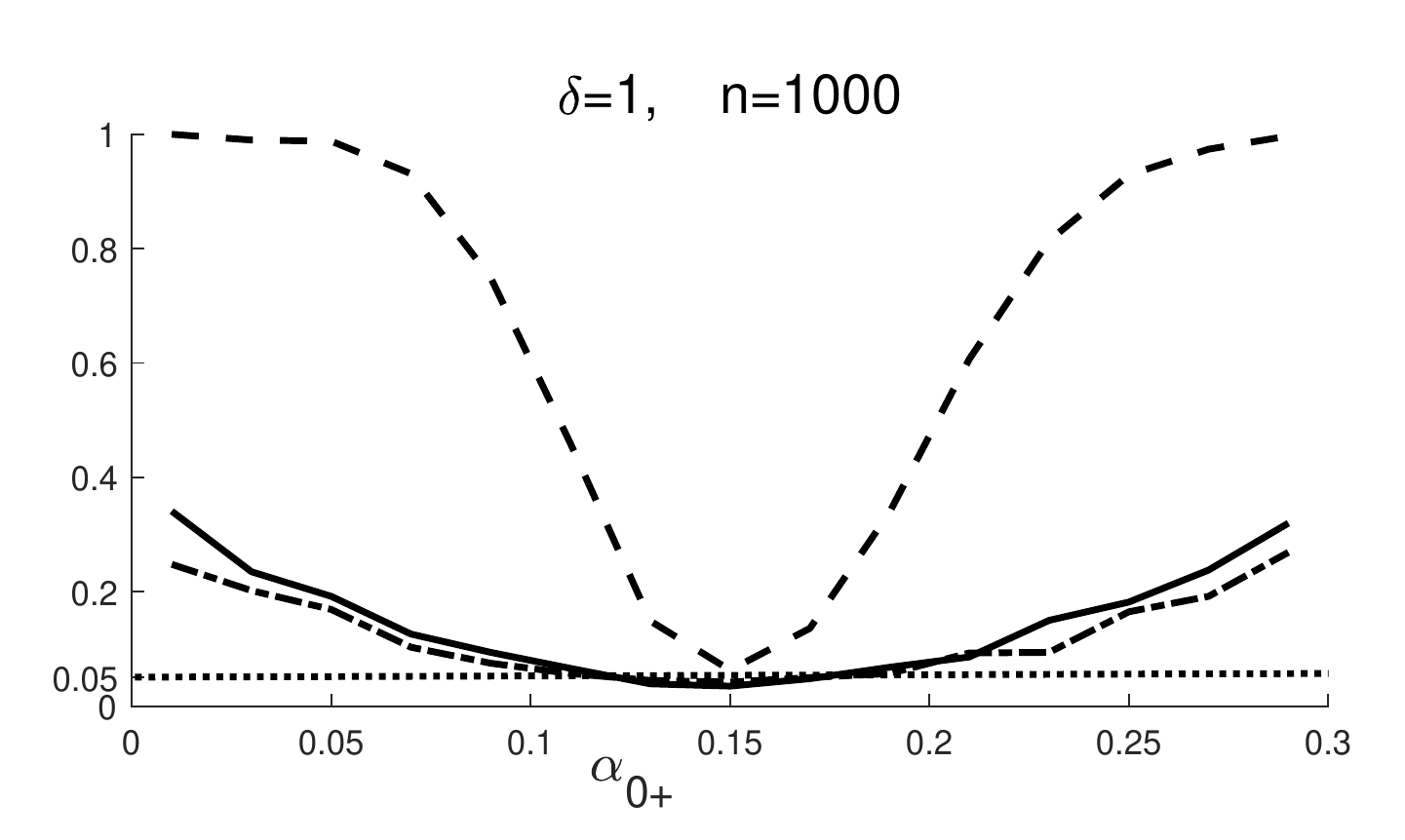}\includegraphics[width=10pc,height=10pc]{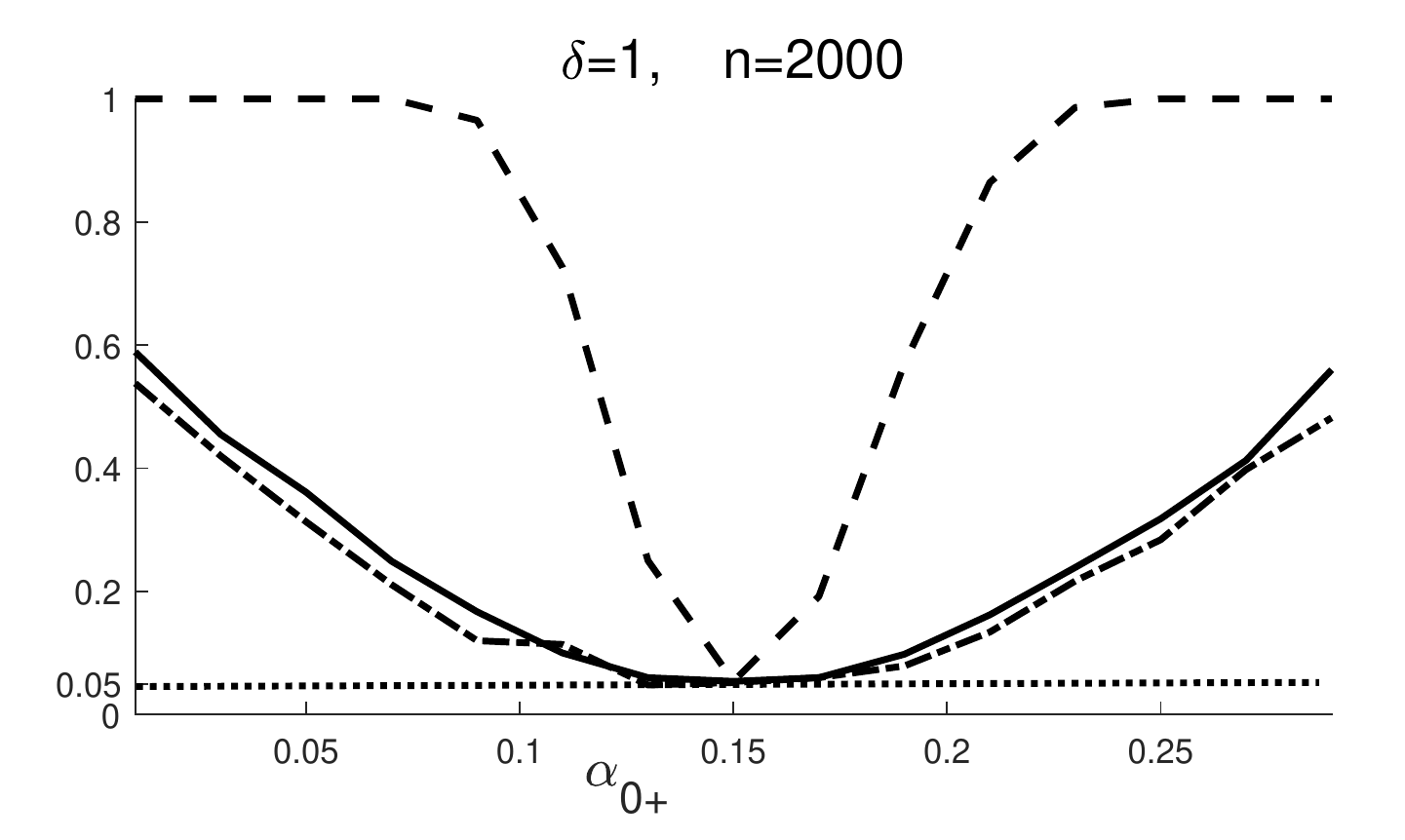}\includegraphics[width=10pc,height=10pc]{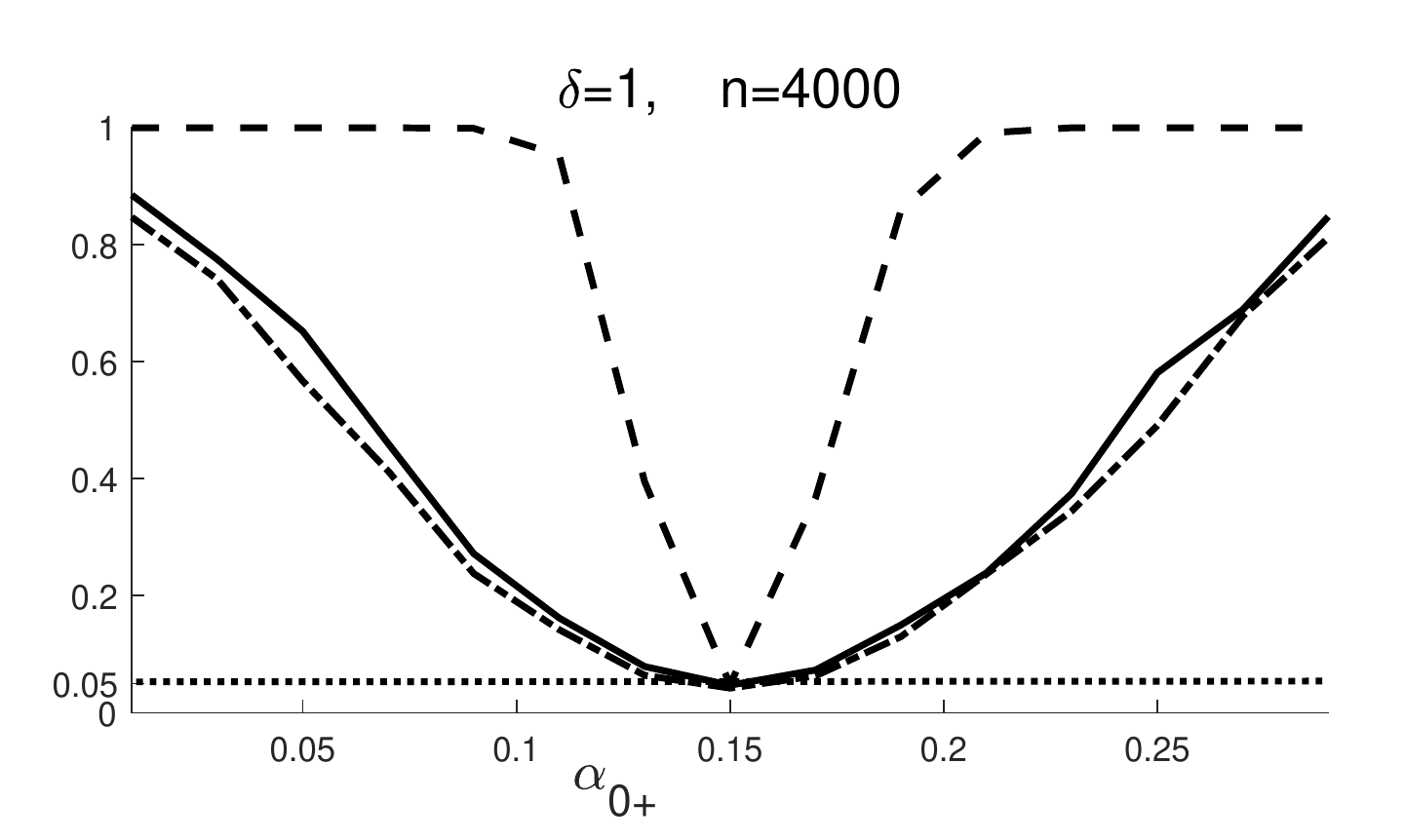}
\caption{\label{figure1}The power for the asymmetric test $\hat{S}_{1,r}$ (dotted line, - -), $\hat{S}_{2,r}^{(\tau_1)}$ (solid line, -)
 and $\hat{S}_{2,r}^{(\tau_2)}$ (solid and dotted line, -. ). Here, $r=1$, $\tau_1=0.05$, $\tau_2=0.1$, and $\eta_t\sim N(0,1)$.}
\end{center}
\end{figure}

\begin{figure}[!htbp]
\begin{center}
\includegraphics[width=10pc,height=10pc]{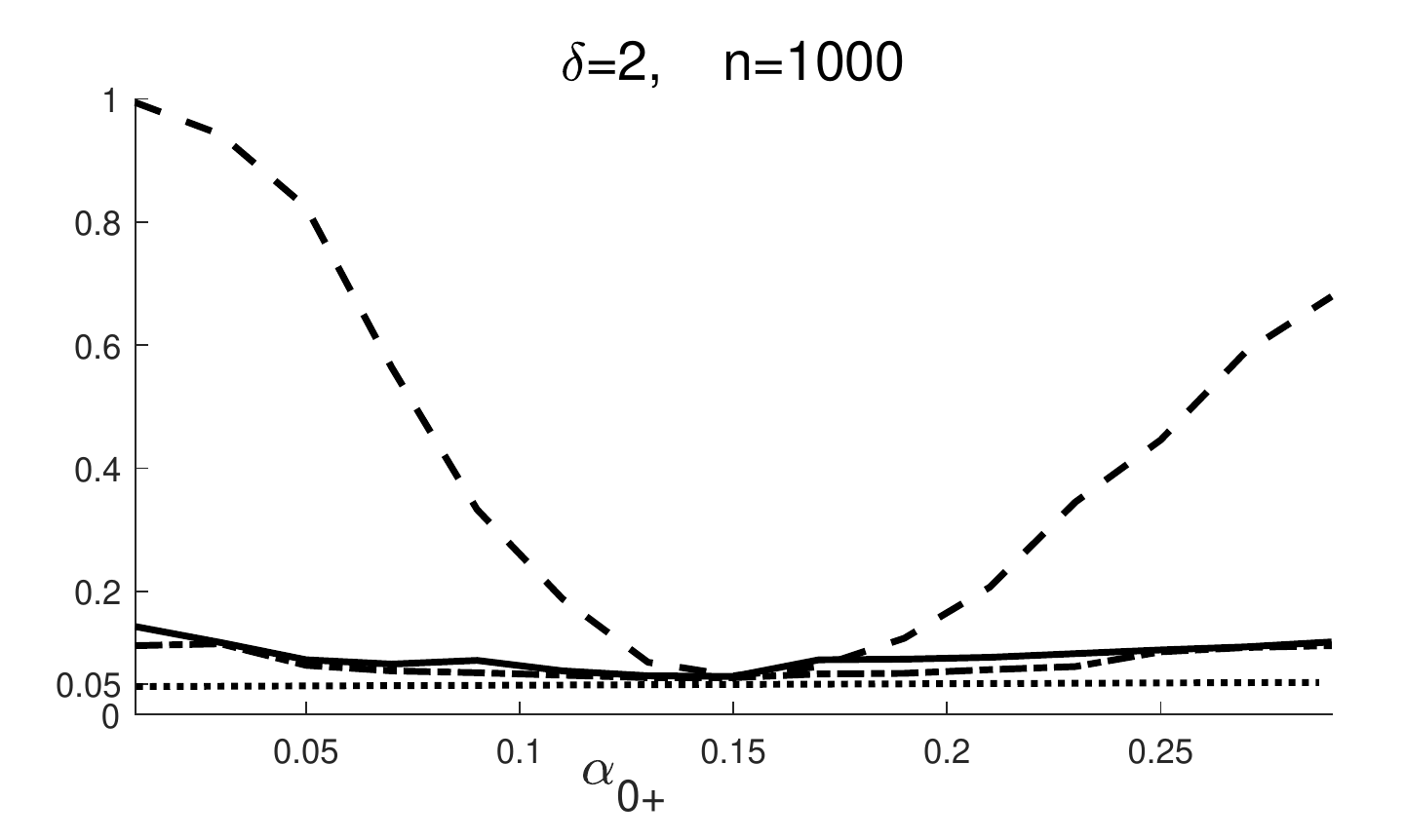}\includegraphics[width=10pc,height=10pc]{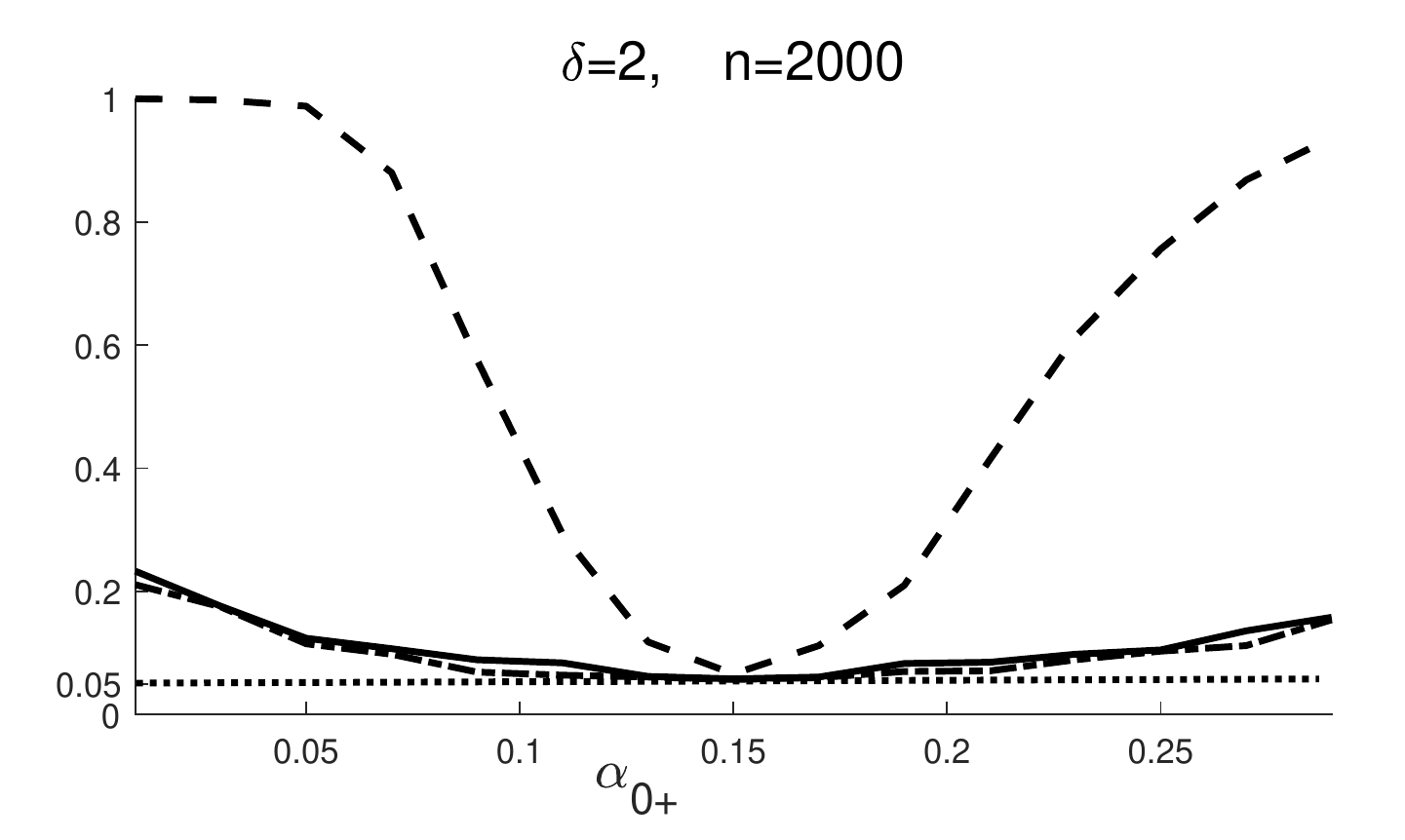}\includegraphics[width=10pc,height=10pc]{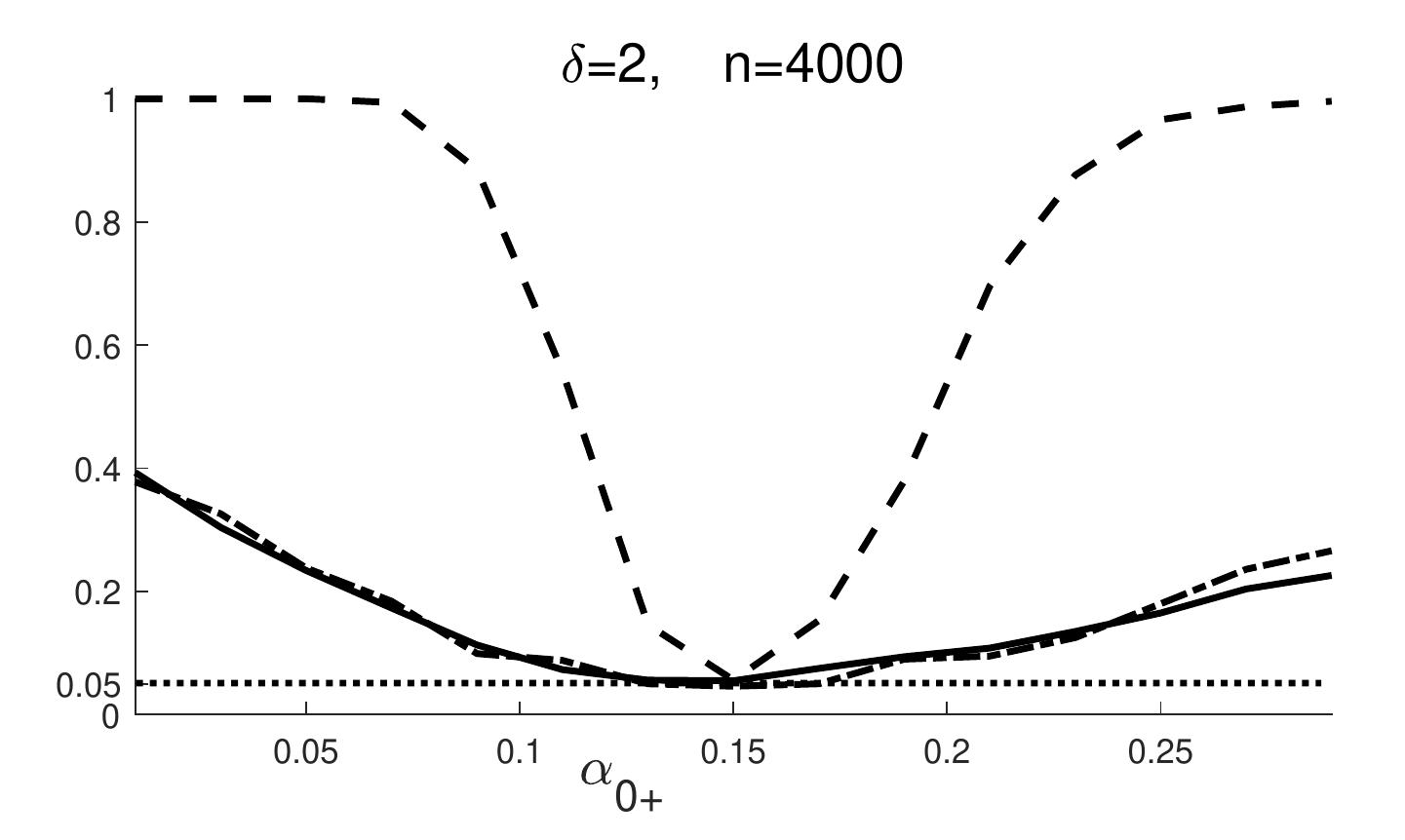}\\
\includegraphics[width=10pc,height=10pc]{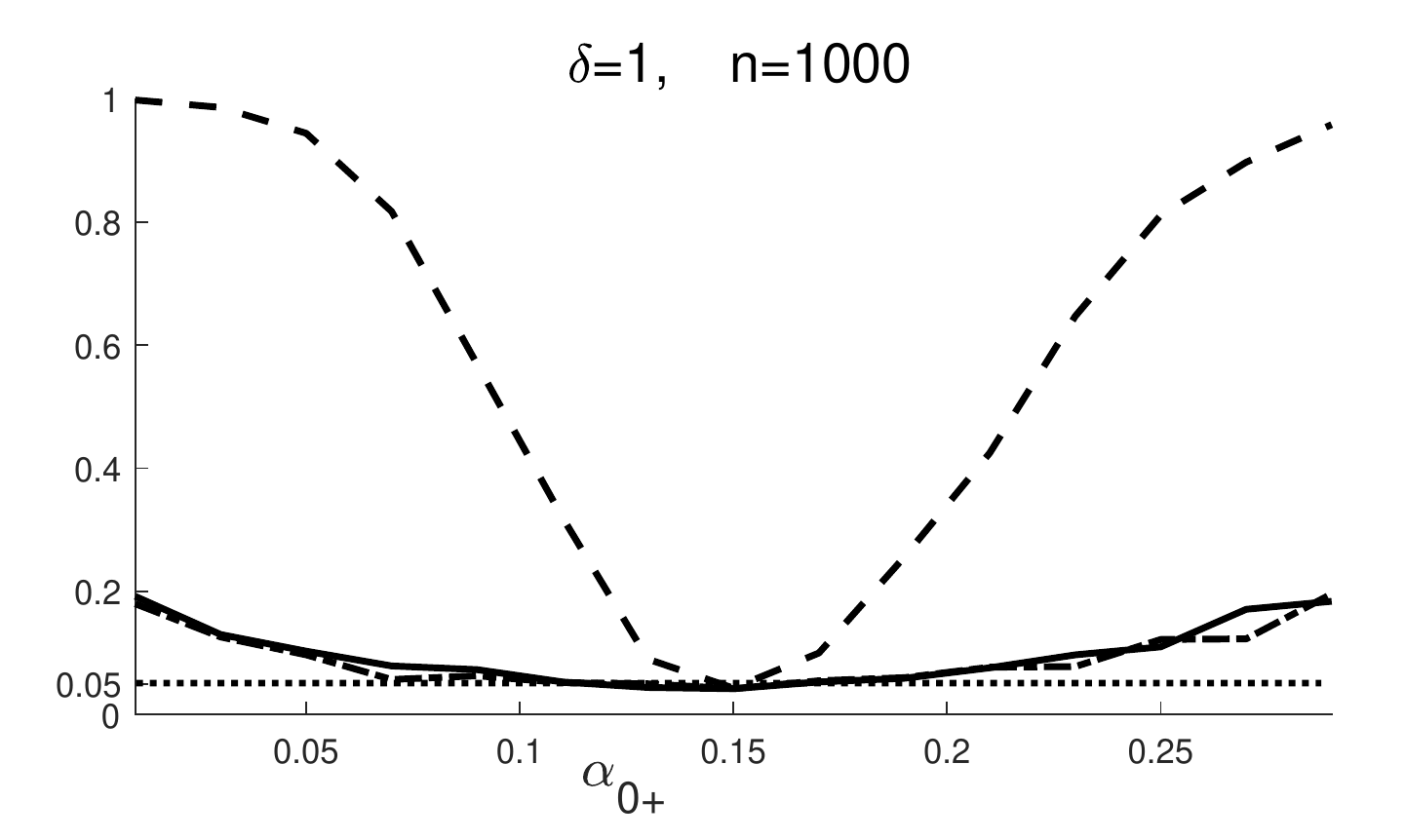}\includegraphics[width=10pc,height=10pc]{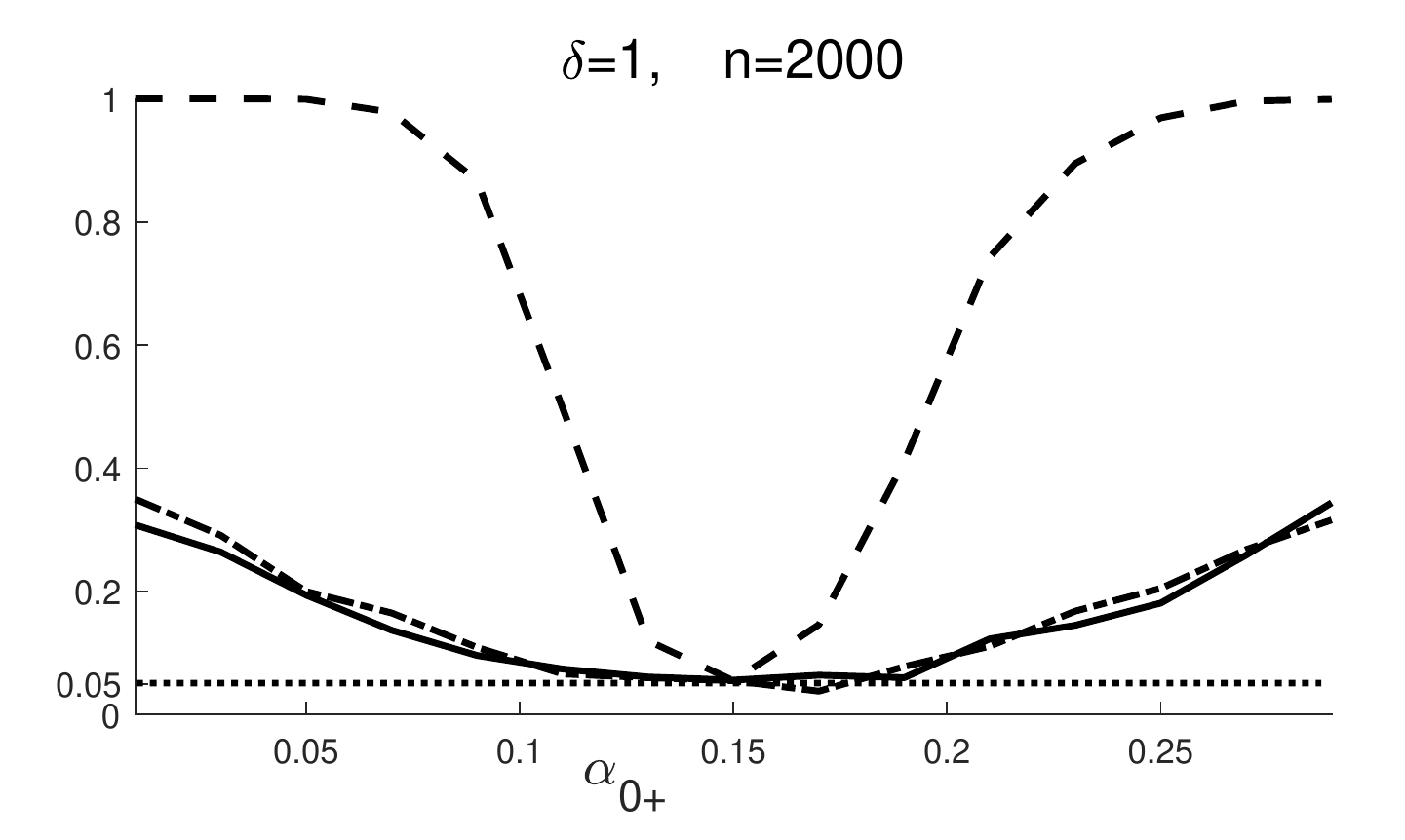}\includegraphics[width=10pc,height=10pc]{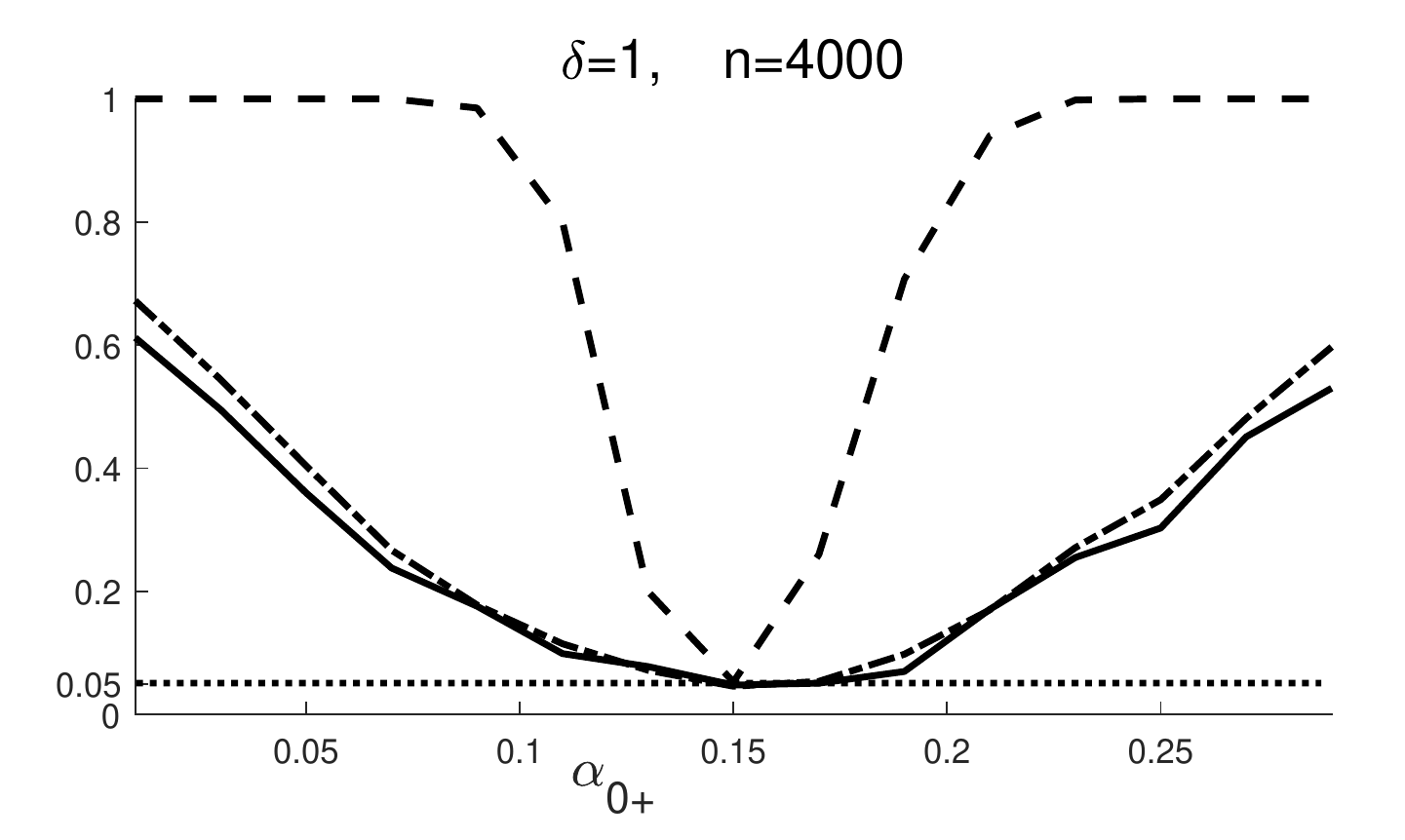}
\caption{\label{figure2}The power for the asymmetric test $\hat{S}_{1,r}$ (dotted line, - -), $\hat{S}_{2,r}^{(\tau_1)}$ (solid line, -)
 and $\hat{S}_{2,r}^{(\tau_2)}$ (solid and dotted line, -. ). Here, $r=1$, $\tau_1=0.05$, $\tau_2=0.1$, and $\eta_t\sim \mbox{st}_5$.}
\end{center}
\end{figure}

\section{Applications}
\subsection{Stationary data}
In this subsection, we re-analyze the daily log returns of two stock market indexes: the S\&P 500 index and the Dow 30 index in Zheng et al. (2018).
The data are observed on a daily basis from January 2, 2008 to June 30, 2016, with a sample size $n=2139$.
Zheng et al. (2018) studied these two datasets by using the classical GARCH($1, 1$) model, whose conditional quantile was
estimated by the hybrid quantile estimator with the Guassian QMLE as its first step estimator.
They found that the resulting method can produce better interval forecast than many existing ones.
Since their GARCH($1, 1$) model overlooks the often observed asymmetry effect in financial data, it is of interest to re-fit
these two  sequences by model (\ref{1.1}).

Based on model (\ref{1.1}) with $\delta=2$ and $1$, Table \ref{estimatedpara} gives the estimation results for both sequences.
Here, we use the GQMLE $\tilde{\theta}_{n,r}$ with $r=2$ and $1$ in the first step estimation, and
we consider the hybrid quantile estimators $\tilde{\theta}_{\tau n,r}$ with $\tau=0.05$ and $0.1$ in the second step estimation. From this table, the estimates of
$\alpha_{0+}$ are always much smaller than those of $\alpha_{0-}$ in magnitude, indicating that there is a strong asymmetric effect for both
sequences. To look for more evidence, we apply the asymmetry tests $\hat{S}_{1,r}$ and  $\hat{S}_{2,r}^{(\tau)}$ to both sequences, and
their corresponding p-values given in Table \ref{estimatedpara} confirm the asymmetric phenomenon. We also consider
the strict stationarity test $\hat{T}_{r}$ for the testing problem (\ref{4.2}) in Table \ref{estimatedpara}, and
its p-values show strong evidence that both time series are strictly stationary.

Next, we calculate the interval forecast of each sequence by the following expanding window procedure: first conduct the estimation using the data from January 2, 2008 to  December 31, 2010 and compute the conditional quantile forecast for the next trading day, i.e., the forecast of $Q_{\tau}(\epsilon_{n+1}|\mathcal{F}_n)$; then, advance the forecasting origin by one to include one more observation in the estimation subsample, and repeat the foregoing procedure until the end of the sample is reached.

\begin{table}[!htb]
\scriptsize
\caption{\label{estimatedpara}
The estimation and testing results for the S\&P 500 and  Dow 30 returns}
 \renewcommand{\arraystretch}{1.1}

\setlength{\tabcolsep}{4mm}{
\begin{tabular}{lccccccccccccccccccccc}


\hline
&\multicolumn{3}{c}{$\delta=2$} & &
\multicolumn{3}{c}{$\delta=1$}\\
\cline{2-4}\cline{6-8}

& \multicolumn{1}{c}{$r=2$} &
& \multicolumn{1}{c}{$r=1$} &
& \multicolumn{1}{c}{$r=2$} &
& \multicolumn{1}{c}{$r=1$} \\
\hline
\multicolumn{8}{c}{Panel A: S\&P 500}\\
$\omega$&4e-6 (9e-7)&& 2e-6 (4e-7)&& 7e-4 (1e-4)&&3e-4 (5e-5)\\
$\alpha_{+}$&1e-7 (0.021)&&4e-6 (0.011) &&7e-6 (0.035)&&1e-4 (0.017)\\
$\alpha_{-}$&0.261 (0.036)&&0.156 (0.018) &&0.302 (0.043) &&0.205 (0.019)\\
$\beta$&0.848 (0.025)&& 0.850 (0.018)&&0.835 (0.031) &&0.862 (0.018) \\\\

$\omega_{\tau_1}$&-1e-5 (2e-5)&& -1e-5 (2e-5)&& -1e-3 (1e-3)&&-9e-4 (1e-3)\\
$\alpha_{\tau_1+}$&-4e-7 (0.214)&&-2e-5 (0.182) &&-1e-5 (0.111) &&-4e-4 (0.113) \\
$\alpha_{\tau_1-}$&-0.812 (0.357)&&-0.872 (0.308) &&-0.517 (0.111) &&-0.476 (0.113)\\
$\beta_{\tau_1}$&-2.641 (0.004)&& -4.689 (0.003)&&-1.428 (0.172) &&-2.002 (0.230)\\\\

$\omega_{\tau_2}$&-6e-6 (7e-6)&& -5e-6 (8e-6)&& -8e-4 (9e-4)&& 0.001 (9e-4)\\
$\alpha_{\tau_2+}$&-2e-7 (0.089)&&-1e-5 (0.098) &&-9e-6 (0.086) &&-0.002 (0.082) \\
$\alpha_{\tau_2-}$&-0.431 (0.143)&&-0.456 (0.160)&&-0.388 (0.093) &&-0.175 (0.088)\\
$\beta_{\tau_2}$&-1.403 (0.002)&& -2.454 (0.002)&&-1.072 (0.130) &&-1.486 (0.154) \\\\

$\hat{T}_{r}$&1e-21&& 8e-14 && 7e-83 && 3e-51\\
$\hat{S}_{1,r}$&1e-13&& 1e-10 && 6e-15 && 7e-13\\
$\hat{S}_{2,r}^{(\tau_1)}$&0.023&& 0.006 && 5e-6 && 4e-5 \\
$\hat{S}_{2,r}^{(\tau_2)}$&0.004&& 0.006 && 2e-5 && 0.030 \\\\

\multicolumn{8}{c}{Panel B: Dow 30}\\
$\omega$&3e-6 (7e-7)&& 2e-6 (3e-7)&& 6e-4 (1e-4)&&3e-4 (5e-5)\\
$\alpha_{+}$&4e-10 (0.019)&&1e-8 (0.010)&&2e-5 (0.029)&&1e-5 (0.016)\\
$\alpha_{-}$&0.258 (0.035)&&0.160 (0.018)&&0.203 (0.037) &&0.205(0.019)\\
$\beta$&0.852 (0.021)&& 0.852 (0.018)&&0.839 (0.027)&&0.863 (0.017) \\
\\

$\omega_{\tau_1}$&-1e-5 (9e-6)&& -8e-6 (9e-6)&& -1e-3 (0.001)&&-9e-4 (1e-3)\\
$\alpha_{\tau_1+}$&-1e-9 (0.156)&&-5e-8 (0.158) &&-4e-5 (0.114)&&-2e-4 (0.122)\\
$\alpha_{\tau_1-}$&-0.784 (0.232)&&-0.862 (0.218) &&-0.501 (0.114)&&-0.474 (0.119)\\
$\beta_{\tau_1}$&-2.590 (0.002)&& -4.599 (0.002)&&-1.447 (0.172)&&-2.015 (0.230)\\
\\

$\omega_{\tau_2}$&-5e-6 (6e-6)&& -4e-6 (6e-6)&& -8e-4 (9e-4)&&-9e-4 (8e-4)\\
$\alpha_{\tau_2+}$&-7e-10 (0.095)&&-3e-8 (0.099) &&-3e-5 (0.090)&&-5e-3 (0.088)\\
$\alpha_{\tau_2-}$&-0.427 (0.154)&&-0.462 (0.166) &&-0.377 (0.098) &&-0.141 (0.095)\\
$\beta_{\tau_2}$&-1.411 (0.002)&& -2.465 (0.002)&&-1.087 (0.133) &&-1.504 (0.159)\\\\

$\hat{T}_{r}$&5e-20&& 8e-14 && 1e-83 && 2e-51\\
$\hat{S}_{1,r}$&1e-15&& 2e-10 && 5e-15 && 7e-13\\
$\hat{S}_{2,r}^{(\tau_1)}$&0.002&& 5e-4 && 7e-6 && 8e-5 \\
$\hat{S}_{2,r}^{(\tau_2)}$&0.008&& 0.007 && 1e-4 && 0.087 \\

\hline

\end{tabular}
}
 \begin{tablenotes}
   \item {\scriptsize $\dag$ Note that $\tau_1=0.05$ and $\tau_2=0.1$.\\
        $\ddag$ The standard deviations of all estimators are given in parentheses, and the p-values of all tests are given.}
 \end{tablenotes}

\end{table}

Moreover, we evaluate the forecasting performance of the aforementioned interval forecasts by using the following two measures:

(i) the minimum of the p-values of the two VaR backtests, the likelihood ratio test for correct conditional converge (CC) in Christoffersen (1998) and the dynamic quantile (DQ) test\footnote{As in Zheng et al. (2008), the regression matrix contains four lagged hits and the contemporaneous VaR estimate
 for DQ test.} in Engle and Manganelli (2004);

(ii) the empirical coverage error is defined as the proportion of observations that exceed the corresponding VaR forecast minus the corresponding nominal level $\tau$.

 The reason for selecting the smaller of the two p-values is that the CC and DQ tests have different null hypotheses and hence are complementary to each other. Note that a larger p-value of either CC or DQ test gives a stronger evidence of good interval forecasts.

 \begin{sidewaystable}
\caption{\label{forecastresult}Minimum p-values of two VaR backtests and empirical coverage errors for the S\&P 500 and Dow 30 returns at the lower (L) (or upper (U)) 0.01th, 0.025th, and 0.05th conditional quantiles}

\setlength{\tabcolsep}{2.5mm}{
\begin{tabular}{llccccccccccccccccccccc ccccc ccccc}

\hline

&
& \multicolumn{7}{c}{Minimum p-value of VaR backtests} &
& \multicolumn{7}{c}{Empirical coverage error}\\

\cline{3-9}\cline{11-17}

&&\multicolumn{4}{c}{$\delta=2$}&
& \multicolumn{2}{c}{$\delta=1$}&
& \multicolumn{4}{c}{$\delta=2$}&
& \multicolumn{2}{c}{$\delta=1$}\\

\cline{3-6}\cline{8-9}\cline{11-14}\cline{16-17}

& &\multicolumn{3}{c}{$r=2$} & & & &  &&\multicolumn{3}{c}{$r=2$}& && & \\

\cline{3-5} \cline{11-13}

& \multicolumn{1}{c}{$\tau$} &\multicolumn{1}{c}{$\alpha_{0+}=\alpha_{0-}$}&
& $\alpha_{0+}\not=\alpha_{0-}$ & $r=1$ & &$r=2$ & $r=1$ &&\multicolumn{1}{c}{$\alpha_{0+}=\alpha_{0-}$}&&$\alpha_{0+}\not=\alpha_{0-}$& $r=1$&& $r=2$& $r=1$\\

\hline\\

S\&P 500&L1.0&0.000&&0.0000&0.0000&&0.0000&0.0000&&-0.0002&&-0.0069&-0.0069&&-0.0088&-0.0076\\
&L2.5&0.001&&0.0000&0.0000&&0.0000&0.0000&&-0.0048& &-0.0226&-0.0195&&-0.0183&-0.0183\\
&L5.0&0.017&&0.0000&0.0000&&0.0000&0.0000&&-0.0090 &&-0.0427&-0.0378&&-0.0323&-0.0255\\
&U5.0&0.245&&\textbf{0.6996}&0.6304&&0.4846&0.2401&&0.0054&&0.0041&0.0023&&0.0047&\textbf{0.0011}\\
&U2.5&0.356&&0.7142&\textbf{0.7616}&&0.1476&0.2807&&0.0030 &&0.0030&\textbf{0.0011}&&0.0060&0.0048\\
&U1.0&0.275&&\textbf{0.8504}&0.2956&&0.8213&0.6206&&0.0008 &&\textbf{0.0002}&-0.0028&&0.0008&0.0020\\\\

Dow 30&L1.0&0.063&&0.0000&0.0000&&0.000& 0.0000&&-0.0014 &&-0.0076&-0.0027&&-0.0088&-0.0076\\
&L2.5&0.000&&0.0000&0.0000&&0.0000&0.0000&&-0.0054 &&-0.0249&-0.0201&&-0.0213&-0.213\\
&L5.0&0.000&&0.0000&0.0000&&0.000&0.0000&&-0.0072 &&-0.0433&-0.0420&&-0.0329&-0.0286\\
&U5.0&0.273&&0.1678&0.2304&&0.1842&\textbf{0.2798}&&0.0084&&0.0072&0.0035&&0.0060&\textbf{0.0023}\\
&U2.5&0.568&&0.3493&0.3723&&\textbf{0.6350}&0.3723&&0.0011&&0.0024&0.0005&&0.0011&\textbf{0.0001}\\
&U1.0&0.418&&\textbf{0.8256}&\textbf{0.8256}&&0.1296&0.0002&&-0.0028&&\textbf{-0.0004}&\textbf{-0.0004}&&0.0045&0.0020\\

\hline
\end{tabular}
}
 \begin{tablenotes}
     \item {\scriptsize $\dag$ Among the models with p-values $>5\%$, the largest p-value and the smallest empirical coverage error (in absolute value) are in boldface.}
 \end{tablenotes}
\end{sidewaystable}

Based on model (\ref{1.1}) with $\delta=2$ and 1, Table \ref{forecastresult} reports the results of two measures at the lower (L) (or upper(U)) 0.01th, 0.025th and 0.05th conditional quantiles.
Here, the GQMLE $\tilde{\theta}_{n,r}$ with $r=2$ and $1$ is used in the first step estimation.
As a comparison, the results for the benchmark method (i.e., $\delta=2, r=2$ and $\alpha_{0+}=\alpha_{0-}$) in
Zheng et al. (2018) are also included in Table \ref{forecastresult}.
It can be seen that all methods have a poor performance for the lower conditional quantiles, while our proposed methods, based on
the asymmetric model (\ref{1.1}) together with the hybrid quantile estimation,
have a significantly better interval forecasting performance for the upper conditional quantiles than the benchmark method in Zheng et al. (2018).
The poor performance of the lower conditional quantiles from our method may be because our GQMLE $\tilde{\theta}_{n,r}$ does not account for the asymmetry of $\eta_t$. We
may expect to improve our forecasting performance particularly for the lower conditional quantiles by using a skewed distribution of $\eta_t$ to form our first estimation, and we leave this desired direction for future study.
In terms of the minimum of the p-values of the two VaR backtests, our proposed methods with $\delta=2$ are better than those with $\delta=1$
in four out of six cases\footnote{Only consider the cases that the minimum of the p-values of two backtests is larger than 5\%},  while the choice of $r$ seems irrelevant to the forecasting performance. In terms of the empirical coverage error,
our proposed methods with $\delta=2$ (or $r=1$) are better than those with $\delta=1$ (or $r=2$) in general.
Overall, our method with $\delta=2$, $r=2$ and $\alpha_{0+}\not=\alpha_{0-}$ has the best interval forecasting performance for
both data.

\subsection{Non-stationary data}

 In this subsection, we re-visit three daily stock return data sequences of Community Bankers Trust (BTC), China MediaExpress (CCME) and Monarch Community Bancorp (MCBF)
  in Francq and Zako\"{i}an (2012, 2013a). These three sequences are shown to be non-stationary in Francq and Zako\"{i}an (2012), while their conditional quantile estimators have not been investigated. Motivated by this, we study their conditional quantiles by our hybrid quantile estimation method. To compute our hybrid quantile estimator, we
  choose the GQMLE $\tilde{\theta}_{n,r}$ with $r=1$ in the first estimation step. Here, we do not consider the GQMLE $\tilde{\theta}_{n,r}$ with $r=2$, since Li et al. (2018) demonstrated
  the innovations of the fitted GARCH($1, 1$) model for each sequence only have a finite second moment but not an infinite fourth moment. In the
  second step of quantile estimation, we consider the hybrid quantile estimators $\hat{\theta}_{\tau n,1}$ at levels $\tau=0.05$ and $0.1$.
  Table \ref{estimatedpara2} reports the results of $\tilde{\theta}_{n,1}$ and $\hat{\theta}_{\tau n,1}$ for each sequence, together with the results of
  $\hat{T}_{1}$ for the testing problem (\ref{4.2}). From the results of $\hat{T}_{1}$, we can reach the same conclusion as in Francq and Zako\"{i}an (2012) that
  all three data are non-stationary, and hence the estimates for the drift term $\omega$ or $\omega_{\tau}$ may not be consistent.
  Meanwhile, Table \ref{estimatedpara2} reports the results of $\hat{S}_{1,1}$, $\hat{S}_{2, 1}^{(0.05)}$ and $\hat{S}_{2,1}^{(0.1)}$
  for the testing problem (\ref{4.6}). It is interesting to observe that the global asymmetry test $\hat{S}_{1,1}$ as the one in Francq and Zako\"{i}an (2013a) indicates that all three datasets do not have the asymmetric effect, while
  the local asymmetry tests  $\hat{S}_{2, 1}^{(0.05)}$ and $\hat{S}_{2,1}^{(0.1)}$ detect some strong asymmetric effects in model (\ref{1.1}) with $\delta=2$ or $1$ for the CCME and MCBF data. Although none of the considered tests can find the asymmetric evidence for the BTC data, we think the examined BTC data still have the asymmetric effect, since our
  forecasting comparison below indicates that the asymmetric PGARCH model can perform better than its symmetric counterpart.

  \begin{table}[!ht]\scriptsize
\caption{\label{estimatedpara2}
The estimation and testing results for the BTC, CCME and MCBF returns}
\addtolength{\tabcolsep}{-3pt}
 \renewcommand{\arraystretch}{1.1}
\setlength{\tabcolsep}{1.3mm}{
\begin{tabular}{lccccccccccc}


\hline

&\multicolumn{3}{c}{Panel A: BTC} & &
\multicolumn{3}{c}{Panel B: CCME} & &
\multicolumn{3}{c}{Panel C: MCBF}\\

& \multicolumn{1}{c}{$\delta=2$} &
& \multicolumn{1}{c}{$\delta=1$} &
& \multicolumn{1}{c}{$\delta=2$} &
& \multicolumn{1}{c}{$\delta=1$} &
& \multicolumn{1}{c}{$\delta=2$} &
& \multicolumn{1}{c}{$\delta=1$} \\

\cline{2-4}\cline{6-8}\cline{10-12}

$\omega$&8e-7 (7e-8)&& 1e-4 (1e-4) &&2e-8 (2e-8) &&1e-4 (2e-5)&&8e-6 (4e-6) &&8e-4 (3e-4)\\
$\alpha_{+}$&0.089 (0.035) &&0.130 (0.040) &&0.107 (0.047) &&0.148 (0.048)&&0.033 (0.016) &&0.078 (0.003)\\
$\alpha_{-}$&0.119 (0.038) &&0.172 (0.041) &&0.125 (0.063) &&0.161 (0.056)&&0.029 (0.014)  &&0.078 (0.028)\\
$\beta$&0.840 (0.031) &&0.854 (0.027) &&0.838 (0.043) &&0.860 (0.033)&&0.931 (0.019) &&0.902 (0.024)\\
\\
$\omega_{\tau_1}$&-5e-7 (1e-6) &&-9e-4 (4e-4) &&-1e-9 (6e-7) &&-1e-7 (3e-4)&&-2e-7 (1e-4) &&-9e-4 (0.005)\\
$\alpha_{\tau_1+}$&-0.448 (0.229) &&-0.320 (0.211) &&-0.639 (0.421) &&-0.498 (0.295)&&-1.515 (0.221) &&-0.479 (0.346)\\
$\alpha_{\tau_1-}$&-0.661 (0.215) &&-0.423 (0.182) &&-1.879 (0.504) &&-0.846 (0.298)&& -1e-4 (0.086) &&-0.009 (0.144)\\
$\beta_{\tau_1}$&-4.660 (1e-4) &&-2.097 (0.060) &&-3.190 (1e-4) &&-1.772 (0.032)&& -4.625 (0.009) &&-2.037 (0.263)\\
\\
$\omega_{\tau_2}$&-8e-8 (7e-7) &&-2e-7 (3e-4) &&-4e-13 (2e-7) &&-3e-8 (2e-4)&&-1e-8 (9e-5) &&-1e-4 (0.003)\\
$\alpha_{\tau_2+}$&-0.348 (0.211) &&-0.153 (0.140) &&-0.364 (0.268) &&-0.404 (0.191)&& -0.596 (0.229) &&-0.314 (0.173)\\
$\alpha_{\tau_2-}$&-0.198 (0.178) &&-0.113 (0.106) &&-0.741 (0.267) &&-0.792 (0.182)&&-2e-5 (0.088) &&0.005 (0.095)\\
$\beta_{\tau_2}$&-2.232 (1e-4) &&-1.522 (0.004) &&-1.450 (1e-4) &&-0.948 (0.021)&& -2.438 (0.008) &&-1.534 (0.151)\\\\

$\hat{T}_{r}$&0.397 &&0.966 &&0.145 &&0.577 &&0.894 &&0.143\\

$\hat{S}_{1,r}$&0.222 &&0.208 &&0.409 &&0.424 &&0.429 &&0.499\\

$\hat{S}_{2,r}^{(\tau_1)}$&0.257 &&0.359 &&0.032 &&0.210 &&1e-4 &&0.063\\

$\hat{S}_{2,r}^{(\tau_2)}$&0.298 &&0.411 &&0.164 &&0.076 &&0.008 &&0.035\\

\hline
\end{tabular}
}
 \begin{tablenotes}
   \item {\scriptsize $\dag$ Note that $r=1$, $\tau_1=0.05$ and $\tau_2=0.1$.\\
        $\ddag$ The standard deviations of all estimators are given in parentheses, and the p-values of all tests are given.}
 \end{tablenotes}
\end{table}

\begin{table}[!ht]
\caption{\label{forecastresult2}Minimum p-values of two VaR backtests and empirical coverage errors for the BTC, CCME and MCBF returns at the lower (L) (or upper (U)) 0.01th, 0.025th, and 0.05th conditional quantiles}
\small\addtolength{\tabcolsep}{-2pt}
\begin{tabular}{lcccccccccccc}
\hline
&
& \multicolumn{5}{c}{Minimum p-value of VaR backtests} &
& \multicolumn{5}{c}{Empirical coverage error}\\

\cline{3-7}\cline{9-13}

& &\multicolumn{3}{c}{$\delta=2$}&&  &&\multicolumn{3}{c}{$\delta=2$}&& \\

\cline{3-5}\cline{9-11}

& \multicolumn{1}{c}{$\tau$}&\multicolumn{1}{c}{$\alpha_{0+}=\alpha_{0-}$}&& $\alpha_{0+}\not=\alpha_{0-}$ && $\delta=1$ &&\multicolumn{1}{c}{$\alpha_{0+}=\alpha_{0-}$}&&$\alpha_{0+}\not=\alpha_{0-}$&& $\delta=1$\\
\hline\\

BTC&L1.0&0.0025 &&0.3999 &&\textbf{0.9329}&& -0.0100 &&-0.0056 &&\textbf{-0.0012} \\
&L2.5&0.0000 &&0.0025&&\textbf{0.3335}&& -0.0250 &&-0.0206&&\textbf{-0.0096}\\
&L5.0&0.0000 &&0.0000&&0.0031&& -0.0302 &&-0.0478 &&-0.0302\\
&U5.0& 0.0299&&\textbf{0.1265}&&0.0182&& 0.0500 &&\textbf{0.0170}&&0.0192\\
&U2.5& 0.0003 &&\textbf{0.6372}&&0.0877&& 0.0228 &&0.0052&&\textbf{0.0008}\\
&U1.0& 0.1296 &&\textbf{0.8130}&&0.2569&& 0.0038 &&\textbf{-0.0010}&&-0.0032\\\\

CCME&L1.0& 0.0301&&\textbf{0.9574}&&\textbf{0.9574}&&-0.0100 &&\textbf{-0.0015}&&\textbf{-0.0015} \\
&L2.5& 0.0006 &&0.0001&&0.0000&& -0.0250 &&-0.0165&&-0.0165\\
&L5.0&0.0000 &&0.0000&&0.0000&&-0.0500 &&-0.0415&&-0.0415\\
&U5.0&0.0002 &&0.0000&&0.0080&& 0.0457 &&0.0457&&0.0372\\
&U2.5&0.0433 &&0.0006&&0.0006&&0.0207 &&0.0250&&0.0250\\
&U1.0&\textbf{0.6077} &&\textbf{0.6077}&&0.0301&&\textbf{0.0057} &&\textbf{0.0057}&&0.0100\\\\

MCBF&L1.0& 0.0031&&0.0031&&0.0031&&-0.0100 &&-0.0100&&-0.0100 \\
&L2.5&0.0038 &&0.2131&&\textbf{0.4400}&&-0.0204 &&-0.0112&&\textbf{0.0050}\\
&L5.0&0.0023 &&0.1220&&\textbf{0.1682}&&-0.0316 &&-0.0177&&\textbf{-0.0131}\\
&U5.0&0.0067&&0.0023&&0.0001&&0.0200 &&0.0316&&-0.0030\\
&U2.5&0.0005 &&\textbf{0.7622}&&0.0001&&0.0227 &&\textbf{0.0020}&&-0.0188\\
&U1.0&0.0031 &&0.0000&&\textbf{0.7747}&&0.0100 &&\textbf{0.0008}&&0.0031\\
\hline
\end{tabular}
 \begin{tablenotes}
     \item {\scriptsize $\dag$ Among the models with p-values $>5\%$, the largest p-value and the smallest empirical coverage error (in absolute value) are in boldface.}
 \end{tablenotes}
\end{table}

  Next, we compute the interval forecasts for each sequence by using the same procedure as in Subsection 6.1, except that
  the first interval forecast is calculated based on the first half of sample. Again,
  we follow the measurements as in Subsection 6.1 to evaluate the interval forecasting performance of our methods, based on
  model (\ref{1.1}) with the hybrid quantile estimators. Table \ref{forecastresult2} reports the corresponding results
  for all three datasets. As a comparison, the forecasting performance of
  the benchmark GARCH($1, 1$) model (i.e., $\delta=2$ and $\alpha_{0+}=\alpha_{0-}$) estimated by the Laplacian QMLE $\tilde{\theta}_{n,1}$ is also given in Table \ref{forecastresult2}. It can be seen that, in terms of minimum p-values of two VaR backtests, model (\ref{1.1}) with $\delta=1$ (or $\delta=2$ and $\alpha_{0+}\not=\alpha_{0-}$)
  can provide us with a good interval forecast in 6 cases,  while the benchmark GARCH($1, 1$) model can only do this in one case.
  Similar conclusions can be obtained in terms of empirical coverage error.
  Particularly, our forecasting results indicate that the BTC data have the asymmetric effect, which, however, has not been detected by our considered tests in Table
  \ref{estimatedpara2}. Note that there are 7 cases (most of them are for the CCME data) in which none of the
  methods can deliver a satisfactory interval forecast, and these cases may require some new methods for their interval forecast.

\section{Conclusion}

In this paper, the hybrid quantile estimators are proposed for the asymmetric PGARCH models via the transformation
$T(x)=|x|^{\delta}\mbox{sgn}(x)$. Asymptotic normality for the quantile estimators is established under both stationarity and
non-stationarity. As a result, tests for strict stationarity and asymmetry are obtained. It is hoped these results will add to the tool kits of time series analysis.

\section*{Acknowledgement}
The authors greatly appreciate the very helpful comments and suggestions of two anonymous reviewers and the editor.
The first author's work is supported by the Fundamental Research Funds for the Center University (12619624).
The second author's work is supported by RGC of Hong Kong (Nos. 17306818 and 17305619), NSFC (Nos. 11571348, 11690014, 11731015 and 71532013), Seed Fund
for Basic Research (No. 201811159049), and the Fundamental Research Funds for the Central University (19JNYH08).
The third author's work is supported by RGC of Hong Kong (Nos. 17304417 and 17304617).
The fourth author's work is supported by RGC of Hong Kong (No. 17304417).

\appendix

\section*{Appendix: Proofs}


To facilitate our proofs, we first introduce some notations.
Let $\Theta_0=\{\theta\in \Theta: \beta<e^{\gamma_0}\}$ and $\Theta_p=\{\theta\in \mathcal{R}_{+}^4: \beta<\|1/a_0(\eta_t)\|_p^{-1}\}$.
Define four  $[0,\infty]$-valued processes
 \begin{flalign*}
 &v_t(\vartheta)=\sum_{j=1}^{\infty}\frac{\{\alpha_+(\eta_{t-j}^+)^{\delta}+\alpha_-(-\eta_{t-j}^-)^{\delta}\}}{a_0(\eta_{t-j})}\prod_{k=1}^{j-1}\frac{\beta}{a_0(\eta_{t-k})},\\
 &d_t^{\alpha_+}(\vartheta)=\sum_{j=1}^{\infty}\frac{(\eta_{t-j}^+)^{\delta}}{a_0(\eta_{t-j})}\prod_{k=1}^{j-1}\frac{\beta}{a_0(\eta_{t-k})},
 \ \ \ d_t^{\alpha_-}(\vartheta)=\sum_{j=1}^{\infty}\frac{(-\eta_{t-j}^-)^{\delta}}{a_0(\eta_{t-j})}\prod_{k=1}^{j-1}\frac{\beta}{a_0(\eta_{t-k})},\\
 &d_t^{\beta}(\vartheta)=\sum_{j=2}^{\infty}\frac{(j-1)\{\alpha_+(\eta_{t-j}^+)^{\delta}+\alpha_-(-\eta_{t-j}^-)^{\delta}\}}{\beta a_0(\eta_{t-j})}\prod_{k=1}^{j-1}\frac{\beta}{a_0(\eta_{t-k})}
 \end{flalign*}
with the convention $\prod_{k=1}^{j-1}=1$ when $j\leq 1$. As shown in Francq and Zako\"{i}an (2013a),
$v_t(\vartheta)$, $1/v_{t}(\vartheta)$, $d_t^{\alpha_+}(\vartheta)$, $d_t^{\alpha_-}(\vartheta)$ and $d_t^{\beta}(\vartheta)$ have moments of any order.

Second, we give six technical lemmas. Lemmas \ref{lema1}-\ref{lema2} from Francq and Zako\"{i}an (2013a)
show that, after being normalized by $h_t$, the nonstationary process $\sigma_{t}^{\delta}(\theta)$ and its first derivatives can be well
approximated by some stationary processes. Lemma \ref{lema3} gives the asymptotic properties of
the GQMLE $\tilde{\theta}_{n,r}$, and its proof is similar to that of Theorem 3.1 in Francq and Zako\"{i}an (2013a).
Lemma \ref{lema4} proves the consistency of $\tilde{\vartheta}_{n,r}$ for $\gamma_0\geq 0$.
Lemmas \ref{lema5}-\ref{lema6} are used to for the proof of Theorem \ref{thm3.1}.

\begin{lem}\label{lema1}
Suppose that Assumption \ref{asm3.1}(ii) holds.

(i) When $\gamma_0>0$, for any $\theta\in \Theta_0$, the process $v_t(\vartheta)$ is stationary and ergodic. Moreover, for any compact set $\Theta_0^*\subset \Theta_0$,
\begin{flalign*}
\sup_{\theta\in \Theta_0^*}\left|\frac{\sigma_t^{\delta}(\theta)}{h_t}-v_t(\vartheta)\right|\rightarrow 0 \mbox{ a.s. as $t\rightarrow \infty$},
\end{flalign*}
and
\begin{flalign*}
\sup_{\theta\in \Theta_0^*}\left|\frac{h_t}{\sigma_t^{\delta}(\theta)}-\frac{1}{v_t(\vartheta)}\right|\rightarrow 0 \mbox{ a.s. as $t\rightarrow \infty$}.
\end{flalign*}
Finally, for any  $\theta\not\in \Theta_0$, it holds that $\sigma_t^{\delta}(\theta)/h_t\to\infty$ as $t\to\infty$.

(ii) When $\gamma_0=0$, for any $\theta\in \Theta_p$ with $p\geq 1$, the process $v_t(\vartheta)$ is stationary and ergodic. Moreover, for any compact set $\Theta_p^*\subset \Theta_p$,
\begin{flalign*}
\sup_{\theta\in \Theta_p^*}\left|\frac{\sigma_t^{\delta}(\theta)}{h_t}-v_t(\vartheta)\right|\rightarrow 0 \mbox{ in $L^p$ as $t\rightarrow \infty$},
\end{flalign*}
and
\begin{flalign*}
\sup_{\theta\in \Theta_p^*}\left|\frac{h_t}{\sigma_t^{\delta}(\theta)}-\frac{1}{v_t(\vartheta)}\right|\rightarrow 0 \mbox{ in $L^p$ as $t\rightarrow \infty$}.
\end{flalign*}
 \end{lem}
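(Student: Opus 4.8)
The plan is to reduce both recursions to a single random--coefficient affine recursion and then to compare $\sigma_t^\delta(\theta)/h_t$ with its stationary limit $v_t(\vartheta)$. First I would insert $\epsilon_{t-1}=h_{t-1}^{1/\delta}\eta_{t-1}$ into the two defining recursions to obtain $h_t=\omega_0+h_{t-1}a_0(\eta_{t-1})$ and, after dividing by $h_t$ and writing $h_{t-1}/h_t=(\omega_0/h_{t-1}+a_0(\eta_{t-1}))^{-1}$,
\[
\frac{\sigma_t^\delta(\theta)}{h_t}=\frac{\omega}{h_t}+\frac{h_{t-1}}{h_t}\Bigl[\alpha_+(\eta_{t-1}^+)^\delta+\alpha_-(-\eta_{t-1}^-)^\delta+\beta\,\frac{\sigma_{t-1}^\delta(\theta)}{h_{t-1}}\Bigr].
\]
Dropping the negligible quantities $\omega/h_t$ and $\omega_0/h_{t-1}$ yields the limiting recursion $v_t=\{\alpha_+(\eta_{t-1}^+)^\delta+\alpha_-(-\eta_{t-1}^-)^\delta+\beta v_{t-1}\}/a_0(\eta_{t-1})$, whose backward unfolding is precisely the series defining $v_t(\vartheta)$. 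Its convergence --- hence the fact that $v_t(\vartheta)$ is a well-defined, positive, stationary and ergodic measurable functional of $(\eta_{t-1},\eta_{t-2},\dots)$ --- follows from the strong law of large numbers: $j^{-1}\log\prod_{k=1}^{j}\bigl(\beta/a_0(\eta_{t-k})\bigr)\to\log\beta-\gamma_0$, which is strictly negative when $\theta\in\Theta_0$, so the products decay geometrically a.s.\ and the series of non-negative terms converges. When $\gamma_0=0$ and $\theta\in\Theta_p$, I would instead bound the $L^p$-norm of the $j$-th summand by factorising it --- using the independence of $\eta_{t-j}$ from $(\eta_{t-1},\dots,\eta_{t-j+1})$ --- into a quantity of order $(\beta\|1/a_0(\eta_t)\|_p)^{j-1}$, which is summable exactly because $\beta<\|1/a_0(\eta_t)\|_p^{-1}$. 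That $v_t(\vartheta)$ and $1/v_t(\vartheta)$ possess moments of every order is already recorded in the excerpt, and Assumption~\ref{asm3.1}(ii) ensures $a_0(\eta_t)$ is genuinely random so these limits are non-degenerate.

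Next I would quantify the error $D_t:=\sigma_t^\delta(\theta)/h_t-v_t(\vartheta)$. Subtracting the two recursions and using $h_{t-1}/h_t-1/a_0(\eta_{t-1})=-\omega_0/\{h_t a_0(\eta_{t-1})\}$ produces the affine recursion $D_t=R_t+\{\beta/a_0(\eta_{t-1})\}D_{t-1}$ with forcing term $R_t=\omega/h_t-\{\omega_0/(h_t a_0(\eta_{t-1}))\}\bigl[\alpha_+(\eta_{t-1}^+)^\delta+\alpha_-(-\eta_{t-1}^-)^\delta+\beta\sigma_{t-1}^\delta(\theta)/h_{t-1}\bigr]$, which is of order $1/h_t$ uniformly over $\Theta$ (because $\sigma_{t-1}^\delta(\theta)/h_{t-1}$ is dominated by the truncated series for $v_{t-1}(\vartheta)$). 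Unrolling, $D_t=\sum_{i=0}^{t-1}\bigl(\prod_{k=1}^{i}\beta/a_0(\eta_{t-k})\bigr)R_{t-i}+\bigl(\prod_{k=1}^{t}\beta/a_0(\eta_{t-k})\bigr)D_0$. When $\gamma_0>0$ we have $h_t\ge h_0\prod_{k=1}^{t}a_0(\eta_{t-k})\to\infty$ a.s., so $R_t\to0$ a.s.; splitting the sum at a fixed cutoff $N$, the part with $i\le t-N$ is at most $\varepsilon\sum_{i}\prod_{k=1}^{i}\beta/a_0(\eta_{t-k})$ --- a.s.\ finite by the root test --- while the finitely many remaining terms and the $D_0$-term vanish by the geometric decay of the products, so $D_t\to0$ a.s. When $\gamma_0=0$ I would run the identical decomposition in $L^p$: $h_t^{-1}$ is bounded by $\omega_0^{-1}$ and $h_t\to\infty$ in probability (the fluctuations of the zero-drift walk $\sum_{k<t}\log a_0(\eta_k)$ make $h_t$ large with probability tending to one), whence $\|R_t\|_p\to0$, and $\|\prod_{k=1}^{i}\beta/a_0(\eta_{t-k})\|_p=(\beta\|1/a_0(\eta_t)\|_p)^{i}$ is geometric. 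The reciprocal statements then follow from $|1/x-1/y|=|x-y|/(xy)$ together with $\sigma_t^\delta(\theta)\ge\omega$ and the positivity of $v_t(\vartheta)$ (from the moment bound on $1/v_t(\vartheta)$). Finally, for $\theta\notin\Theta_0$ the drift $\log\beta-\gamma_0$ is non-negative, so the non-negative partial sums representing $\sigma_t^\delta(\theta)/h_t$ diverge, giving $\sigma_t^\delta(\theta)/h_t\to\infty$.

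To upgrade these pointwise-in-$\theta$ statements to the uniform ones over a compact $\Theta_0^*\subset\Theta_0$ (resp.\ $\Theta_p^*\subset\Theta_p$), I would use that compactness forces $\sup_{\theta\in\Theta_0^*}\beta<e^{\gamma_0}$ (resp.\ $\sup_{\theta\in\Theta_p^*}\beta<\|1/a_0(\eta_t)\|_p^{-1}$), so every geometric bound above holds with a single common rate, and combine this with the continuity of $\theta\mapsto\sigma_t^\delta(\theta)$ and $\theta\mapsto v_t(\vartheta)$ via a standard finite-cover (Dini-type) argument. I expect the boundary case $\gamma_0=0$ to be the genuine obstacle: there $h_t$ does not tend to infinity almost surely --- $\log h_t$ behaves like a zero-drift random walk --- so the a.s.\ route is unavailable and the whole argument must be pushed through in $L^p$, which is exactly where the extra restriction $\beta<\|1/a_0(\eta_t)\|_p^{-1}$ and the independence-based $L^p$-factorisation are needed. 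Relative to the symmetric GARCH treatment in Francq and Zako\"{i}an (2013a), the separate power terms $(\eta_{t-1}^+)^\delta$ and $(-\eta_{t-1}^-)^\delta$ add only bookkeeping.
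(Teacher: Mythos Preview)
The paper does not prove this lemma at all; it is quoted verbatim from Francq and Zako\"{i}an (2013a) and only a one-line attribution is given. Your reconstruction is essentially the argument those authors use: write both $h_t$ and $\sigma_t^\delta(\theta)$ as random-coefficient affine recursions driven by $a_0(\eta_{t-1})$, identify $v_t(\vartheta)$ as the stationary solution of the limiting recursion, and control the difference $D_t$ by unrolling and invoking the geometric decay of $\prod_{k=1}^{i}\beta/a_0(\eta_{t-k})$ (a.s.\ via the SLLN when $\gamma_0>0$, and in $L^p$ via the independence factorisation $\|\prod\cdot\|_p=\prod\|\cdot\|_p$ when $\gamma_0=0$). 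The uniform upgrade through compactness and a common geometric rate is also how it is done there.

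One factual correction that does not damage your proof but is worth noting: when $\gamma_0=0$, $h_t\to\infty$ \emph{does} hold almost surely (this is Proposition~2.1 in Francq and Zako\"{i}an, 2013a; $h_t$ dominates $\omega_0\sum_{i=0}^{t-1}\prod_{k=1}^{i}a_0(\eta_{t-k})$, a sum of non-negative terms that diverges when $\gamma_0\ge 0$). The zero-drift random-walk picture applies to $\sum_k\log a_0(\eta_k)$, not to $h_t$ itself. The reason the $\gamma_0=0$ case is stated in $L^p$ rather than a.s.\ is not that the a.s.\ route breaks down, but that $L^p$ convergence is the mode actually consumed by the downstream arguments (Lemma~\ref{lema3}(iii), Theorem~\ref{thm3.1}(iii)), and the condition $\beta<\|1/a_0(\eta_t)\|_p^{-1}$ is precisely tailored to deliver it. Also, Francq and Zako\"{i}an (2013a) already treats the asymmetric power GARCH model, so no adaptation from a symmetric case is needed.
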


\begin{lem}\label{lema2}
Suppose that Assumption \ref{asm3.1}(ii) holds.

(i) When $\gamma_0>0$, for any $\theta\in \Theta_0$, the processes $d_t^{\alpha_+}(\vartheta)$, $d_t^{\alpha_-}(\vartheta)$, and $d_t^{\beta}(\vartheta)$
are stationary and ergodic. Moreover, for any compact set $\Theta_0^*\subset \Theta_0$,
\begin{flalign*}
\sup_{\theta\in \Theta_0^*}\left\|\frac{1}{h_t}\frac{\partial \sigma_{t}^{\delta}(\theta)}{\partial\vartheta}-d_t(\vartheta)\right\|\rightarrow 0 \mbox{ a.s. as $t\rightarrow \infty$},
\end{flalign*}
where
\begin{flalign}\label{a.1}
d_t(\vartheta)=(d_t^{\alpha_+}(\vartheta),d_t^{\alpha_-}(\vartheta),d_t^{\beta}(\vartheta))'.
\end{flalign}


(ii) When $\gamma_0=0$, for any $\theta\in \Theta_p$ with $p\geq 1$, the processes $d_t^{\alpha_+}(\vartheta)$, $d_t^{\alpha_-}(\vartheta)$, and $d_t^{\beta}(\vartheta)$  are stationary and ergodic. Moreover, for any compact set $\Theta_p^*\subset \Theta_p$,
\begin{flalign*}
\sup_{\theta\in \Theta_p^*}\left\|\frac{1}{h_t}\frac{\partial \sigma_{t}^{\delta}(\theta)}{\partial\vartheta}-d_t(\vartheta)\right\|\rightarrow 0 \mbox{ in}\  L^p\  \mbox{as}\
  t\rightarrow \infty.
\end{flalign*}
\end{lem}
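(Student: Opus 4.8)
The plan is to treat this as the first-derivative analogue of Lemma \ref{lema1} and to adapt the argument of Francq and Zako\"{i}an (2013a) to the asymmetric power-$\delta$ terms. First I would put the recursion in explicit form. Iterating $\sigma_t^\delta(\theta)=\omega+\alpha_+(\epsilon_{t-1}^+)^\delta+\alpha_-(-\epsilon_{t-1}^-)^\delta+\beta\sigma_{t-1}^\delta(\theta)$ and differentiating in $\vartheta=(\alpha_+,\alpha_-,\beta)'$, then using $(\epsilon_{t-j}^+)^\delta=h_{t-j}(\eta_{t-j}^+)^\delta$, $(-\epsilon_{t-j}^-)^\delta=h_{t-j}(-\eta_{t-j}^-)^\delta$ and discarding the derivative-free $\omega$ contribution and the initial-value terms, one obtains
$$\frac{1}{h_t}\frac{\partial\sigma_t^\delta(\theta)}{\partial\alpha_+}=\sum_{j=1}^\infty\beta^{j-1}\,\frac{h_{t-j}}{h_t}\,(\eta_{t-j}^+)^\delta,$$
with an analogous expression for $\alpha_-$ and, carrying an extra factor $j-1$, for $\beta$. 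The crucial structural input is that the true volatility obeys the affine recursion $h_t=\omega_0+a_0(\eta_{t-1})h_{t-1}$; iterating it $j$ steps yields
$$\frac{h_{t-j}}{h_t}=\frac{1}{\prod_{k=1}^j a_0(\eta_{t-k})}-\frac{\omega_0\sum_{i=0}^{j-1}\prod_{k=1}^i a_0(\eta_{t-k})}{h_t\prod_{k=1}^j a_0(\eta_{t-k})},$$
and replacing $h_{t-j}/h_t$ by its first term reproduces exactly $d_t^{\alpha_+}(\vartheta)$, $d_t^{\alpha_-}(\vartheta)$ and $d_t^{\beta}(\vartheta)$.

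Next I would verify that $d_t(\vartheta)$ is a well-defined, stationary and ergodic process. Each component is a measurable functional of the i.i.d.\ sequence $(\eta_{t-1},\eta_{t-2},\dots)$, so stationarity and ergodicity transfer from $\{\eta_t\}$ once the defining series is shown to converge --- a.s.\ in case (i), in $L^p$ in case (ii). For (i), $\gamma_0>0$ and $\sup_{\theta\in\Theta_0^*}\beta<e^{\gamma_0}$; by the strong law $j^{-1}\sum_{k=1}^j\log a_0(\eta_{t-k})\to\gamma_0$ a.s., so $\beta^{j-1}/\prod_{k=1}^j a_0(\eta_{t-k})$ decays geometrically a.s., which dominates the polynomial factor $j-1$ and gives a.s.\ convergence uniformly on the compact $\Theta_0^*$. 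For (ii), $\gamma_0=0$ and $\sup_{\theta\in\Theta_p^*}\beta<\|1/a_0(\eta_t)\|_p^{-1}$, so $\beta^pE|1/a_0(\eta_t)|^p<1$ and, by independence, $E\big|\beta^{j-1}/\prod_{k=1}^j a_0(\eta_{t-k})\big|^p$ is geometrically summable even after multiplication by $(j-1)^p$, yielding $L^p$ convergence; this is the content of Lemma \ref{lema1} transcribed to the derivatives (and the asserted existence of moments of all orders of $d_t^{\alpha_\pm}$, $d_t^\beta$).

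Then I would bound the approximation error. Subtracting, the $\alpha_+$-block of $\frac{1}{h_t}\frac{\partial\sigma_t^\delta(\theta)}{\partial\vartheta}-d_t(\vartheta)$ equals $-\sum_{j\ge1}\beta^{j-1}(\eta_{t-j}^+)^\delta\,\frac{\omega_0\sum_{i=0}^{j-1}\prod_{k=1}^i a_0(\eta_{t-k})}{h_t\prod_{k=1}^j a_0(\eta_{t-k})}$, and similarly for the other two blocks. Since $h_t\ge\omega_0\sum_{i=0}^{j-1}\prod_{k=1}^i a_0(\eta_{t-k})$ for every $j$ (it is one summand in the $j$-step expansion of $h_t$), the extra factor lies in $[0,1]$, so each block is dominated termwise by the corresponding component of $d_t(\vartheta)$, while that factor tends to $0$ because $h_t\to\infty$ a.s.\ whenever $\gamma_0\ge0$. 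In case (i), dominated convergence for the series then gives $\sup_{\theta\in\Theta_0^*}\|\cdot\|\to0$ a.s.; in case (ii) the same domination by $d_t(\vartheta)\in L^p$ together with the a.s.\ convergence (equivalently, a direct $L^p$ estimate using $\beta<\|1/a_0(\eta_t)\|_p^{-1}$) gives $L^p$ convergence, with uniformity over $\Theta_p^*$ inherited from the uniform geometric/summability rates on the compact set.

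I expect the main obstacle to be the critical case $\gamma_0=0$ in part (ii): one must carefully upgrade the a.s.\ convergence of the difference to $L^p$ convergence --- via uniform integrability supplied by the $L^p$-domination, or via a direct $L^p$ bound on the difference series --- and make every decay rate uniform over the compact $\Theta_p^*$, which is exactly where the strict inequality $\beta<\|1/a_0(\eta_t)\|_p^{-1}$ defining $\Theta_p$ is used. By contrast, the passage from the square/absolute-value terms of the classical GARCH recursion in Francq and Zako\"{i}an (2013a) to the asymmetric power terms $(\eta^+)^\delta$ and $(-\eta^-)^\delta$ is routine and leaves the structure of the argument unchanged, so in the paper this lemma can be stated with only a pointer to that reference.
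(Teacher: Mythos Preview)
Your proposal is correct and matches the paper's treatment: the paper does not prove this lemma at all but simply records it as taken from Francq and Zako\"{i}an (2013a), exactly as you anticipate in your last sentence. Your sketch (iterating the recursion, using $h_t=\omega_0+a_0(\eta_{t-1})h_{t-1}$ to split $h_{t-j}/h_t$, identifying the leading term with $d_t(\vartheta)$, and controlling the remainder via the bound $\omega_0\sum_{i=0}^{j-1}\prod_k a_0(\eta_{t-k})\le h_t$ together with the geometric decay from $\beta<e^{\gamma_0}$ or $\beta<\|1/a_0(\eta_t)\|_p^{-1}$) is precisely the Francq--Zako\"{i}an argument transcribed to the asymmetric power terms.
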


\begin{lem}\label{lema3}
Suppose that Assumption \ref{asm3.1} holds and $E|\eta_t|^{2r} <\infty$.

(i) When $\gamma_0<0$,  and $\beta<1$ for all $\theta\in \Theta$, then $\tilde{\theta}_{ n,r}\rightarrow \theta_{0}$ a.s. as $n\rightarrow \infty$,
and
\begin{flalign}\label{a.2}
\sqrt{n}(\tilde{\theta}_{n,r}-\theta_{0})
&=-\frac{\delta J^{-1}}{r\sqrt{n}}\sum_{t=1}^{n}[1-|\eta_t|^r]\frac{1}{h_{t}}\frac{\partial h_{t}(\theta_0)}{\partial\theta}+o_{p}(1)\nonumber\\
&\rightarrow_d N(0,\kappa_{2r}\delta^2 J^{-1}) \mbox{ as }n\to\infty.
\end{flalign}

(ii) When $\gamma_0>0$, and $P(\eta_t=0)=0$, then
$\tilde{\vartheta}_{n,r}\rightarrow \vartheta_{0}$ a.s. as $n\rightarrow \infty$, and
\begin{flalign}\label{a.3}
\sqrt{n}(\tilde{\vartheta}_{ n,r}-\vartheta_{0})
&=-\frac{\delta J_{\vartheta}^{-1}}{r\sqrt{n}}\sum_{t=1}^{n}[1-|\eta_t|^r]\frac{1}{h_{t}}\frac{\partial \sigma_{t}^{\delta}(\theta_0)}{\partial\vartheta}+o_{p}(1)\nonumber\\
&\rightarrow _d N(0,\kappa_{2r}\delta^2 J_{\vartheta}^{-1}) \mbox{ as }n\to\infty.
\end{flalign}

(iii) When $\gamma_0=0$,  $P(\eta_t=0)=0$, and $\beta<\|1/a_0(\eta_t)\|_p^{-1}$ for any $\theta\in \Theta$ and  some $p>1$, then
$\tilde{\vartheta}_{ n,r}\rightarrow \vartheta_{0}$ in probability as $n\rightarrow \infty$, and (\ref{a.3}) holds provided that
Assumption \ref{asm3.3} is satisfied.
\end{lem}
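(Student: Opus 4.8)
The plan is to mimic the proof of Theorem 3.1 in Francq and Zako\"{i}an (2013a), carrying every step for the Gaussian QMLE (the case $r=2$) over to the GQMLE with a general exponent $r$, for which the relevant identification and moment machinery is supplied by Francq and Zako\"{i}an (2013b). The argument splits into consistency and asymptotic normality, and in each part the stationary case $\gamma_0<0$ and the non-stationary cases $\gamma_0\ge0$ are handled separately.

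\emph{Consistency.} When $\gamma_0<0$ and $\beta<1$ throughout $\Theta$, the influence of the initial value $\sigma_0^{\delta}(\theta)$ on $\sigma_t^{\delta}(\theta)$ decays geometrically and uniformly in $\theta$, so $n^{-1}\sum_{t=1}^n l_{t,r}(\theta)$ has the same almost-sure limit as its stationary ergodic analogue built from $\{h_t(\theta)\}$, namely $E[l_{t,r}(\theta)]$. This limit is finite and continuous on $\Theta$ by the moment bounds on $h_t(\theta)/h_t$ together with $E|\eta_t|^{2r}<\infty$, and it is uniquely minimised at $\theta_0$ by the GQMLE identification argument, which uses precisely Assumption \ref{asm3.1}(ii)--(iii) (Francq and Zako\"{i}an, 2013b). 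A standard compactness and uniform-convergence argument then gives $\tilde\theta_{n,r}\to\theta_0$ a.s. When $\gamma_0\ge0$, $\sigma_t^{\delta}(\theta)$ is no longer stationary, and Lemma \ref{lema1} is the engine: on $\Theta_0$ (when $\gamma_0>0$) or $\Theta_p$ (when $\gamma_0=0$) the normalised sequence $\sigma_t^{\delta}(\theta)/h_t$ is uniformly approximated --- almost surely for $\gamma_0>0$ and in $L^p$ for $\gamma_0=0$ --- by the stationary ergodic process $v_t(\vartheta)$, while $\sigma_t^{\delta}(\theta)/h_t\to\infty$ off $\Theta_0$. Subtracting the $\theta$-free term $(r/\delta)\log h_t$ from $l_{t,r}(\theta)$ and using $\epsilon_t/h_t^{1/\delta}=\eta_t$, the recentred criterion converges to a limit that depends on $\theta$ only through $\vartheta$ and is uniquely minimised at $\vartheta_0$ (at which $v_t(\vartheta_0)=1$); thus $\omega$ is asymptotically unidentified and only $\tilde\vartheta_{n,r}$ is consistent. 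Showing that this criterion eventually exceeds its value at $\vartheta_0$ outside a neighbourhood of $\vartheta_0$, in particular on $\Theta\setminus\Theta_0$ where $v_t$ is infinite, is precisely the argument of Francq and Zako\"{i}an (2012, 2013a); for $\gamma_0=0$ the almost-sure statements become convergence in probability, via the $L^p$ version of Lemma \ref{lema1} and a subsequence argument.

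\emph{Asymptotic normality.} Since $\theta_0$ (resp.\ $\vartheta_0$) is interior by Assumption \ref{asm3.1}(i) and $\tilde\theta_{n,r}$ is consistent, the first-order condition holds eventually and a Taylor expansion around $\theta_0$ gives
\begin{flalign*}
\sqrt n\,(\tilde\theta_{n,r}-\theta_0)=-\left[\frac1n\sum_{t=1}^n\frac{\partial^2 l_{t,r}(\theta_n^{*})}{\partial\theta\,\partial\theta'}\right]^{-1}\frac{1}{\sqrt n}\sum_{t=1}^n\frac{\partial l_{t,r}(\theta_0)}{\partial\theta}
\end{flalign*}
for some $\theta_n^{*}$ between $\tilde\theta_{n,r}$ and $\theta_0$. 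A direct computation, using $\epsilon_t/\sigma_t(\theta_0)=\eta_t$ and $\sigma_t^r(\theta)=(\sigma_t^{\delta}(\theta))^{r/\delta}$, yields $\partial l_{t,r}(\theta_0)/\partial\theta=(r/\delta)(1-|\eta_t|^r)h_t^{-1}\partial h_t(\theta_0)/\partial\theta$, which is a martingale difference sequence because $E[1-|\eta_t|^r\mid\mathcal F_{t-1}]=1-E|\eta_t|^r=0$ by Assumption \ref{asm3.1}(iii); its conditional second moment involves $E|\eta_t|^{2r}-1=r^2\kappa_{2r}<\infty$, so the martingale central limit theorem applies. Taking expectations in the Hessian --- the term involving the second derivative of $\sigma_t^r$ has mean zero because its coefficient $1-|\eta_t|^r$ is centred --- gives $E[\partial^2 l_{t,r}(\theta_0)/\partial\theta\,\partial\theta']=(r^2/\delta^2)J$, and a uniform law of large numbers over a shrinking neighbourhood of $\theta_0$ shows the sample Hessian at $\theta_n^{*}$ converges to it. Assembling these pieces yields (\ref{a.2}). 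The case $\gamma_0>0$ is identical after replacing $h_t^{-1}\partial h_t(\theta_0)/\partial\theta$ by $h_t^{-1}\partial\sigma_t^{\delta}(\theta_0)/\partial\vartheta$ and its stationary approximation $d_t(\vartheta_0)$ from Lemma \ref{lema2}, and $J$ by $J_\vartheta$, giving (\ref{a.3}); when $\gamma_0=0$ the one extra ingredient is Assumption \ref{asm3.3}, which forces $n^{-1/2}\sum_{t=1}^n h_t^{-1}\to0$ in $L^1$ and thereby kills the $\omega$-coordinate of the score, so that the $\vartheta$-block of the expansion closes.

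\emph{Main obstacle.} The stationary case is essentially routine ergodic-theorem bookkeeping; the real difficulty lies in the non-stationary regime $\gamma_0\ge0$. There one must show that the normalised criterion still separates $\vartheta_0$ from the remainder of $\Theta$, including the region $\Theta\setminus\Theta_0$ on which the limiting process $v_t(\vartheta)$ is not even finite, and one must carefully carry the asymptotically degenerate $\omega$-coordinate through the Taylor expansion so that the $3\times3$ $\vartheta$-block stays non-degenerate and all remainders are $o_p(1)$ despite the lack of stationarity. This is exactly where Lemmas \ref{lema1}--\ref{lema2} and Assumption \ref{asm3.3} are indispensable, and where the proof follows --- line by line, modulo replacing the Gaussian quantity $1-\eta_t^2$ by $1-|\eta_t|^r$ --- the corresponding parts of Francq and Zako\"{i}an (2012, 2013a).
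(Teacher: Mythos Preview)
Your proposal is correct and follows precisely the approach indicated by the paper, which does not spell out a proof but simply states that Lemma~\ref{lema3} ``is similar to that of Theorem 3.1 in Francq and Zako\"{i}an (2013a)''. You have supplied exactly the missing details: replacing the Gaussian score ingredient $1-\eta_t^2$ by $1-|\eta_t|^r$, invoking the identification condition $E|\eta_t|^r=1$ from Assumption~\ref{asm3.1}(iii), and otherwise carrying over the consistency and Taylor-expansion arguments (with Lemmas~\ref{lema1}--\ref{lema2} and Assumption~\ref{asm3.3} for the non-stationary cases) verbatim.
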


\begin{lem}\label{lema4}
Suppose that Assumptions \ref{asm3.1}-\ref{asm3.2} hold and $E|\eta_t|^{2r} <\infty$.


(i) When $\gamma_0>0$, and $P(\eta_t=0)=0$, then
     $\hat{\vartheta}_{\tau n,r}\rightarrow \vartheta_{\tau0}$ in probability as $n\rightarrow \infty$.

(ii)  When $\gamma_0=0$,  $P(\eta_t=0)=0$, and $\beta<\|1/a_0(\eta_t)\|_p^{-1}$ for any $\theta\in \Theta$ and  some $p>1$, then
 $\hat{\vartheta}_{\tau n,r}\rightarrow \vartheta_{\tau0}$ in probability as $n\rightarrow \infty$.
\end{lem}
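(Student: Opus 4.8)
The plan is to establish the consistency of $\hat{\vartheta}_{\tau n,r}$ by passing to the limit in the minimization (\ref{2.3}) after normalising each summand by the (generally diverging) scale $h_t$; under $\gamma_0\ge 0$ the rescaled criterion retains full information on $\vartheta_\tau$ but becomes asymptotically flat in the drift coordinate $\omega_\tau$, which is precisely why only the subvector $\hat{\vartheta}_{\tau n,r}$ is claimed to be consistent. The scheme follows the consistency proofs in Francq and Zako\"{i}an (2012, 2013a) and Zheng et al. (2018), adapted to the quantile objective via Lemmas \ref{lema1}--\ref{lema3}. Using $\rho_\tau(\lambda x)=\lambda\rho_\tau(x)$ for $\lambda>0$ together with the exact identities $y_t=h_tT(\eta_t)$, $(\epsilon_{t-1}^+)^{\delta}=h_{t-1}(\eta_{t-1}^+)^{\delta}$ and $(-\epsilon_{t-1}^-)^{\delta}=h_{t-1}(-\eta_{t-1}^-)^{\delta}$, I would write
\[
l_{t,\rho}(\theta_\tau)=\frac{h_t}{\sigma_t^{\delta}(\tilde\theta_{n,r})}\,\rho_\tau\!\left(T(\eta_t)-\frac{\omega_\tau}{h_t}-\alpha_{\tau+}\frac{h_{t-1}(\eta_{t-1}^+)^{\delta}}{h_t}-\alpha_{\tau-}\frac{h_{t-1}(-\eta_{t-1}^-)^{\delta}}{h_t}-\beta_\tau\frac{\sigma_{t-1}^{\delta}(\tilde\theta_{n,r})}{h_t}\right).
\]
From $h_t=\omega_0+h_{t-1}a_0(\eta_{t-1})$ one gets $h_{t-1}/h_t=\{a_0(\eta_{t-1})+\omega_0/h_{t-1}\}^{-1}\to 1/a_0(\eta_{t-1})$, and by Lemmas \ref{lema1} and \ref{lema3}, $\sigma_{t-1}^{\delta}(\tilde\theta_{n,r})/h_{t-1}\to v_{t-1}(\vartheta_0)$ and $h_t/\sigma_t^{\delta}(\tilde\theta_{n,r})\to 1/v_t(\vartheta_0)$, while the telescoping identity $v_t(\vartheta_0)=1-\lim_{j\to\infty}\prod_{k=1}^{j}\beta_0/a_0(\eta_{t-k})=1$ holds because the product vanishes ($\beta_0<e^{\gamma_0}$ when $\gamma_0>0$, and $\beta_0<\|1/a_0(\eta_t)\|_p^{-1}\le 1$ when $\gamma_0=0$). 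Hence the normalisations collapse to $\theta_\tau'\tilde{z}_t/h_t\to\vartheta_\tau'\xi_t$ and $h_t/\sigma_t^{\delta}(\tilde\theta_{n,r})\to 1$, with $\xi_t$ defined as above.

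I would then work with the recentred objective $R_n(\theta_\tau):=n^{-1}\sum_{t=1}^n\{l_{t,\rho}(\theta_\tau)-l_{t,\rho}(\theta_{\tau0})\}$, which $\hat\theta_{\tau n,r}$ also minimises over $\Theta_\tau$, is convex in $\theta_\tau$, and whose summands are dominated by $\|\theta_\tau-\theta_{\tau0}\|\,\|\tilde{z}_t\|/\sigma_t^{\delta}(\tilde\theta_{n,r})$, a quantity bounded uniformly in $t$ for $n$ large since the compact set $\Theta\subset\mathcal{R}_{+}^{4}$ keeps all coordinates of $\tilde\theta_{n,r}$ bounded below by a positive constant. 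For fixed $\theta_\tau$, the previous step together with $\tilde\vartheta_{n,r}\to\vartheta_0$ gives $l_{t,\rho}(\theta_\tau)-l_{t,\rho}(\theta_{\tau0})-\{\rho_\tau(T(\eta_t)-\vartheta_\tau'\xi_t)-\rho_\tau(T(\eta_t)-b_\tau)\}\to 0$; averaging (the Ces\`aro mean of a null sequence is null pathwise when $\gamma_0>0$; when $\gamma_0=0$ one instead uses an $L^1$ bound built on the $L^p$ statements of Lemmas \ref{lema1}--\ref{lema2}, treating the scale first with $\theta_0$ frozen and then perturbing to $\tilde\theta_{n,r}$) and invoking the ergodic theorem for the stationary ergodic sequence $\{\rho_\tau(T(\eta_t)-\vartheta_\tau'\xi_t)-\rho_\tau(T(\eta_t)-b_\tau)\}$, one obtains
\[
R_n(\theta_\tau)\ \longrightarrow\ L(\vartheta_\tau):=E\big[\rho_\tau\big(T(\eta_t)-\vartheta_\tau'\xi_t\big)-\rho_\tau\big(T(\eta_t)-b_\tau\big)\big]
\]
(a.s. if $\gamma_0>0$, in probability if $\gamma_0=0$). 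Since $L$ is finite and convex on $\mathcal{R}^{4}$ and independent of $\omega_\tau$, the standard convexity argument for $M$-estimators upgrades this to uniform convergence over $\Theta_\tau$.

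Finally, I would identify $L$ and conclude. As $T(\eta_t)$ is independent of $\mathcal{F}_{t-1}\supseteq\sigma(\xi_t)$ and $b_\tau=T(Q_{\tau,\eta})$ is its $\tau$th quantile, conditioning on $\xi_t$ gives, for any $c$, $E[\rho_\tau(T(\eta_t)-c)-\rho_\tau(T(\eta_t)-b_\tau)\mid\xi_t]=\int_{b_\tau}^{c}\{F_{T(\eta_t)}(s)-\tau\}\,ds\ge 0$, with equality iff $c=b_\tau$ by Assumption \ref{asm3.2}; hence $L(\vartheta_\tau)\ge 0$, with equality iff $\vartheta_\tau'\xi_t=b_\tau$ a.s., i.e.\ $(\vartheta_\tau-\vartheta_{\tau0})'\xi_t=0$ a.s., which forces $\vartheta_\tau=\vartheta_{\tau0}$ because $\Omega_\vartheta=E[\xi_t\xi_t']$ is nonsingular under Assumption \ref{asm3.1}(ii) and $P(\eta_t=0)=0$. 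Evaluating $R_n$ at $\theta_{\tau0}$ (where it equals $0$) and using the uniform convergence above yields $0\le L(\hat\vartheta_{\tau n,r})\le\sup_{\Theta_\tau}|R_n-L|\to 0$; since $\hat\vartheta_{\tau n,r}$ lies in the compact projection of $\Theta_\tau$, on which the continuous $L$ vanishes only at $\vartheta_{\tau0}$, we conclude $\hat\vartheta_{\tau n,r}\to\vartheta_{\tau0}$ in probability. Parts (i) and (ii) differ only in the mode of convergence inherited from Lemmas \ref{lema1}--\ref{lema3} and in which domain condition ($\Theta\subset\Theta_0$, automatic when $\gamma_0>0$, versus $\Theta\subset\Theta_p$, guaranteed by $\beta<\|1/a_0(\eta_t)\|_p^{-1}$) makes Lemma \ref{lema1} usable; Assumption \ref{asm3.3} is not needed here.

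The step I expect to be the real obstacle is the approximation underlying the first two paragraphs: controlling, uniformly over $\theta_\tau\in\Theta_\tau$ and with the correct mode of convergence, the genuinely non-stationary, randomly initialised, and \emph{data-dependent} quantities $\sigma_t^{\delta}(\tilde\theta_{n,r})$, $h_{t-1}/h_t$ and $h_t/\sigma_t^{\delta}(\tilde\theta_{n,r})$. This rests on verifying that $\tilde\theta_{n,r}$ eventually enters a compact subset of $\Theta_0$ (resp.\ $\Theta_p$) so that Lemma \ref{lema1} may be evaluated at the random argument $\tilde\theta_{n,r}$ --- which is exactly what the consistency part of Lemma \ref{lema3} supplies --- together with the Ces\`aro control of $n^{-1}\sum_t h_t^{-1}$ (immediate from $h_t\to\infty$ a.s.\ when $\gamma_0>0$, and from $1/h_t\le 1/\omega_0$ with $1/h_t\to 0$ in probability when $\gamma_0=0$) and the identity $v_t(\vartheta_0)=1$ that pins down the limiting scales. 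Once these uniform approximations are in place, the remainder is a routine $M$-estimation argument.
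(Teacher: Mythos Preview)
Your proposal is correct and shares the paper's overall strategy: normalise each summand by $h_t$, use Lemma~\ref{lema1} together with the consistency from Lemma~\ref{lema3} to replace the non-stationary ratios $\sigma_t^{\delta}(\tilde\theta_{n,r})/h_t$ and $\tilde z_t/h_t$ by their stationary limits $v_t(\vartheta_0)=1$ and $(0,\xi_t')'$, and then show that the limiting criterion depends only on $\vartheta_\tau$, with $\vartheta_{\tau0}$ as its unique minimiser (via $\vartheta_{\tau0}'\xi_t\equiv b_\tau$ and nonsingularity of $\Omega_\vartheta$). The execution differs: the paper re-centres at $l_{t,\rho}^{\dag}=\rho_\tau\big(y_t/\sigma_t^{\delta}(\tilde\theta_{n,r})-b_\tau\big)$ rather than at $l_{t,\rho}(\theta_{\tau0})$, then applies Knight's identity $\rho_\tau(x-y)-\rho_\tau(x)=-y\psi_\tau(x)+\int_0^y\{\mbox{I}(x\le s)-\mbox{I}(x\le 0)\}\,ds$ to split $Q_n=-I_{11}+I_{12}$, showing $I_{11}=o_p(1)$ and that $I_{12}\to H(\vartheta_\tau)$ with an explicit quadratic lower bound obtained by Taylor expanding the distribution function. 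Your direct route---Lipschitz check loss plus convexity to upgrade pointwise to uniform convergence, and the integral formula $E[\rho_\tau(T(\eta_t)-c)-\rho_\tau(T(\eta_t)-b_\tau)]=\int_{b_\tau}^c\{F(s)-\tau\}\,ds$ for identification---is a bit more elementary and avoids the $I_{11}/I_{12}$ bookkeeping; the paper's decomposition has the advantage of isolating exactly the pieces reused in the Bahadur expansion of Lemma~\ref{lema5}. One minor correction: $\Theta\subset\Theta_0$ is \emph{not} automatic when $\gamma_0>0$; what is automatic is $\beta_0<e^{\gamma_0}$ (since $a_0(\eta_t)>\beta_0$ with positive probability under $P(\eta_t=0)=0$), so Lemma~\ref{lema3} guarantees only that the random $\tilde\theta_{n,r}$ eventually lies in a compact subset of $\Theta_0$---which is the mechanism you already describe in your final paragraph and is also what the paper relies on.
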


\begin{proof}
We only show the proof of (i), and the proof of (ii) is similar.

First, by (\ref{2.3}), it is straightforward that $(\hat{\omega}_{\tau n,r},\hat{\vartheta}_{\tau n,r}')'=\argmin_{\theta_\tau\in \Theta_{\tau}}Q_n(\theta_{\tau})$, where $Q_n(\theta_{\tau})=\f{1}{n}\sum_{t=1}^n[l_{t,\rho}(\theta_{\tau})-l_{t,\rho}^{\dag}]$ with
$l_{t,\rho}^{\dag}=\rho_{\tau}\big(\f{y_t}{\sigma_t^{\delta}(\tilde{\theta}_{n,r})}
-b_{\tau}\big)$. By using the identity
\begin{flalign*}
\rho_{\tau}(x-y)-\rho_{\tau}(x)=-y\psi_{\tau}(x)+\int_{0}^{y}\left[\mbox{I}(x\leq s)-\mbox{I}(x\leq 0)\right]ds
\end{flalign*}
with $\psi_{\tau}(x)=\tau-\mbox{I}(x<0)$, it follows that
\begin{flalign}\label{a.4}
Q_n(\theta_{\tau})&=-
\f{1}{n}\sum_{t=1}^n\left[\f{\theta_{\tau}'\tilde{z}_t}{\sigma_t^{\delta}(\tilde{\theta}_{n,r})}-b_{\tau}\right]
\psi_{\tau}\left(\f{y_t}{\sigma_t^{\delta}(\tilde{\theta}_{n,r})}
-b_{\tau}\right)\nonumber\\
&\quad\quad+\f{1}{n}\sum_{t=1}^n\int_0^{\f{\theta_{\tau}'\tilde{z}_t}{\sigma_t^{\delta}(\tilde{\theta}_{n,r})}-b_{\tau}}
\mbox{I}\left(\f{y_t}{\sigma_t^{\delta}(\tilde{\theta}_{n,r})}
\leq s+b_{\tau}\right)
-\mbox{I}\left(\f{y_t}{\sigma_t^{\delta}(\tilde{\theta}_{n,r})}
\leq b_{\tau}\right)ds\nonumber\\
&\equiv -I_{11}(\theta_\tau)+I_{12}(\theta_\tau).
\end{flalign}

Next, we consider $I_{11}(\theta_\tau)$. By Proposition 2.1 in Francq and Zako\"{i}an (2013a), $h_t\to\infty$ as $t\to\infty$, and hence
\begin{flalign}\label{a.5}
\left|\f{h_{t-1}}{h_t}-\f{1}{a_0(\eta_{t-1})}\right|=\left|\f{-\omega_0}{a_0(\eta_{t-1})h_t}\right|\to0\mbox{ as }t\to\infty.
\end{flalign}
By Lemma \ref{lema1}(i), it follows that
\begin{flalign}\label{a.6}
\sup_{\theta\in \Theta_0^*}\left|\frac{\sigma_{t-1}^{\delta}(\theta)}{h_t}-\f{v_{t-1}(\vartheta)}{a_0(\eta_{t-1})}\right|\rightarrow 0 \mbox{ a.s. as $t\rightarrow \infty$}.
\end{flalign}
Define $Z_{t}(\theta)=(1, (\epsilon_{t-1}^+)^{\delta},(-\epsilon_{t-1}^-)^{\delta},\sigma^{\delta}_{t-1}(\theta))'$
and $\varsigma_{t}(\vartheta)=
\big(0,\f{(\eta_{t-1}^+)^{\delta}}{a_0(\eta_{t-1})},\frac{(-\eta_{t-1}^-)^{\delta}}{a_0(\eta_{t-1})},\frac{v_{t-1}(\vartheta)}{a_0(\eta_{t-1})}\big)'$.
Since $(\epsilon_{t-1}^+)^{\delta}/h_{t-1}=(\eta_{t-1}^+)^{\delta}$ and
$(-\epsilon_{t-1}^-)^{\delta}/h_{t-1}=(-\eta_{t-1}^-)^{\delta}$, by (\ref{a.5})-(\ref{a.6}) we have
\begin{flalign}\label{a.7}
\sup_{\theta\in \Theta_0^*}\left\|\frac{Z_{t}(\theta)}{h_t}-\varsigma_{t}(\vartheta)\right\|\rightarrow 0 \mbox{ a.s. as $t\rightarrow \infty$}.
\end{flalign}
Note that $\tilde{z}_{t}=Z_{t}(\tilde{\theta}_{n,r})$ and $y_t=T(\eta_t)h_t$. Then, it is not difficult to have
\begin{flalign}\label{a.8}
I_{11}(\theta_\tau)&=\f{1}{n}\sum_{t=1}^n\left[\f{\theta_{\tau}'\tilde{z}_t/h_t}{\sigma_t^{\delta}(\tilde{\theta}_{n,r})/h_t}-b_\tau\right]
\psi_{\tau}\left(\f{T(\eta_t)}{\sigma_t^{\delta}(\tilde{\theta}_{n,r})/h_t}
-b_\tau\right) \nonumber\\
&=\f{1}{n}\sum_{t=1}^n\left[\f{\theta_{\tau}'\varsigma_t(\tilde{\vartheta}_{n,r})}{v_t(\tilde{\vartheta}_{n,r})}-b_\tau\right]
\psi_{\tau}\left(\f{T(\eta_t)}{\sigma_t^{\delta}(\tilde{\theta}_{n,r})/h_t}
-b_\tau\right)+o_{p}(1) \nonumber\\
&=\f{1}{n}\sum_{t=1}^n\left[\f{\theta_{\tau}'\varsigma_t(\vartheta_{0})}{v_t(\vartheta_{0})}-b_\tau\right]
\psi_{\tau}\left(\f{T(\eta_t)}{\sigma_t^{\delta}(\tilde{\theta}_{n,r})/h_t}
-b_\tau\right)+o_{p}(1),
\end{flalign}
where the second equality holds by Lemma \ref{lema1}(i), (\ref{a.7}) and the boundedness of $\psi_{\tau}(\cdot)$, and the last equality holds
by Taylor's expansion, Lemma \ref{lema3}(ii), and the fact that
$$\sup_{\theta\in\Theta_0} \left|\frac{1}{n}\sum_{t=1}^{n}\frac{\partial}{\partial\vartheta}
\left(\frac{\varsigma_t(\vartheta)}{v_t(\vartheta)}\right)\right|=O_{p}(1).$$
Furthermore, by the double expectation, Lemma \ref{lema1}(i), Assumption \ref{asm3.2}, and standard arguments for tightness,
we can prove
\begin{flalign}\label{a.9}
&\sup_{\theta\in\Theta}\left|\f{1}{n}\sum_{t=1}^n\left[\f{\theta_{\tau}'\varsigma_t(\vartheta_{0})}{v_t(\vartheta_{0})}-b_\tau\right]
\left[\psi_{\tau}\left(\f{T(\eta_t)}{\sigma_t^{\delta}(\theta)/h_t}
-b_\tau\right)-\psi_{\tau}\left(\f{T(\eta_t)}{v_{t}(\vartheta)}
-b_\tau\right)\right]\right| \nonumber\\
&\quad=o_{p}(1).
\end{flalign}
Hence, by (\ref{a.8}) and (\ref{a.9}), it follows that
\begin{flalign}\label{a.10}
I_{11}(\theta_\tau)&=\f{1}{n}\sum_{t=1}^n\left[\f{\theta_{\tau}'\varsigma_t(\vartheta_{0})}{v_t(\vartheta_{0})}-b_\tau\right]
\psi_{\tau}\left(\f{T(\eta_t)}{v_t(\tilde{\vartheta}_{n,r})}
-b_\tau\right)+o_{p}(1) \nonumber\\
&=E\left\{\left[\f{\theta_{\tau}'\varsigma_t(\vartheta_{0})}{v_t(\vartheta_{0})}-b_\tau\right]
\psi_{\tau}\left(\f{T(\eta_t)}{v_t(\tilde{\vartheta}_{n,r})}
-b_\tau\right)\right\}+o_{p}(1) \nonumber\\
&=E\left\{\left[\f{\theta_{\tau}'\varsigma_t(\vartheta_0)}{v_t(\vartheta_{0})}-b_\tau\right]
\psi_{\tau}\left(\f{T(\eta_t)}{v_t(\vartheta_{0})}
-b_\tau\right)\right\}+o_{p}(1) \nonumber \\
&=E\left\{\left[\vartheta_{\tau}'\xi_t-b_{\tau}\right]
\psi_{\tau}\left(T(\eta_t)
-b_{\tau}\right)\right\}+o_{p}(1) \nonumber\\
&=o_{p}(1),
\end{flalign}
where the second equality holds by the uniform ergodic theorem, the third equality
holds by the dominated convergence theorem and Lemma \ref{lema3}(ii), the fourth equality holds since $v_{t}(\vartheta_0)=1$ and $\varsigma_{t}(\vartheta_0)=(0,\xi_t)$, and the last equality holds by the double expectation and
the fact that the $\tau$th quantile of $T(\eta_t)$ is $b_{\tau}$.

Third, we consider $I_{12}(\theta_\tau)$. As for (\ref{a.10}), we can show
\begin{flalign}\label{a.11}
I_{12}(\theta_\tau)&=E\left\{\int_0^{\vartheta_{\tau}'\xi_t-b_{\tau}}
\mbox{I}\left(T(\eta_t)
\leq s+b_{\tau}\right)
-\mbox{I}\left(T(\eta_t)
\leq b_{\tau}\right)ds\right\}+o_{p}(1) \nonumber\\
&=E\left\{\int_0^{\vartheta_{\tau}'\xi_t-\vartheta_{\tau0}'\xi_t}
[f(\breve{\vartheta}_{\tau})]s\,ds\right\}+o_{p}(1) \nonumber\\
&\equiv H(\vartheta_{\tau})+o_{p}(1),
\end{flalign}
where $\breve{\vartheta}_{\tau}$ lies between $s+b_{\tau}$ and $b_{\tau}$, and the second equality holds by the double expectation, Taylor's expansion, and the fact that $b_{\tau}=\vartheta_{\tau0}'\xi_t$.

Note that $|\breve{\vartheta}_{\tau}|\leq |b_{\tau}|+|(\vartheta_{\tau}-\vartheta_{\tau 0})'\xi_t|\leq C_0$ for some constant $C_0>0$.
By (\ref{a.4}), (\ref{a.10}) and (\ref{a.11}), we have that $Q_{n}(\theta_{\tau})=H(\vartheta_{\tau})+o_{p}(1)$, where
$$H(\vartheta_{\tau})
\geq(\vartheta_{\tau}-\vartheta_{\tau 0})'E\left\{\frac{[\inf_{|x|\leq C_0}f(x)]}{2}\xi_{t}\xi_{t}'\right\}(\vartheta_{\tau}-\vartheta_{\tau 0}),
$$
and the equality holds if and only if $\vartheta_{\tau}=\vartheta_{\tau 0}$.
Hence, the proof of (i) is completed by standard arguments, invoking the compactness of $\Theta_{\tau}$.
\end{proof}

Write $\tilde{z}_t=(1, \tilde{z}_{t,\vartheta}')'$, where
$\tilde{z}_{t,\vartheta}=((\epsilon_{t-1}^+)^{\delta},(-\epsilon_{t-1}^-)^{\delta},\sigma^{\delta}_{t-1}(\tilde{\theta}_{n,r}))'$.
Define $\bar{z}_t=(1, \bar{z}_{t,\vartheta}')'$,  where
$\bar{z}_{t,\vartheta}=((\epsilon_{t-1}^+)^{\delta},(-\epsilon_{t-1}^-)^{\delta},\sigma^{\delta}_{t-1}(\theta_0))'.$

\begin{lem}\label{lema5}
Suppose that Assumptions \ref{asm3.1}-\ref{asm3.2} hold and $E|\eta_t|^{2r} <\infty$.

(i) If $\gamma_0>0$, and $P(\eta_t=0)=0$, then
\begin{flalign}
&I_{2}=o_{p}(1),\,\,\,\,\,\,I_{3}=-f(b_{\tau})b_{\tau}\Gamma_{\vartheta}[\sqrt{n}(\tilde{\vartheta}_{n,r}-\vartheta_0)]+o_{p}(1), \label{a.12}\\
&\mbox{ and }I_{4}=[-f(b_{\tau})\Omega_{\vartheta}+o_{p}(1)][\sqrt{n}(\hat{\vartheta}_{\tau n,r}-\vartheta_{\tau 0})]+o_{p}(1),  \label{a.13}
\end{flalign}
where
\begin{flalign*}
I_{2}&=\frac{1}{\sqrt{n}}\sum_{t=1}^n\psi_{\tau}\left(y_t-\hat{\theta}_{\tau n,r}'\tilde{z}_t\right)
\left[\f{\tilde{z}_{t,\vartheta}}{\sigma^{\delta}_t(\tilde{\theta}_{n,r})}-
\f{\bar{z}_{t,\vartheta}}{\sigma^{\delta}_t(\theta_{0})}\right],\\
I_{3}&=\frac{1}{\sqrt{n}}\sum_{t=1}^n\left[\psi_{\tau}\left(y_t-\hat{\theta}_{\tau n,r}'\tilde{z}_t\right)-
\psi_{\tau}\left(y_t-\hat{\theta}_{\tau n,r}'\bar{z}_t\right)\right]
\f{\bar{z}_{t,\vartheta}}{\sigma^{\delta}_t(\theta_{0})},\\
I_{4}&=\frac{1}{\sqrt{n}}\sum_{t=1}^n\left[\psi_{\tau}\left(y_t-\hat{\theta}_{\tau n,r}'\bar{z}_t\right)-
\psi_{\tau}\left(y_t-\theta_{\tau0}'\bar{z}_t\right)\right]
\f{\bar{z}_{t,\vartheta}}{\sigma^{\delta}_t(\theta_{0})}.
\end{flalign*}

(ii)  If $\gamma_0=0$,  $P(\eta_t=0)=0$, $\beta<\|1/a_0(\eta_t)\|_p^{-1}$ for any $\theta\in \Theta$ and  some $p>1$, and
Assumption \ref{asm3.3} is satisfied,
then (\ref{a.12})-(\ref{a.13}) hold.
\end{lem}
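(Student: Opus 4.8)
The plan is to establish all three identities by the same scheme and to present the explosive case $\gamma_0>0$ in detail, the boundary case $\gamma_0=0$ being identical after the modifications described at the end. Throughout I would work with the $h_t$-normalized quantities, since by Lemmas \ref{lema1}--\ref{lema2} one has $\bar z_{t,\vartheta}/h_t\to\xi_t$, $\sigma_{t-1}^\delta(\theta_0)/h_t\to 1/a_0(\eta_{t-1})$, $\sigma_t^\delta(\theta_0)/h_t\to v_t(\vartheta_0)=1$ and $h_t^{-1}\partial\sigma_{t-1}^\delta(\theta_0)/\partial\vartheta\to d_{t-1}(\vartheta_0)/a_0(\eta_{t-1})$, all stationary and ergodic with moments of every order. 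The $\omega$-direction is inert: $h_t^{-1}\partial\sigma_{t-1}^\delta(\theta)/\partial\omega\to 0$ and $n^{-1/2}\sum_t h_t^{-1}\to 0$ (trivially when $\gamma_0>0$, and by Assumption \ref{asm3.3} when $\gamma_0=0$), while $\hat\omega_{\tau n,r}-\omega_{\tau 0}=O_p(1)$ by compactness of $\Theta_\tau$, so every term carrying the first coordinate of $\bar z_t$ or a derivative in $\omega$ is $o_p(1)$ --- which is also why only $\hat\vartheta_{\tau n,r}$ appears. The engine is Knight's decomposition of each increment $\psi_\tau(y_t-c_1)-\psi_\tau(y_t-c_2)=\mathrm{I}(y_t-c_2<0)-\mathrm{I}(y_t-c_1<0)$ into its $\mathcal F_{t-1}$-compensator plus a martingale difference; the compensator is computed after replacing the estimators appearing in $c_1,c_2$ by their probability limits (a uniformity step, see below) and linearizing the conditional distribution function $F$ of $y_t/h_t=T(\eta_t)$ at $b_\tau$, which is where $f(b_\tau)$ enters since $F(b_\tau)=\tau$ and $F'(b_\tau)=f(b_\tau)$ by Assumption \ref{asm3.2}.

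For $I_2$: $\tilde z_{t,\vartheta}$ and $\bar z_{t,\vartheta}$ differ only in their last coordinate and $\sqrt n(\tilde\vartheta_{n,r}-\vartheta_0)=O_p(1)$ by Lemma \ref{lema3}, so a mean-value expansion gives $\tilde z_{t,\vartheta}/\sigma_t^\delta(\tilde\theta_{n,r})-\bar z_{t,\vartheta}/\sigma_t^\delta(\theta_0)=g_t'(\tilde\vartheta_{n,r}-\vartheta_0)$ with $g_t$ uniformly integrable by Lemmas \ref{lema1}--\ref{lema2}; hence $I_2=\big[n^{-1}\sum_t\psi_\tau(y_t-\hat\theta_{\tau n,r}'\tilde z_t)\,g_t'\big]\sqrt n(\tilde\vartheta_{n,r}-\vartheta_0)+o_p(1)$. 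I would then replace $\hat\theta_{\tau n,r},\tilde\theta_{n,r}$ by $\theta_{\tau 0},\theta_0$ inside $\psi_\tau$ (stochastic equicontinuity over an $o_p(1)$-neighbourhood, as in the proof of Lemma \ref{lema4}), turning the bracketed average into $n^{-1}\sum_t\psi_\tau(\eta_t-Q_{\tau,\eta})\,g_t'$, an average of martingale differences since $g_t$ is asymptotically $\mathcal F_{t-1}$-measurable and $E\psi_\tau(\eta_t-Q_{\tau,\eta})=0$; this is $o_p(1)$, and multiplying by $O_p(1)$ yields $I_2=o_p(1)$.

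For $I_3$ and $I_4$ I would split each $\psi_\tau$-increment into compensator and martingale parts. Each martingale part has conditional second moment of order $f(b_\tau)\times(\text{argument gap})/h_t=O_p(n^{-1/2})$ uniformly (using $\tilde\vartheta_{n,r}\to\vartheta_0$ and $\hat\vartheta_{\tau n,r}\to\vartheta_{\tau 0}$ from Lemmas \ref{lema3}--\ref{lema4}), hence contributes $o_p(1)$ after a uniform-neighbourhood argument. For the $I_3$ compensator the argument gap is $\hat\beta_{\tau n,r}\{\sigma_{t-1}^\delta(\theta_0)-\sigma_{t-1}^\delta(\tilde\theta_{n,r})\}$, where $\hat\beta_{\tau n,r}$ is the last component of $\hat\vartheta_{\tau n,r}$; a Taylor expansion of $\sigma_{t-1}^\delta(\cdot)$ in $\vartheta$, division by $h_t$, Lemmas \ref{lema1}--\ref{lema2}, $\hat\beta_{\tau n,r}\to_p b_\tau\beta_0$ and the ergodic theorem give $I_3=-f(b_\tau)b_\tau\beta_0\big[n^{-1}\sum_t(\bar z_{t,\vartheta}/\sigma_t^\delta(\theta_0))(d_{t-1}(\vartheta_0)'/a_0(\eta_{t-1}))\big]\sqrt n(\tilde\vartheta_{n,r}-\vartheta_0)+o_p(1)$, and since $E[\beta_0\xi_t d_{t-1}(\vartheta_0)'/a_0(\eta_{t-1})]=\Gamma_\vartheta$ this is the second identity in (\ref{a.12}). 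For the $I_4$ compensator the argument gap is $(\hat\theta_{\tau n,r}-\theta_{\tau 0})'\bar z_t$; its first coordinate contributes $o_p(1)$ as above, and its $\vartheta$-part yields $I_4=-f(b_\tau)\big[n^{-1}\sum_t(\bar z_{t,\vartheta}/\sigma_t^\delta(\theta_0))(\bar z_{t,\vartheta}'/h_t)\big]\sqrt n(\hat\vartheta_{\tau n,r}-\vartheta_{\tau 0})+r_n$, where $n^{-1}\sum_t(\bar z_{t,\vartheta}/\sigma_t^\delta(\theta_0))(\bar z_{t,\vartheta}'/h_t)\to\Omega_\vartheta$ and, because only consistency of $\hat\vartheta_{\tau n,r}$ is known a priori, $r_n=o_p(1)\cdot\sqrt n(\hat\vartheta_{\tau n,r}-\vartheta_{\tau 0})+o_p(1)$, which is absorbed into the matrix coefficient to give (\ref{a.13}).

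The hardest part will be the uniform handling of the non-smooth indicators: for each of $I_2,I_3,I_4$ one must show --- over neighbourhoods of $\theta_{\tau 0}$ and $\theta_0$ shrinking at rate $n^{-1/2}$ where $\sqrt n$-consistency is available and only at rate $o_p(1)$ for $\hat\vartheta_{\tau n,r}$ otherwise --- that the centered increments $\mathrm{I}(\cdot<0)-E[\mathrm{I}(\cdot<0)\mid\mathcal F_{t-1}]$ sum to $o_p(\sqrt n)$ and that the compensators agree with their first-order linearizations up to $o_p(1)$; this is a Pollard-type bracketing / stochastic-equicontinuity argument that, owing to the non-stationarity, must be run with the growing regressors $\bar z_t,\tilde z_t$ replaced throughout by their $h_t$-normalized, $L^1$-bounded, stationary limits from Lemmas \ref{lema1}--\ref{lema2}. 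For $\gamma_0=0$ the identical argument applies on the set $\{\theta:\beta<\|1/a_0(\eta_t)\|_p^{-1}\}$ with $P(\eta_t=0)=0$, where Lemmas \ref{lema1}(ii)--\ref{lema2}(ii) are available, with $L^p$ convergence replacing a.s.\ convergence everywhere and Assumption \ref{asm3.3} supplying $n^{-1/2}\sum_t h_t^{-1}\to 0$ in $L^1$, which is exactly what renders the $\omega$-direction and the effect of the initial values $\sigma_0^\delta(\theta_0),\varepsilon_0$ asymptotically negligible, as in the explosive case.
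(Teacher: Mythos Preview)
Your proposal is correct and follows essentially the same route as the paper: $h_t$-normalization via Lemmas \ref{lema1}--\ref{lema2}, a mean-value expansion for $I_2$ followed by a dominated-convergence/equicontinuity argument to replace the estimators inside $\psi_\tau$, and a compensator--martingale split for $I_3$ and $I_4$ with the Zhu--Ling type ratio bound $\sup_{\|u\|\leq\eta}\|S_n(\omega,u)\|/(1+\sqrt n\|u\|)=o_p(1)$ handling the fact that only consistency of $\hat\vartheta_{\tau n,r}$ is available. The one tactical difference is in the $\omega$-coordinate of $I_4$: the paper exploits monotonicity of the indicator to sandwich $\hat\omega_{\tau n,r}$ between fixed constants $\underline\omega_\tau,\overline\omega_\tau$ and then works at fixed $\omega$, whereas you dispose of it directly via $n^{-1/2}\sum_t h_t^{-1}\to 0$; both are valid, but the monotonicity trick is slightly cleaner since it bypasses any uniformity in $\omega$.
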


\begin{proof}
We only show the proof of (i), and the proof of (ii) is similar.

First, we consider $I_2$. Without loss of generality, we only show that $I_{21}=o_{p}(1)$, where $I_{21}$ is the first entry of $I_2$.
Note that
\begin{flalign}\label{a.14}
I_{21}&=\frac{1}{\sqrt{n}}\sum_{t=1}^n\psi_{\tau}\left(y_t-\hat{\theta}_{\tau n,r}'\tilde{z}_t\right)
(\eta_{t-1}^+)^{\delta}h_{t-1}\left[\f{1}{\sigma^{\delta}_t(\tilde{\theta}_{n,r})}-
\f{1}{\sigma^{\delta}_t(\theta_{0})}\right] \nonumber\\
&=\frac{1}{\sqrt{n}}\sum_{t=1}^n\psi_{\tau}\left(y_t-\hat{\theta}_{\tau n,r}'\tilde{z}_t\right)
\f{(\eta_{t-1}^+)^{\delta}h_{t-1}}{\sigma^{2\delta}_t(\check{\theta}_{n,r})}
\frac{\partial \sigma^{\delta}_t(\check{\theta}_{n,r})}{\partial\vartheta'}(\tilde{\vartheta}_{n,r}-\vartheta_0) \nonumber\\
&\quad+
\frac{1}{\sqrt{n}}\sum_{t=1}^n\psi_{\tau}\left(y_t-\hat{\theta}_{\tau n,r}'\tilde{z}_t\right)
\f{(\eta_{t-1}^+)^{\delta}h_{t-1}}{\sigma^{2\delta}_t(\check{\theta}_{n,r})}
\frac{\partial \sigma^{\delta}_t(\check{\theta}_{n,r})}{\partial\omega}(\tilde{\omega}_{n,r}-\omega_0) \nonumber\\
&\equiv I_{21,1}+I_{21,2}.
\end{flalign}

By a similar argument for Lemma 7.5 in Francq and Zako\"{i}an (2013a), we can show that $I_{21,2}=o_{p}(1)$. For $I_{21,1}$, since
$\sqrt{n}(\tilde{\vartheta}_{n,r}-\vartheta_0)=O_{p}(1)$ by Lemma \ref{lema3}(ii),
we have
\begin{flalign}\label{a.15}
I_{21,1}&=\frac{1}{n}\sum_{t=1}^n\psi_{\tau}\left(y_t-\hat{\theta}_{\tau n,r}'\tilde{z}_t\right)
\f{(\eta_{t-1}^+)^{\delta}[h_{t-1}/h_{t}]}{[\sigma^{\delta}_t(\check{\theta}_{n,r})/h_{t}]^2}
\frac{1}{h_t}\frac{\partial \sigma^{\delta}_t(\check{\theta}_{n,r})}{\partial\vartheta'}[\sqrt{n}(\tilde{\vartheta}_{n,r}-\vartheta_0)] \nonumber\\
&=\frac{1}{n}\sum_{t=1}^n\psi_{\tau}\left(y_t-\hat{\theta}_{\tau n,r}'\tilde{z}_t\right)
\f{(\eta_{t-1}^+)^{\delta}}{a_{0}(\eta_{t-1})}\frac{d_{t}(\vartheta_0)}{
[v_{t}(\vartheta_0)]^2}[\sqrt{n}(\tilde{\vartheta}_{n,r}-\vartheta_0)]+o_{p}(1) \nonumber\\
&\equiv I_{21,1}^{\dag}[\sqrt{n}(\tilde{\vartheta}_{n,r}-\vartheta_0)]+o_{p}(1),
\end{flalign}
where the second equality holds by Lemmas \ref{lema1}(i) and \ref{lema2}(i) and the similar arguments as for (\ref{a.8}) and (\ref{a.10}).

Write
$\psi_{\tau}\left(y_t-\hat{\theta}_{\tau n,r}'\tilde{z}_t\right)=\psi_{\tau}\left(T(\eta_t)-b_{\tau}+c_{\tau,nt}\right)$,
where $c_{\tau,nt}=b_{\tau}-\hat{\theta}_{\tau n,r}'\tilde{z}_t/h_t$.
Since the $\tau$th quantile of $T(\eta_t)$ is $b_{\tau}$, by the ergodic theorem we have
\begin{flalign*}
I_{21,1}^{\dag}&=\frac{1}{n}\sum_{t=1}^n\left[\psi_{\tau}\left(T(\eta_t)-b_{\tau}+c_{\tau,nt}\right)
-\psi_{\tau}\left(T(\eta_t)-b_{\tau}\right)\right]
\f{(\eta_{t-1}^+)^{\delta}}{a_{0}(\eta_{t-1})}\frac{d_{t}(\vartheta_0)}{
[v_{t}(\vartheta_0)]^2}+o_{p}(1) \nonumber\\
&=\frac{1}{n}\sum_{t=1}^n \chi_{t}(c_{\tau,nt})+o_{p}(1),
\end{flalign*}
where
$$\chi_{t}(x)=\left[\psi_{\tau}\left(T(\eta_t)-b_{\tau}+x\right)
-\psi_{\tau}\left(T(\eta_t)-b_{\tau}\right)\right]
\f{(\eta_{t-1}^+)^{\delta}}{a_{0}(\eta_{t-1})}\frac{d_{t}(\vartheta_0)}{
[v_{t}(\vartheta_0)]^2}.$$
By Lemmas \ref{lema1}(i), \ref{lema2}(i), \ref{lema3}(ii) and \ref{lema4}(i), we know that
$c_{\tau,nt}=o_{p}(1)$ for sufficient large $t$. Hence, for any $\varepsilon, \eta>0$, there exits a $t_0(\varepsilon)>0$
such that
\begin{flalign}\label{a.16}
P\left(|c_{\tau,nt}|> \eta\right)<\frac{\varepsilon}{2}
\end{flalign}
for $t\geq t_0$, and
\begin{flalign}\label{a.17}
I_{21,1}^{\dag}=\frac{1}{n}\sum_{t=t_0}^n \chi_{t}(c_{\tau,nt})+o_{p}(1).
\end{flalign}
Note that
$
\sup_{|x|\leq \eta}\left|\frac{1}{n}\sum_{t=t_0}^n\chi_{t}(x)\right|\leq \sup_{|x|\leq \eta}|\chi_{t}(x)|
$
and
$\lim_{\eta\to0}E(\sup_{|x|\leq \eta}|\chi_{t}(x)|)=0$ by the double expectation and dominated convergence theorem. Thus,
by Markov's inequality,  for any $\varepsilon, \varepsilon'>0$, there exists a $\eta_0(\varepsilon)>0$ such that
$P(\sup_{|x|\leq \eta_0}\left|\frac{1}{n}\sum_{t=t_0}^n\chi_{t}(x)\right|>\varepsilon')<\varepsilon/2$. By (\ref{a.16}), it follows that
\begin{flalign*}
P\left(\left|\frac{1}{n}\sum_{t=t_0}^n \chi_{t}(c_{\tau,nt})\right|>\varepsilon'\right)
&\leq
P\left(\left|\frac{1}{n}\sum_{t=t_0}^n \chi_{t}(c_{\tau,nt})\right|>\varepsilon', |c_{\tau,nt}|\leq \eta_0\right)
+P\left(|c_{\tau,nt}|> \eta_0\right) \nonumber\\
&\leq
P\left(\sup_{|x|\leq \eta_0}\left|\frac{1}{n}\sum_{t=t_0}^n\chi_{t}(x)\right|>\varepsilon'\right)+\frac{\varepsilon}{2} \nonumber\\
&\leq \frac{\varepsilon}{2}+\frac{\varepsilon}{2}=\varepsilon,
\end{flalign*}
which implies that $I_{21,1}^{\dag}=o_{p}(1)$ by (\ref{a.17}), $I_{21,1}=o_{p}(1)$ by (\ref{a.15}), and $I_{21}=o_{p}(1)$ by (\ref{a.14}).

Second, by Lemmas \ref{lema1}(i), \ref{lema2}(i), \ref{lema3}(ii) and \ref{lema4}(i), Proposition 2.1 in Francq and Zako\"{i}an (2013a), and a similar argument as for Theorem 2.1 in Zheng et al. (2018), we can prove the result for $I_{3}$.

Third, we consider $I_4$. Let
\begin{flalign*}
\upsilon_t(\omega,u)&=\left[\psi_{\tau}\left(\frac{y_t-u'\bar{z}_{t,\vartheta}}{h_t}-\f{\omega+\vartheta_{\tau 0}'\bar{z}_{t,\vartheta}}{h_t}\right)-\psi_{\tau}\left(\frac{y_t-{\vartheta}_{\tau 0}'\bar{z}_{t,\vartheta}}{h_t}-\f{\omega_{\tau0}}{h_t}\right)\right]\f{\bar{z}_{t,\vartheta}}{\sigma^{\delta}_t(\theta_{0})}\\
&=\left[\mbox{I}\left(T(\eta_t)<\f{\vartheta_{\tau 0}'\bar{z}_{t,\vartheta}+\omega_{\tau 0}}{h_t}\right)-\mbox{I}\left(T(\eta_t)<\frac{u'\bar{z}_{t,\vartheta}}{h_t}+\f{\omega+\vartheta_{\tau 0}'\bar{z}_{t,\vartheta}}{h_t}\right)\right]\f{\bar{z}_{t,\vartheta}}{\sigma^{\delta}_t(\theta_{0})}.
\end{flalign*}
Then, we can see that
$
I_{4}=\f{1}{\sqrt{n}}\sum_{t=1}^n \upsilon_t(\hat{\omega}_{\tau n,r},\hat{u}_{\tau n,r}),
$
where $\hat{u}_{\tau n,r}=\hat{\vartheta}_{\tau n,r}-\vartheta_{\tau 0}$. Since $\mbox{I}(\cdot)$ is an increasing function and
$\underline{\omega}_{\tau}\leq \hat{\omega}_{\tau n}\leq \overline{\omega}_{\tau}$ for some constants
$\underline{\omega}_{\tau}$ and $\overline{\omega}_{\tau}$, we only need to show
\begin{flalign}\label{a.18}
\f{1}{\sqrt{n}}\sum_{t=1}^n \upsilon_t(\omega,\hat{u}_{\tau n,r})=[-f(b_{\tau})\Omega_{\vartheta}+o_{p}(1)](\sqrt{n}\hat{u}_{\tau n,r})+o_{p}(1)
\end{flalign}
for any fixed $\omega$. Rewrite
\begin{flalign}\label{a.19}
\f{1}{\sqrt{n}}\sum_{t=1}^n \upsilon_t(\omega,u)=W_{n}(\omega, u)+S_{n}(\omega, u),
\end{flalign}
where
\begin{flalign*}
W_{n}(\omega,u)&=\f{1}{\sqrt{n}}\sum_{t=1}^nE[\upsilon_t(\omega,u)|\mathcal{F}_{t-1}],\\
S_n(\omega,u)&=\f{1}{\sqrt{n}}\sum_{t=1}^n\left\{\upsilon_t(\omega,u)-E[\upsilon_t(\omega,u)|\mathcal{F}_{t-1}]\right\}.
\end{flalign*}
By Assumptions \ref{asm3.1}-\ref{asm3.2}, Lemmas \ref{lema1}(i) and \ref{lema4}(i), and Proposition 2.1 in Francq and Zako\"{i}an (2013a)  it is not difficult to show that  $W_{n}(\omega,u)=-f(b_{\tau})\Omega_{\vartheta}(\sqrt{n}u)+o_{p}(1)$. Meanwhile,
by a similar argument as for Lemma 2.2 in Zhu and Ling (2011), we can show that for fixed $\omega$ and any $\eta>0$, we have
\begin{flalign*}
\sup_{\|u\|\leq \eta}\f{\|S_n(\omega,u)\|}{1+\sqrt{n}\|u\|}=o_p(1),
\end{flalign*}
which implies that $S_{n}(\omega,\hat{u}_{\tau n,r})=o_{p}(\sqrt{n}\hat{u}_{\tau n,r})+o_{p}(1)$ by Lemma \ref{lema4}(i).
Hence, by (\ref{a.19}) it follows that
$$\f{1}{\sqrt{n}}\sum_{t=1}^n \upsilon_t(\omega,\hat{u}_{\tau n,r})=
-f(b_{\tau})\Omega_{\vartheta}(\sqrt{n}\hat{u}_{\tau n,r})+o_{p}(\sqrt{n}\hat{u}_{\tau n,r})+o_{p}(1),$$
i.e., (\ref{a.18}) holds. This completes all of the proofs.
\end{proof}

\begin{lem}\label{lema6}
Suppose that Assumptions \ref{asm3.1}-\ref{asm3.2} hold and $E|\eta_t|^{2r} <\infty$.

(i) If $\gamma_0>0$, and $P(\eta_t=0)=0$, then
\begin{flalign}\label{a.20}
I_{5}&=\frac{1}{\sqrt{n}}\sum_{t=1}^n\psi_{\tau}\left(\eta_t-Q_{\tau,\eta}\right)
\f{\bar{z}_{t,\vartheta}}{\sigma^{\delta}_t(\theta_{0})}+o_{p}(1)\nonumber\\
&\to_{d} N(0, (\tau-\tau^2)E(\xi_{t}\xi_{t}'))\,\,\mbox{ as }n\to\infty,
\end{flalign}
where
\begin{flalign*}
I_{5}&=
\frac{1}{\sqrt{n}}\sum_{t=1}^n\psi_{\tau}\left(y_t-\theta_{\tau 0}'\bar{z}_t\right)
\f{\bar{z}_{t,\vartheta}}{\sigma^{\delta}_t(\theta_{0})}.
\end{flalign*}

(ii)  If $\gamma_0=0$,  $P(\eta_t=0)=0$, $\beta<\|1/a_0(\eta_t)\|_p^{-1}$ for any $\theta\in \Theta$ and  some $p>1$, and
Assumption \ref{asm3.3} is satisfied, then (\ref{a.20}) holds.
\end{lem}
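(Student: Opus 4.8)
The plan is to carry out three reductions: first replace $\psi_{\tau}(y_t-\theta_{\tau0}'\bar z_t)$ by $\psi_{\tau}(\eta_t-Q_{\tau,\eta})$, then replace the weight $\bar z_{t,\vartheta}/\sigma_t^{\delta}(\theta_0)$ by its stationary limit $\xi_t$, and finally apply a central limit theorem for stationary ergodic martingale differences. For the first reduction, model (\ref{1.1}) gives $y_t=T(\epsilon_t)=h_tT(\eta_t)$, while $\theta_{\tau0}'\bar z_t=b_{\tau}(\omega_0+\alpha_{0+}(\epsilon_{t-1}^+)^{\delta}+\alpha_{0-}(-\epsilon_{t-1}^-)^{\delta}+\beta_0\sigma_{t-1}^{\delta}(\theta_0))=b_{\tau}h_t$ once $\sigma_{t-1}^{\delta}(\theta_0)$ is identified with $h_{t-1}$ (the residual initial-value gap being controlled exactly as in the proof of Lemma \ref{lema5}, via Assumption \ref{asm3.2} and, for $\gamma_0=0$, Assumption \ref{asm3.3}). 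Since $h_t>0$, $T$ is strictly increasing and $b_{\tau}=T(Q_{\tau,\eta})$, the event $\{y_t-\theta_{\tau0}'\bar z_t<0\}$ coincides with $\{\eta_t<Q_{\tau,\eta}\}$, so $\psi_{\tau}(y_t-\theta_{\tau0}'\bar z_t)=\psi_{\tau}(\eta_t-Q_{\tau,\eta})$ up to terms whose weighted average is $o_p(1)$, which establishes the first identity in (\ref{a.20}).

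For the second reduction, write $(\epsilon_{t-1}^{\pm})^{\delta}=h_{t-1}(\eta_{t-1}^{\pm})^{\delta}$ and combine (\ref{a.5})--(\ref{a.7}) with Lemmas \ref{lema1}(i) and \ref{lema2}(i) when $\gamma_0>0$ (resp.\ parts (ii), valid under $\beta<\|1/a_0(\eta_t)\|_p^{-1}$, when $\gamma_0=0$), using $v_t(\vartheta_0)=1$, to obtain $\bar z_{t,\vartheta}/\sigma_t^{\delta}(\theta_0)\to\xi_t$ almost surely (resp.\ in $L^p$), hence in $L^2$ after invoking the uniform $L^2$ domination of $\bar z_{t,\vartheta}/\sigma_t^{\delta}(\theta_0)$ and $\xi_t$ by a random variable of finite second moment (available since $h_t\ge\omega_0$ and $\xi_t$ has finite second moment under the stated conditions). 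Because $\psi_{\tau}(\eta_t-Q_{\tau,\eta})$ is bounded, has zero mean, and is independent of $\mathcal{F}_{t-1}$, the difference $\frac{1}{\sqrt{n}}\sum_{t=1}^{n}\psi_{\tau}(\eta_t-Q_{\tau,\eta})\big(\bar z_{t,\vartheta}/\sigma_t^{\delta}(\theta_0)-\xi_t\big)$ is a sum of martingale differences with second moment $\frac{\tau-\tau^2}{n}\sum_{t=1}^{n}E\|\bar z_{t,\vartheta}/\sigma_t^{\delta}(\theta_0)-\xi_t\|^2\to0$ by that $L^2$ convergence and averaging over $t$; therefore $I_5=\frac{1}{\sqrt{n}}\sum_{t=1}^{n}\psi_{\tau}(\eta_t-Q_{\tau,\eta})\xi_t+o_p(1)$.

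For the last step, set $m_t=\psi_{\tau}(\eta_t-Q_{\tau,\eta})\xi_t$. Then $\{m_t\}$ is stationary and ergodic (a measurable function of $(\eta_t,\eta_{t-1})$), $\xi_t$ is $\mathcal{F}_{t-1}$-measurable, and $E[\psi_{\tau}(\eta_t-Q_{\tau,\eta})\mid\mathcal{F}_{t-1}]=\tau-P(\eta_t<Q_{\tau,\eta})=0$ by the independence of $\eta_t$ from $\mathcal{F}_{t-1}$ and the continuity of the law of $\eta_t$ (Assumption \ref{asm3.2}); hence $\{m_t,\mathcal{F}_t\}$ is a martingale difference sequence with $E\|m_t\|^2=(\tau-\tau^2)E\|\xi_t\|^2<\infty$. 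Since $E[\psi_{\tau}^2(\eta_t-Q_{\tau,\eta})]=\tau-\tau^2$, we get $E[m_tm_t'\mid\mathcal{F}_{t-1}]=(\tau-\tau^2)\xi_t\xi_t'$, so $\frac{1}{n}\sum_{t=1}^{n}E[m_tm_t'\mid\mathcal{F}_{t-1}]\to(\tau-\tau^2)E[\xi_t\xi_t']$ by the ergodic theorem, while the Lindeberg condition holds trivially as the $m_t$ are identically distributed with finite variance. The martingale central limit theorem then yields $\frac{1}{\sqrt{n}}\sum_{t=1}^{n}m_t\to_d N(0,(\tau-\tau^2)E[\xi_t\xi_t'])$, which together with the second reduction gives (\ref{a.20}); part (ii) follows the same lines with parts (ii) of Lemmas \ref{lema1}--\ref{lema2} and Assumption \ref{asm3.3}.

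The step I expect to be the main obstacle is the first reduction: absorbing the initial-value discrepancy between $\sigma_{t-1}^{\delta}(\theta_0)$ and $h_{t-1}$ inside the non-smooth indicator $\psi_{\tau}$, which in the boundary case $\gamma_0=0$ decays only at the slow rate quantified by Assumption \ref{asm3.3} and must be handled through the density bound of Assumption \ref{asm3.2}, exactly as in the treatment of $I_3$ and $I_4$ in Lemma \ref{lema5}.
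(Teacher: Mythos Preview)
Your proposal is correct and follows precisely the route the paper defers to (Lemma~7.4 of Francq and Zako\"{i}an, 2013a): reduce the indicator $\psi_{\tau}(y_t-\theta_{\tau0}'\bar z_t)$ to $\psi_{\tau}(\eta_t-Q_{\tau,\eta})$ via the initial--value gap $\sigma_t^{\delta}(\theta_0)/h_t\to v_t(\vartheta_0)=1$, replace the weight $\bar z_{t,\vartheta}/\sigma_t^{\delta}(\theta_0)$ by its stationary limit $\xi_t$ using Lemmas~\ref{lema1}--\ref{lema2} (the boundedness of both vectors upgrades the $L^p$ convergence to $L^2$), and conclude by the ergodic martingale CLT. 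Your identification of the first reduction as the delicate point in the boundary case $\gamma_0=0$ is accurate and is exactly where Assumption~\ref{asm3.3} enters.
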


\begin{proof}
The proof can be accomplished by following a similar argument as for Lemma 7.4 in Francq and Zako\"{i}an (2013a).
\end{proof}

\textsc{Proof Theorem \ref{thm3.1}}. (i) Following the proofs in Zheng et al. (2018) and Hamadeh and Zako\"{i}an (2011), we can show
\begin{flalign}\label{a.21}
\sqrt{n}(\hat{\theta}_{\tau n,r}-\theta_{\tau 0})&=\frac{\Omega^{-1}}{f(b_{\tau})}
\left[\frac{1}{\sqrt{n}}\sum_{t=1}^n\psi_{\tau}\left(\eta_t-Q_{\tau,\eta}\right)
\f{z_{t}}{h_t(\theta_{0})}\right]\nonumber\\
&\quad-b_{\tau}\Omega^{-1}\Gamma[\sqrt{n}(\tilde{\theta}_{n,r}-\theta_0)]+o_{p}(1),
\end{flalign}
which entails (i) by Lemma \ref{lema3}(i) and standard arguments.

(ii) Following the same argument as for Theorem 2.1 in Francq and Zako\"{i}an (2012), the subgradient
derivative with respect to $\vartheta_{\tau}$ is asymptotically equal to zero at the minimum,
since $\hat{\vartheta}_{\tau n,r}\to_{p} \vartheta_{\tau 0}$ by Lemma \ref{lema4}(i), and $\vartheta_{\tau 0}$ belongs to the interior
of $\Theta_{\tau}$. This implies
\begin{flalign}\label{a.22}
0=\frac{1}{\sqrt{n}}\sum_{t=1}^n\psi_{\tau}\left(y_t-\hat{\theta}_{\tau n,r}'\tilde{z}_t\right)
\f{\tilde{z}_{t,\vartheta}}{\sigma^{\delta}_t(\tilde{\theta}_{n,r})}.
\end{flalign}
Moreover, by Lemmas \ref{lema5}(i) and \ref{lema6}(i), we have
\begin{flalign*}
&\frac{1}{\sqrt{n}}\sum_{t=1}^n\psi_{\tau}\left(y_t-\hat{\theta}_{\tau n,r}'\tilde{z}_t\right)
\f{\tilde{z}_{t,\vartheta}}{\sigma^{\delta}_t(\tilde{\theta}_{n,r})}\nonumber\\
&=I_2+I_3+I_{4}+I_{5} \nonumber\\
&=-f(b_{\tau})b_{\tau}\Gamma_{\vartheta}[\sqrt{n}(\tilde{\vartheta}_{n,r}-\vartheta_0)]+
[-f(b_{\tau})\Omega_{\vartheta}+o_{p}(1)][\sqrt{n}(\hat{\vartheta}_{\tau n,r}-\vartheta_{\tau 0})]\nonumber\\
&\quad+\frac{1}{\sqrt{n}}\sum_{t=1}^n\psi_{\tau}\left(\eta_t-Q_{\tau,\eta}\right)
\f{\bar{z}_{t,\vartheta}}{\sigma^{\delta}_t(\theta_{0})}+o_{p}(1).
\end{flalign*}
By (\ref{a.22}), it follows that
\begin{flalign}\label{a.23}
\sqrt{n}(\hat{\vartheta}_{\tau n,r}-\vartheta_{\tau 0})&=\frac{\Omega_{\vartheta}^{-1}}{f(b_{\tau})}
\left[\frac{1}{\sqrt{n}}\sum_{t=1}^n\psi_{\tau}\left(\eta_t-Q_{\tau,\eta}\right)
\f{\bar{z}_{t,\vartheta}}{\sigma^{\delta}_t(\theta_{0})}\right]\nonumber\\
&\quad-b_{\tau}\Omega_{\vartheta}^{-1}\Gamma_{\vartheta}[\sqrt{n}(\tilde{\vartheta}_{n,r}-\vartheta_0)]+o_{p}(1),
\end{flalign}
which implies (ii) holds by Lemmas \ref{lema3}(ii) and \ref{lema6}(i), and standard arguments.

(iii) Its proof can be accomplished by following a similar argument as for (ii). This completes all of the proofs. $\hfill\square$



\begin{thebibliography}{50}


\bibitem{}
\textsc{Barone-Adesi, G., Bourgoin, F., Giannopoulos, K.,} 1998. Don't look back. {\it Risk} {\bf 11}, 100--103.

\bibitem{}
\textsc{Barone-Adesi, G., Giannopoulos, K.,} 2001. Non parametric var techniques. myths and realities. {\it Economic Notes} {\bf 30}, 167--181.

\bibitem{}
\textsc{Black,  F.,}  1976.  Studies  of  stock  price  volatility  changes.  In
Proceedings from the American Statistical Association, Business and Economic Statistics Section
177--181. Amer. Statist. Assoc., Alexandria, VA.

\bibitem{}
\textsc{Bollerslev, T.}, 1986.  Generalized autoregressive conditional heteroscedasticity.
{\it Journal of Econometrics} {\bf 31}, 307--327.


\bibitem{}
\textsc{Christoffersen, P.,} 1998. Evaluating interval forecasts.
{\it International Economic Review} {\bf 39}, 841--862.

\bibitem{}
\textsc{Engle, R.F.}, 1982. Autoregressive conditional heteroscedasticity with estimates of the
variance of United Kingdom inflation. {\it Econometrica} {\bf 50}, 987--1007.

\bibitem{}
\textsc{Engle, R.F., Manganelli, S.,} 2004. CAViaR: conditional autoregressive value at risk by regression quantiles.
{\it Journal of Business \& Economic Statistics}  {\bf 22}, 367--381.

\bibitem{}
\textsc{Francq, C., Zako\"{i}an, J.M.}, 2010. {\it GARCH Models: Structure, Statistical Inference
and Financial Applications}. Wiley, Chichester.


\bibitem{}
\textsc{Francq, C., Zako\"{i}an, J.M.}, 2012. Strict stationarity testing and estimation of explosive
and stationary generalized autoregressive conditional heteroscedasticity models.
{\it Econometrica} {\bf 80}, 821--861.


\bibitem{}
\textsc{Francq, C., Zako\"{i}an, J.M.}, 2013a. Inference in nonstationary asymmetric GARCH models. {\it Annals of Statistics} {\bf 41},
1970--1998.

\bibitem{}
\textsc{Francq, C., Zako\"{i}an, J.M.}, 2013b. Optimal predictions of powers of conditionally heteroscedastic processes.
{\it Journal of the Royal Statistical Society: Series B} {\bf 75}, 345--367.

\bibitem{}
\textsc{Francq, C., Zako\"{i}an, J.M.}, 2015. Risk-parameter estimation in volatility models. \textit{Journal of Econometrics}
{\bf 184}, 158--173.




\bibitem{}
\textsc{Glosten, L.R., Jaganathan, R.,  Runkle, D.,} 1993. On the relation between the expected values and the volatility of the nominal excess return on stocks. \textit{Journal of Finance} \textbf{48}, 1779--1801.


\bibitem{}
\textsc{Hamadeh, T., Zako\"{i}an, J.M.,} 2011. Asymptotic properties of LS and QML estimators for
a class of nonlinear GARCH processes. {\it Journal of Statistical Planning and Inference}
{\bf 141}, 488--507.


\bibitem{}
\textsc{Higgins, M.L., Bera, A.K.,} 1992. A class of nonlinear ARCH models.
{\it International Economic Review} {\bf 33}, 137--158.


\bibitem{}
\textsc{Hwang, S.Y., Kim,  T.Y.,} 2004. Power transformation and threshold modeling for ARCH
innovations with applications to tests for ARCH structure. {\it Stochastic Processes and their Applications}
\textbf{110}, 295--314.


\bibitem{}
\textsc{H\"{o}rmann, S.},
2008. Augmented GARCH sequences: dependence structure and asymptotics. {\it Bernoulli}
{\bf 14}, 543--561.

\bibitem{}
\textsc{Jensen, S.T., Rahbek, A.,} 2004a. Asymptotic normality of the QMLE estimator of ARCH
in the nonstationary case. {\it Econometrica} {\bf 72}, 641--646.

\bibitem{}
\textsc{Jensen, S.T., Rahbek, A.}, 2004b. Asymptotic inference for nonstationary GARCH.
{\it Econometric Theory} {\bf 20}, 1203--1226.


\bibitem{}
\textsc{Kuester, K., Mittnik, S., Paolella, M.S.,} 2006. Value-at-risk prediction: a comparison of alternative strategies.
{\it Journal of Financial Econometrics} {\bf 4}, 53--89.


\bibitem{}
\textsc{Li, C.W., Li, W.K.,} 1996. On a double-threshold autoregressive heteroscedastic time series
model. {\it Journal of Applied Econometrics} {\bf 11}, 253--274.

\bibitem{}
\textsc{Li, D., Zhang, X., Zhu, K., Ling, S.,}  2018. The ZD-GARCH model: A new way to study heteroscedasticity.
\textit{Journal of Econometrics} {\bf 202}, 1--17.


\bibitem{}
\textsc{Pan, J., Wang, H., Tong, H.,} 2008. Estimation and tests for power-transformed and threshold
GARCH models. \textit{Journal of Econometrics} {\bf 142}, 352--378.

\bibitem{}
\textsc{Silverman, B.W.,} 1986. {\it Density Estimation for Statistics and Data Analysis}. London: Chapman and Hall.

\bibitem{}
\textsc{So, M.K., Chung, R.S.,} 2015. Statistical inference for conditional quantiles in nonlinear time series models.
{\it Journal of Econometrics} {\bf 189}, 457--472.


\bibitem{}
\textsc{Rabemananjara, R., Zako\"{i}an, J.M.,} 1993. Threshold ARCH models and asymmetries in volatility.
{\it Journal of Applied Econometrics} {\bf 8}, 31--49.

\bibitem{}
\textsc{Taylor, S.,} 1986. {\it Modelling Financial Time Series}. New York: Wiley.

\bibitem{}
\textsc{Xiao, Z., Koenker, R.,} 2009. Conditional quantile estimation for generalized autoregressive conditional
heteroscedasticity models. {\it Journal of the American Statistical Association} {\bf 104}, 1696--1712.


\bibitem{}
\textsc{Zheng, Y., Zhu, Q., Li, G., Xiao, Z.,} 2018. Hybrid quantile regression estimation for time series models with conditional heteroscedasticity. {\it Journal of the Royal Statistical Society: Series B} {\bf 80}, 975--993.

\bibitem{}
\textsc{Zhu, K., Li, W.K., Yu, P.L.H.,} 2017. Buffered autoregressive models with conditional heteroscedasticity: An application to exchange rates. {\it Journal of Business \& Economic Statistics}  {\bf 35}, 528--542.

\bibitem{}
\textsc{Zhu, K., Ling, S.,} 2011. Global self-weighted and local quasi-maximum exponential likelihood estimators for ARMA-GARCH/IGARCH models. {\it Annals of Statistics} {\bf 39}, 2131--2163.


\end{thebibliography}
\end{document}